\documentclass[journal,10pt]{IEEEtran}

\usepackage[utf8]{inputenc}
\usepackage[T1]{fontenc}
\usepackage[english,vietnamese]{babel}

\usepackage{ifpdf}
\ifCLASSINFOpdf
  \usepackage[pdftex]{graphicx}
  \graphicspath{{../pdf/}{../jpeg/}}
  \DeclareGraphicsExtensions{.pdf,.jpeg,.png,.bmp}
\else
  \usepackage[dvips]{graphicx}
  \graphicspath{{../jpg/}}
  \DeclareGraphicsExtensions{.jpg}
  \fi
\usepackage[cmex10]{amsmath}
\usepackage{array}
\usepackage{mdwmath}
\usepackage{mdwtab}
\usepackage{mathrsfs}
\usepackage{amssymb}
\usepackage{eqparbox}
\usepackage{epstopdf}
\usepackage{url}
\usepackage{cite}
\usepackage{subfiles}
\usepackage{soul,color}

\usepackage{amsthm}
\usepackage{hyperref}
\usepackage{balance}
\usepackage{cases} 
\newtheorem{theorem}{Theorem}
\newtheorem{lemma}{Lemma}
\newtheorem{proposition}[theorem]{Proposition}

\newtheorem{remark}{Remark}
\newtheorem{assumption}{Assumption}

\makeatletter
\def\ScaleIfNeeded{\ifdim\Gin@nat@width>\linewidth\linewidth\else\Gin@nat@width\fi}

\usepackage{pifont}
\let\oldding\ding
\renewcommand{\ding}[2][1]{\scalebox{#1}{\oldding{#2}}}

\usepackage{xcolor}

\newcommand{\etal}{\emph{et al. }}
\usepackage{graphicx}        

\usepackage{multirow}
\usepackage{subcaption}
\usepackage{algorithm}
\usepackage{algpseudocode}

\usepackage{bbm}

\allowdisplaybreaks 

\hypersetup{
     colorlinks   = true,
     citecolor    = red,
     linkcolor    = red,
     urlcolor     = black
}

\usepackage{makecell}
\usepackage{bm}

\makeatletter 
\pretocmd\@bibitem{\color{black}\csname keycolor#1\endcsname}{}{\fail}
\newcommand\citecolor[1]{\@namedef{keycolor#1}{\color{black}}}
\makeatother
\usepackage{xcolor,cite,etoolbox}
\citecolor{rapp1991effects}
\citecolor{sohrabi2021deep}
\begin{document}
\selectlanguage{english}

\title{{Active STAR-RIS-Aided RSMA IoT Systems: Performance Analysis, Model-Based and Data-Driven Resource Allocation Frameworks}}
\author{Ngo~Hoang~Tu,~\IEEEmembership{Member,~IEEE},
        Vo Nguyen Quoc Bao,~\IEEEmembership{Senior Member,~IEEE},\\
        and Tran Thien Thanh,~\IEEEmembership{Member,~IEEE}
        \vspace{-0.5cm}
    \thanks{\begin{otherlanguage}{vietnamese}
    This work was supported by the University-level Scientific Research Project at Ho Chi Minh City University of Transport with Code KHTĐ2401/HĐ-ĐHGTVT. 
    (\textit{Corresponding author: Tran Thien Thanh}.)
    \end{otherlanguage}}
    \thanks{Ngo Hoang Tu and Vo Nguyen Quoc Bao are with the Faculty of Information Technology, Van Lang School of Technology, Van Lang University, Ho Chi Minh City 70000, Vietnam (e-mail: tu.nh@vlu.edu.vn, bao.vnq@vlu.edu.vn).}
    \thanks{Tran Thien Thanh is with the Institute of Information Technology and Electrical-Electronics Engineering, Ho Chi Minh City University of Transport, Ho Chi Minh City 70000, Vietnam (e-mail: thanh.tran@ut.edu.vn).}
}
\maketitle


\begin{abstract}
Simultaneously transmitting and reflecting reconfigurable intelligent surfaces (STAR-RISs) have emerged as a promising technology for achieving full-space signal coverage with low hardware complexity and energy consumption.
In parallel, rate-splitting multiple access (RSMA) offers a flexible interference management mechanism that enhances spectral efficiency and user fairness.
Integrating STAR-RIS with RSMA provides strong potential for robust and energy-efficient communications; however, the resulting double-fading cascaded channels and the inherent imbalance between near and far users pose reliability and fairness challenges.
This study presents comprehensive information-theoretic and optimization frameworks for active STAR-RIS-assisted RSMA Internet-of-Things systems.
Closed-form expressions for the outage probability (OP) and ergodic capacity (EC) are derived under both the presence and absence of direct links.
Based on these results, asymptotic OP and EC, diversity order, and array gain are analyzed to characterize system behavior.
In addition, throughput-based and spectral-based energy efficiency metrics are investigated to quantify the tradeoffs between transmission performance and power consumption.
To enhance user fairness, a fairness-oriented resource allocation (RA) framework is developed to jointly optimize power allocation and rate-splitting coefficients using successive convex approximation and block coordinate descent techniques.
To further improve scalability and real-time applicability, we introduce a data-driven RA framework based on a deep multi-output neural
network, a convolutional multi-output neural network, and multioutput extreme gradient boosting (MXGB). This framework approximates the optimization-based solutions with substantially reduced computational complexity.
Numerical results show that the proposed MXGB model achieves the shortest execution time while maintaining performance comparable to ground truth-based benchmarks. Extensive Monte Carlo simulations validate the analytical results and demonstrate the effectiveness of the proposed frameworks.
\end{abstract}

\begin{IEEEkeywords}
Simultaneous transmitting and reflecting reconfigurable intelligent surface (STAR-RIS),
rate-splitting multiple access (RSMA),
outage probability,
ergodic capacity,
outage fairness optimization,
capacity fairness optimization,
and machine learning.
\end{IEEEkeywords}


\vspace{-0.5cm}
\section{Introduction}
\label{Sect:Intro}

\subsection{Research Context}
\color{black}
The rapid growth of data-intensive applications and heterogeneous service requirements in next-generation networks has driven the need for Internet of Things (IoT) paradigms that jointly enhance spectral efficiency, reliability, and energy efficiency (EE) \cite{dogra2020survey}.
Reconfigurable intelligent surfaces (RISs) reshape the wireless propagation environment through large arrays of low-cost passive elements and can be deployed across terrestrial and aerial platforms to improve coverage and link quality \cite{dao2021survey}.
However, conventional RISs are limited to reflection, restricting full-space coverage.
Simultaneously transmitting and reflecting RISs (STAR-RISs) overcome this by enabling full-space signal manipulation with enhanced coverage flexibility and spatial degrees of freedom \cite{ahmed2023survey}.
The resulting cascaded links, however, suffer from a double-fading effect that degrades reliability, particularly for IoT users far from the source.
Active STAR-RISs, which amplify the incident signal at each element, mitigate this double-fading effect and provide a cost- and energy-efficient alternative to conventional cooperative relays, thereby circumventing the noise/interference amplification, processing delay, and self-interference-cancellation burdens of amplify-and-forward, decode-and-forward, and full-duplex relays, respectively \cite{katiyar2011cooperative}.

In parallel, rate-splitting multiple access (RSMA) has emerged as a general framework that bridges and outperforms conventional orthogonal and non-orthogonal multiple access \cite{Clerckx0523aprimer}.
By splitting each user's message into a common part decoded by all users and a private part decoded individually \cite{mao2022rate}, and applying successive interference cancellation (SIC) at the receiver, RSMA flexibly manages multi-user interference, supports scheduling-free operation, and enables controlled information sharing that can enhance the confidentiality of private messages \cite{zhang2024enhancing}.
These properties make active STAR-RIS-assisted RSMA a compelling candidate for robust, energy-efficient IoT connectivity, while the double-fading cascaded channels and the near--far user imbalance introduce reliability and fairness challenges that motivate this work.
\color{black}

\subsection{Literature Overview}
\label{Sect:Overview}

\begin{table*}[!t]
\centering
\caption{\textcolor{black}{Comparison of the proposed work with representative state-of-the-art STAR-RIS-assisted RSMA studies.}}
\label{Tab:Comparison}
\color{black}
\footnotesize
\renewcommand{\arraystretch}{1.2}
\setlength{\tabcolsep}{4pt}
\begin{tabular}{|c|c|c|c|c|c|c|c|p{6.5cm}|}
\hline
\multirow{2}{*}{\textbf{Ref.}} & \multirow{2}{*}{\textbf{Year}} & \textbf{Active} & \textbf{Closed-form} & \textbf{Asymptotic} & \textbf{Model-} & \textbf{Fairness-} & \textbf{Data-} & \multirow{2}{*}{\textbf{Main objective / setup}} \\
 & & \textbf{STAR-RIS} & \textbf{OP/EC} & \textbf{DO/EE} & \textbf{based RA} & \textbf{oriented} & \textbf{driven RA} & \\
\hline\hline
\cite{karim2023performance}      & 2023 &            & \checkmark &            &            &            &                        & OP (energy-splitting/mode-switching)          \\ \hline
\cite{dhok2022rate}              & 2022 &            & \checkmark & \checkmark   &            &            &                        & OP/EC (correlated Rician) \\ \hline
\cite{gomes2025performance}      & 2025 &            & \checkmark &  \checkmark          &            &            &                        & OP (correlated, imperfect SIC)  \\ \hline
\cite{mondal2025performance}     & 2025 &            & \checkmark & \checkmark           &            &            &                        & OP/EC (sensing)      \\ \hline
\cite{tu2025active} & 2025 & \checkmark & \checkmark  & \checkmark  & & & & OP/EC (active)
\\ \hline
\cite{singh2025robust}           & 2025 & \checkmark & \checkmark & \checkmark           &            &            &                        & OP/throughput (UAV, active), learning-based performance predictions     \\ \hline
\cite{soleymani2025star}         & 2025 &            &            &            & \checkmark &            &                        & EE (finite-blocklength MU-MIMO)           \\ \hline
\cite{asif2025robust}            & 2025 &            &            &            & \checkmark &            &                        & Sum-rate (hardware impairments)     \\ \hline
\cite{katwe2023improved}         & 2023 &            &            &            & \checkmark &            &                        & Sum-throughput (uplink)         \\ \hline
\cite{liu2025star}               & 2025 &            &            &            & \checkmark &            &                        & Sensing SINR (sensing)            \\ \hline
\cite{maraqa2023optical}         & 2023 &            &            &            & \checkmark &            &                        & Sum-rate (optical/visible light)          \\ \hline
\cite{maghrebi2024deep}          & 2024 & \checkmark &            &            &            &            & \checkmark             & Sum-rate (DRL, cooperative)     \\ \hline
\cite{amiri2025resource}         & 2025 &            &            &            &            &            & \checkmark             & Sum-rate/harvested energy (meta-RL) \\ \hline\hline
\textbf{Ours} & \textbf{Now} & \checkmark & \checkmark & \checkmark & \checkmark & \checkmark & \checkmark &  \textbf{OP/EC fairness (active), low-overhead learning RA} \\ \hline
\end{tabular}
\end{table*}

\color{black}
We review the state of the art along three lines and summarize the comparison in Table~\ref{Tab:Comparison}. Throughout, ``low-overhead learning'' refers to supervised, training-light models that avoid the thousands of training episodes required by deep reinforcement learning (DRL).

\textit{Theoretical analysis}:
Existing information-theoretic studies of STAR-RIS-assisted RSMA IoT systems are largely confined to the passive mode.
Karim \etal \cite{karim2023performance} derived closed-form outage probability (OP) expressions under energy-splitting (ES) and mode-switching protocols and showed that RSMA can outperform NOMA while quantifying the effects of transmit power, STAR-RIS size, and imperfect channel state information (CSI).
Dhok \etal \cite{dhok2022rate} analyzed spatially correlated Rician channels and provided OP and ergodic capacity (EC) in the infinite-blocklength regime together with block-error-rate and goodput results in the finite-blocklength regime.
Gomes \etal \cite{gomes2025performance} characterized the OP under correlated fading and imperfect SIC, revealing notable degradation under realistic SIC errors, while Mondal \etal \cite{mondal2025performance} extended the analysis to integrated sensing and communication (ISAC) with multiple targets, deriving OP/EC and highlighting sensing-communication trade-offs.
For the active mode within this category, closed-form analyses remain limited: the work in \cite{tu2025active} derives OP/EC together with asymptotic analysis for an active STAR-RIS RSMA system, but it is confined to performance analysis without fairness-oriented model-based or data-driven RA, whereas \cite{singh2025robust} considers an unmanned-aerial-vehicle (UAV)-integrated network and provides OP and throughput under interference and hardware impairments, together with learning-based performance prediction, yet without a complete OP/EC, RA, and EE characterization.

\textit{Model-based RA}:
A second line formulates joint resource allocation (RA) to optimize rate-, reliability-, energy-, or sensing-related objectives.
Soleymani \etal \cite{soleymani2025star} employed block coordinate descent (BCD) to jointly design precoders, STAR-RIS configurations, and rate-splitting (RS) coefficients for EE maximization in finite-blocklength multi-user multiple-input multiple-output (MU-MIMO) systems.
Asif \etal \cite{asif2025robust} jointly optimized active precoders, RS coefficients, and passive beamforming for sum-rate maximization under hardware impairments and bounded CSI via successive convex approximation (SCA) and semidefinite relaxation (SDR).
For uplink, Katwe \etal \cite{katwe2023improved} maximized sum-throughput by jointly optimizing power allocation (PA), decoding order, and beamforming using BCD, SCA, and fractional programming, whereas Liu \etal \cite{liu2025star} maximized the sensing signal-to-interference-plus-noise ratio in an ISAC network via Dinkelbach's transform and SDR/majorization-minimization.
Beyond radio frequency, Maraqa \etal \cite{maraqa2023optical} solved sum-rate maximization for an optical STAR-RIS visible-light system using a sine-cosine metaheuristic.
These designs are predominantly sum-rate/energy-centric and do not address user fairness.

\textit{Data-driven RA}:
Learning-based RA remains nascent and relies on DRL.
Maghrebi \etal \cite{maghrebi2024deep} proposed an actor-critic DRL design for a cooperative active STAR-RIS RSMA system in which strong users relay the common stream to weak users, jointly optimizing precoders, STAR-RIS configurations, RS coefficients, and relay power for sum-rate maximization.
Amiri \etal \cite{amiri2025resource} adopted a meta-DRL scheme for a STAR-RIS simultaneous wireless information and power transfer RSMA system to balance sum rate and harvested energy.
Both target sum-rate/energy objectives, and DRL typically requires thousands of training episodes and is sensitive to environment changes, which limits real-time deployment.
\color{black}

\subsection{Novelty and Contributions}
\label{Sect:Novelty}

\textcolor{black}{As summarized in Table~\ref{Tab:Comparison}, existing studies on STAR-RIS-assisted RSMA IoT systems remain incomplete in three respects.
First, information-theoretic frameworks are largely confined to passive STAR-RIS; for the active mode, closed-form analysis is available in \cite{tu2025active} and \cite{singh2025robust}, whereas \cite{maghrebi2024deep} employs it only for DRL-based RA, so no existing active-mode study couples a complete OP/EC, asymptotic, diversity-order/array-gain, and EE characterization with fairness-oriented model-based and low-overhead data-driven RA.
Second, existing model-based RA designs primarily adopt sum-rate/energy-centric objectives and rely on iterative optimization, whereas fairness-oriented formulations are largely overlooked and challenging to run in real time.
Third, learning-based RA mainly employs DRL, which incurs high training overhead and operational instability.}
Motivated by these gaps, this work develops a unified theoretical-optimization-learning framework that (\textit{i}) provides comprehensive information-theoretic analysis, (\textit{ii}) enables fairness-enhancing optimization-based RA, and (\textit{iii}) introduces low-complexity data-driven RA models that approximate optimization solutions with improved scalability and real-time applicability.
\textcolor{black}{We emphasize that the novelty lies not in treating analysis, optimization, or learning in isolation, each of which has appeared separately, but in their \emph{unified integration} for the active STAR-RIS RSMA IoT setting under a fairness objective, which has not been jointly addressed before, as contrasted along the active-mode, analysis-completeness, fairness, and learning-overhead axes in Table~\ref{Tab:Comparison}. The overall interaction among these three frameworks is summarized in the unified workflow of Fig.~\ref{Fig:ML_App}.}
The main contributions of this work are summarized as follows:
\begin{itemize}
    \item \textbf{Comprehensive information-theoretic analysis}: We establish, \textcolor{black}{for the first time, a \emph{complete and unified}} analytical framework for active STAR-RIS-assisted RSMA IoT systems by deriving closed-form expressions for the OP and EC under both with and without direct-link scenarios. We further perform asymptotic OP/EC analysis and characterize the diversity order (DO) and array gain, providing fundamental insights into system behavior across operating regimes. Moreover, to capture the tradeoff between transmission performance and power consumption, we investigate both throughput-based and spectral-based EE metrics, enabling a more comprehensive evaluation of the proposed design under practical energy expenditure.
    \item \textbf{Fairness-oriented model-based RA}: To enhance user fairness, we formulate OP/EC fairness-oriented RA frameworks that jointly optimize PA and RS coefficients under practical system constraints. The resulting non-convex problem is efficiently addressed via BCD and SCA, enabling user fairness-aware operation.
    \item \textbf{Scalable and real-time data-driven RA}: To overcome the real-time and scalability limitations of the model-based RA approaches, we propose a data-driven RA framework based on machine learning (ML) models, including a deep multi-output neural network (DMNN), a convolutional multi-output neural network (CMNN), and multi-output extreme gradient boosting (MXGB), to approximate optimization-based solutions. Numerical results demonstrate that the proposed MXGB model achieves the shortest execution time while maintaining performance comparable to ground truth-based benchmarks, making it a compelling candidate for practical implementations.
    \item \textbf{Extensive simulations and validation}: We provide extensive Monte Carlo simulations to validate the accuracy of the derived analytical expressions and to verify the effectiveness of the proposed fairness-oriented and data-driven RA frameworks under various system settings.
\end{itemize}

\subsection{Organization and Notations}

The remainder of this paper is organized as follows. 
Section~\ref{Sect:SystemModel} presents the network description and signal processing of the considered active STAR-RIS-assisted RSMA IoT systems. 
Section~\ref{Sect:Analytical} develops the proposed information-theoretic analysis framework, including the derivations of OP and EC as well as asymptotic performance characterization. 
Section~\ref{Sect:Model_based_RA} introduces the model-based RA framework for fairness-oriented optimization under practical constraints. 
Section~\ref{Sect:Data-Driven} proposes the data-driven RA framework using ML models to predict RA configurations. 
Section~\ref{Sect:Results} provides numerical results and discussions to validate the analytical expressions and evaluate the effectiveness of the proposed optimization and learning-based designs. 
Finally, Section~\ref{Sect:Conclusion} presents the concluding remarks.

\textit{Mathematical Notations}: \textcolor{black}{$(\cdot)^T$, $|\cdot|$, and $\mathbb{C}^{M \times 1}$ denote the transpose, the magnitude, and the $M$-dimensional complex space, respectively; 
$\mathbb{E}(\cdot)$, ${\rm{Var}}(\cdot)$, and ${\rm{Pr}}(\cdot)$ are the expectation, variance, and probability operators; 
and $F_X(\cdot)$ and $f_X(\cdot)$ are the cumulative distribution function (CDF) and probability density function (PDF) of a random variable $X$. 
${\cal N}(\mu,\sigma^2)$ and ${\rm{Gamma}}(\xi,\Omega)$ denote the Gaussian (mean $\mu$, variance $\sigma^2$) and Gamma (shape $\xi$, scale $\Omega$) distributions, and ${\cal GG}(\alpha,d,p)$ the generalized Gamma distribution with shape parameters $\alpha,d$ and scale parameter $p$. 
Moreover, ${\rm{G}}_{p,q}^{m,n}[\cdot]$, $\Upsilon(\cdot,\cdot)$, and $\Gamma(\cdot,\cdot)$ denote Meijer's G-function \cite[9.301]{gradshteyn2014table} and the lower and upper incomplete Gamma functions \cite[Eqs.~(8.350.1) and (8.350.2)]{gradshteyn2014table}, respectively. 
Unless stated otherwise, $i = \overline{1,N}$, $\ell \in \{ {\rm{n}},{\rm{f}} \}$, and $\ell' \in \{ {\rm{n}},{\rm{f}} \}, \ell'\ne \ell$ are assumed throughout.}


\section{Network Description}
\label{Sect:SystemModel}

\begin{figure}[!t]
\centering
\includegraphics[width=\linewidth]{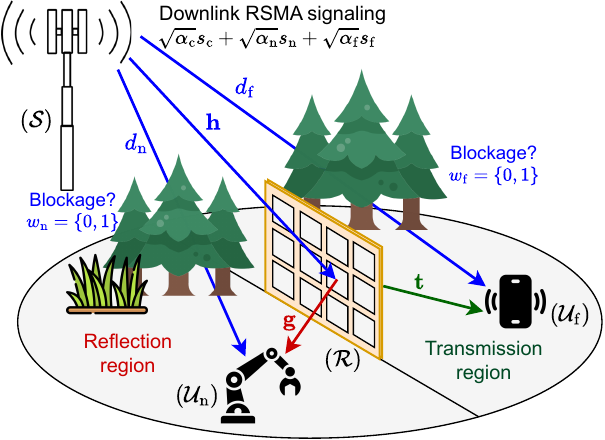}
\caption{Downlink STAR-RIS-assisted RSMA IoT systems.}
\label{Fig:SystemModel}
\end{figure}

This work examines a downlink RSMA IoT system supported by a STAR-RIS, as illustrated in Fig.~\ref{Fig:SystemModel}.\footnote{\textcolor{black}{The considered setup is tailored, but not limited, to IoT communications in several respects: (\textit{i})~the low-cost, energy-constrained, and coverage-limited nature of IoT links motivates the active STAR-RIS, which provides low-power full-space coverage and amplifies the double-fading cascaded channel that severely affects far IoT devices; (\textit{ii})~the strong near--far disparity among heterogeneous IoT devices makes user fairness, rather than sum rate, the operationally relevant objective; (\textit{iii})~IoT gateways demand real-time, low-complexity resource allocation on constrained hardware, motivating the lightweight learning-based inference over iterative optimization; and (\textit{iv})~the scheduling-free operation and controlled common-stream decoding of RSMA match the sporadic, low-latency access of IoT and offer the potential to scale toward denser device populations through user grouping/pairing (Remark~\ref{Remark:MultiUser}).}} 
An information source $(\cal S)$ communicates with two IoT users, namely the near user ${\cal U}_{\rm{n}}$ and the far user ${\cal U}_{\rm{f}}$, through a STAR-RIS $(\cal R)$
consisting of $N$ elements.
Throughout the analysis, we assume that ${\cal U}_{\rm{n}}$ lies within the reflection region of the STAR-RIS, whereas ${\cal U}_{\rm{f}}$ is positioned in its transmission region.
The STAR-RIS operates according to an ES protocol, which allows each element to flexibly divide the incoming energy into refracted and reflected components.
Compared to mode-switching or time-splitting protocols, the ES approach provides greater tuning capability of the corresponding coefficients \cite{mu2021simultaneously}.

Generally, both direct and cascaded links between $\cal S$ and  ${\cal U}_\ell$, $\forall \ell \in \{ {\rm{n}},{\rm{f}} \}$, are considered.
When the direct paths are blocked or severely shadowed, the two RSMA users rely primarily on the STAR-RIS, which then serves as an alternative communication link. In contrast, when the direct links are available, the system leverages cooperative transmission that combines both the cascaded and direct channels.
To capture the presence or absence of a direct link, we introduce the binary parameter $w_\ell \in \{ 0,1\}$, where $w_\ell = 1$ indicates the existence of a direct link and $w_\ell = 0$ indicates that only the cascaded (STAR-RIS assisted) path is available.
%

Let ${\bf{h}} = [h_1,...,h_N]^T \in \mathbb{C}^{N\times1}$, 
${\bf{g}} = [g_1,...,g_N]^T \in \mathbb{C}^{N\times1}$,
${\bf{t}} = [t_1,...,t_N]^T \in \mathbb{C}^{N\times1}$, 
and $d_\ell \in \mathbb{C}^{1\times 1}$ 
denote the channel coefficients of the links 
$\cal S \to \cal R$, 
$\cal R \to \cal U_{\rm{n}}$, 
$\cal R \to \cal U_{\rm{f}}$, 
and $\cal S \to \cal U_\ell$, respectively.
All individual coefficients follow independent and identically distributed (i.i.d.) Rayleigh fading channels, such that
$|{{h}}_i|^2$, $|{{g}}_i|^2$, $|{{t}}_i|^2$, and $|{d_\ell}|^2$
exhibit exponential distributions with mean values characterized by
$\lambda_h = \mathbb{E}\{ |{{h}}_i|^2 \} = {\rm{d}}_h^{-\eta}$,
$\lambda_g = \mathbb{E} \{ |{{g}}_i|^2 \} = {\rm{d}}_g^{-\eta}$, 
$\lambda_t = \mathbb{E} \{ |{{t}}_i|^2 \} = {\rm{d}}_t^{-\eta}$, 
and $\lambda_{\ell} = \mathbb{E} \{ |{d_\ell}|^2 \} = {\rm{d}}_{\ell}^{-\eta}$, 
where ${\rm{d}}_{\cal X}$, $\forall {\cal X} \in \{ h,g,t,\ell \}$, represents the physical distance of the corresponding channel
and $\eta$ denotes the path-loss exponent \cite{tu2022performance,tu2023short2}.
The perfect knowledge of CSI is assumed at all the network nodes.


During downlink transmission, $\cal S$ employs RSMA by generating a superimposed signal
$\sqrt {{\alpha _{\rm{c}}}} {s_{\rm{c}}} + \sqrt {{\alpha _{\rm{n}}}} {s_{\rm{n}}} + \sqrt {{\alpha _{\rm{f}}}} {s_{\rm{f}}}$,
where $s_{\rm{c}}$ is the common stream intended for both users,
and $s_{\rm{n}}$ and $s_{\rm{f}}$ are the private streams for $\cal U_{\rm{n}}$ and $\cal U_{\rm{f}}$, respectively.
The PA coefficients $\alpha_{\rm{s}}$ and $\alpha_\ell$ are chosen such that ${\alpha _{\rm{c}}} + {\alpha _{\rm{n}}} + {\alpha _{\rm{f}}} = 1$,
while successful recovery of ${s_{\rm{c}}}$ at $\cal U_\ell$ before retrieving its own private stream $s_\ell$ imposes the condition ${\alpha _{\rm{c}}} \ge \alpha_\ell$, $\forall \ell$ \cite{vu2025study}.
The received signals at the two users can be written as
\begin{align*}
    {y_{\rm{n}}} &= \left( {{w_{\rm{n}}}{d_{\rm{n}}} + {{\bf{g}}^T}{{\bf{\Phi }}_{\rm{R}}}{\bf{h}}} \right)\left( {\sqrt {{\alpha _{\rm{c}}}} {s_{\rm{c}}} + \sqrt {{\alpha _{\rm{n}}}} {s_{\rm{n}}} + \sqrt {{\alpha _{\rm{f}}}} {s_{\rm{f}}}} \right) + {n_{\rm{n}}},\\
    {y_{\rm{f}}} &= \left( {{w_{\rm{f}}}{d_{\rm{f}}} + {{\bf{t}}^T}{{\bf{\Phi }}_{\rm{T}}}{\bf{h}}} \right)\left( {\sqrt {{\alpha _{\rm{c}}}} {s_{\rm{c}}} + \sqrt {{\alpha _{\rm{n}}}} {s_{\rm{n}}} + \sqrt {{\alpha _{\rm{f}}}} {s_{\rm{f}}}} \right) + {n_{\rm{f}}},
\end{align*}
where $\mathbb{E}\{ {{{\left| {{s_{\rm{c}}}} \right|}^2}} \} = \mathbb{E}\{ {{{\left| {{s_\ell}} \right|}^2}} \} = P_S$ is the average transmit signal power
and
${n_\ell } \sim {\cal C}{\cal N}\left( {0,{\sigma ^2}} \right)$ is the additive white Gaussian noise at $\cal U_\ell$.
The reflection and transmission matrices of the STAR-RIS under ES are written as
\textcolor{black}{${{\bf{\Phi }}_{\rm{R}}} = {\rm{diag}}\big(\sqrt {\delta _1^{\rm{R}}\beta _1^{\rm{R}}} {e^{j{\varphi _1}}},...,\sqrt {\delta _N^{\rm{R}}\beta _N^{\rm{R}}} {e^{j{\varphi _N}}}\big)$
and 
${{\bf{\Phi }}_{\rm{T}}} = {\rm{diag}}\big(\sqrt {\delta _1^{\rm{T}}\beta _1^{\rm{T}}} {e^{j{\nu _1}}},...,\sqrt {\delta _N^{\rm{T}}\beta _N^{\rm{T}}} {e^{j{\nu _N}}}\big)$,}
where the ES coefficients $\beta _i^{\rm{R}}$ and $\beta _i^{\rm{T}}$ satisfy $\beta _i^{\rm{R}} + \beta _i^{\rm{T}} = 1$,
$\{ {\varphi _i},{\nu _i}\}$ denotes the phase-shift coefficient,
and $\{ \delta _i^{\rm{R}},\delta _i^{\rm{T}}\} \ge 1$ is the amplification gain \cite{vu2024hybrid,vu2025aerial}.
Notably, $\cal R$ operates in active mode when $\{ \delta _i^{\rm{R}},\delta _i^{\rm{T}}\} > 1$, and in passive mode when $\delta _i^{\rm{R}} = \delta _i^{\rm{T}} = 1$.

Upon receiving the signal $y_\ell$, each user performs SIC following the descending order of the PA coefficients to recover both the common and private streams \cite{vu2025study}.
Specifically, the UE first decodes the common stream ${s_{\rm{c}}}$ and then proceeds to decode its own private stream $s_\ell$, while
treating the other user’s private stream $s_{\ell'}$, $\forall \ell' \in \{ {\rm{n}},{\rm{f}} \}$, $\ell' \ne \ell$, as interference.
Assuming the perfect CSI, the STAR-RIS phase-shift parameters $\{ {\varphi _i},{\nu _i}\}$ are configured to maximize the received signal strength.
This is achieved by setting
\textcolor{black}{
${\varphi _i} = \arg \left( {{w_{\rm{n}}}{d_{\rm{n}}}} \right) - \arg \left( {{g_i}} \right) - \arg \left( {{h_i}} \right)$
and
${\nu _i} = \arg \left( {{w_{\rm{f}}}{d_{\rm{f}}}} \right) - \arg \left( {{t_i}} \right) - \arg \left( {{h_i}} \right)$
\cite{vu2024hybrid,vu2025aerial}.}
Consequently, the achievable rates for decoding the common stream ${s_{\rm{c}}}$ and the private stream $s_\ell$ at user $\cal U_\ell$ can be expressed, respectively, as
\begin{align}\label{eq:R_cl}
    {\Re _{{\rm{c}},\ell }} &= {\tau _\ell }{\log _2}\left( {1 + \frac{{{\kappa _\ell }{\alpha _{\rm{c}}}\bar \gamma }}{{{\kappa _\ell }(1 - {\alpha _{\rm{c}}})\bar \gamma  + 1}}} \right), \forall \ell,\\
    \label{eq:R_l}
    {\Re _\ell } &= {\log _2}\left( {1 + \frac{{{\kappa _\ell }{\alpha _\ell }\bar \gamma }}{{{\kappa _\ell }{\alpha _{\ell '}}\bar \gamma  + 1}}} \right), \forall \ell, \ell',
\end{align}
where
${\tau _\ell }$ denotes the RS coefficient of $\cal U_\ell$ from the common stream ${s_{\rm{c}}}$, satisfying ${\tau _{\rm{n}}} + {\tau _{\rm{f}}} = 1$,
$\bar \gamma  = {P_S}/{\sigma ^2}$ is the average transmit SNR,
and
\begin{align}
    {\kappa _{\ell}} = 
    \begin{cases}
        {\Big| {{w_{\rm{n}}}\left|{d_{\rm{n}}}\right| + \sum\nolimits_{i = 1}^N {\sqrt {\delta _i^{\rm{R}}\beta _i^{\rm{R}}} \left| {{g_i}} \right|\left| {{h_i}} \right|} } \Big|^2}, &  {\rm{if}}\,\, \ell = {\rm{n}},\\
        {\Big| {{w_{\rm{f}}} \left|{d_{\rm{f}}}\right| + \sum\nolimits_{i = 1}^N {\sqrt {\delta _i^{\rm{T}}\beta _i^{\rm{T}}} \left| {{t_i}} \right|\left| {{h_i}} \right|} } \Big|^2}, & {\rm{if}}\,\, \ell = {\rm{f}}.
    \end{cases}
\end{align}

\begin{remark}\label{Remark:MultiUser}
\textcolor{black}{The two-user near/far configuration is the canonical RSMA setting that isolates the near--far fairness problem and admits tractable closed-form OP/EC analysis. For massive-connectivity IoT, the framework extends via user grouping/pairing: devices are clustered (e.g., by large-scale gain) into near/far groups served on orthogonal resources, and per-group RSMA with hierarchical SIC is applied. Each group then reduces to the two-user problem, so the derived expressions and the algorithms of Sections~\ref{Sect:Analytical} and \ref{Sect:Model_based_RA} are reused per group. The main obstacle is the intra-group SIC ordering, which is combinatorial (up to $K!$ orders for $K$ users) but is well handled by the standard heuristic of ordering by descending PA coefficients, i.e., an ${\cal O}(K\log K)$ sort that is near-optimal for near/far-structured channels. A rigorous massive-connectivity treatment additionally requires joint user clustering, SIC-ordering optimization, and a re-derivation of the multi-user expressions, and is left as future work (Section~\ref{Sect:FutureWork}).}
\end{remark}

\begin{assumption}\label{Assumption:1}
    It is worth mentioning that when $\{ \delta _i^{\rm{R}},\delta _i^{\rm{T}}\} > 1$, \eqref{eq:R_cl} and \eqref{eq:R_l} are derived under the assumption that the SI of the active STAR-RIS is neglected.
    The SI power generated by active STAR-RIS elements, which is typically around $-80$ dBm \cite{vu2025aerial}, is insignificant when compared to the inter-message interference resulting from the source transmission power $P_S$, generally on the order of $30$ dBm or higher. This represents roughly an $10^{11}$-fold difference in power. Consequently, in RSMA communication systems, inter-message interference overwhelmingly dominates, allowing the SI introduced by the STAR-RIS to be safely neglected.
    \textcolor{black}{This assumption is stated within the practical operating region $\delta _i^{\rm{R}} = \delta _i^{\rm{T}} = \delta \le \delta_{\rm sat}$, where $\delta_{\rm sat}$ denotes the amplifier saturation gain, so that the active elements operate in their linear regime. A quantitative rate-loss bound substantiating this assumption is derived in Section~\ref{Sect:5.1}, while the complementary high-gain regime $\delta > \delta_{\rm sat}$, in which residual-SI accumulation and amplifier nonlinearity become the dominant impairments, is examined separately in Section~\ref{Sect:Residual_SI}.}
\end{assumption}





\vspace{-0.25cm}
\section{Information-Theoretic Analysis Framework}
\label{Sect:Analytical}
In this section, we derive closed-form expressions for the OP and EC performance of RSMA users, from which the asymptotic behavior, DO, array gain, and throughput-based and spectral-based EE are analyzed.


\subsection{Outage Probability Framework}\label{Sect:OP}

\subsubsection{Outage Probability Analysis}
The OP measures the likelihood that a user fails to decode either the common stream or its private stream \cite{vu2025study}.
For user $\cal U_\ell$, the OP is mathematically expressed as
\begin{align}\label{eq:OP_l_original}
    {\rm{O}}{{\rm{P}}_\ell } &= 1 - \Pr \left( {{\Re _{{\rm{c}},\ell }} \ge R_{\rm{c}}^{{\rm{th}}},{\Re _\ell } \ge R_\ell ^{{\rm{th}}}} \right)\nonumber\\
    &= 1 - \Pr \left\{ {{\kappa _\ell } \ge \max \left( {{\wp _{{\rm{c}},\ell }},{\wp _\ell }} \right)} \right\}\nonumber\\
    & = {F_{{\kappa _\ell }}}\left( {\max \left( {{\wp _{{\rm{c}},\ell }},{\wp _\ell }} \right)} \right),
\end{align}
where $R_{\rm{c}}^{{\rm{th}}}$ and $R_\ell ^{{\rm{th}}}$ are the target rates for ${s_{\rm{c}}}$ and ${s_{\ell}}$, respectively,
${\wp _{{\rm{c}},\ell }} = \gamma _{{\rm{c}},\ell }^{{\rm{th}}}{[ {{\alpha _{\rm{c}}}\bar \gamma  - \gamma _{{\rm{c}},\ell }^{{\rm{th}}}\left( {1 - {\alpha _{\rm{c}}}} \right)\bar \gamma } ]^{ - 1}}$,
${\wp _\ell } = \gamma _\ell ^{{\rm{th}}}{[ {{\alpha _\ell }\bar \gamma  - \gamma _\ell ^{{\rm{th}}}{{\alpha }_{\ell'} }\bar \gamma } ]^{ - 1}}$,
$\gamma _{{\rm{c}},\ell }^{{\rm{th}}} = {2^{R_{\rm{c}}^{{\rm{th}}}/{\tau _\ell }}} - 1$,
and
$\gamma _\ell ^{{\rm{th}}} = {2^{R_\ell ^{{\rm{th}}}}} - 1$.

To derive \eqref{eq:OP_l_original}, we determine the statistical characterization of ${{\kappa _\ell }}$ through the following proposition.


\begin{proposition}[Statistical Characterization of ${{\kappa _\ell }}$]
\label{proposition:1}
    The random variable ${{\kappa _\ell }}$ follows an approximate generalized Gamma distribution with first shape parameter $\xi_\ell$, second shape parameter $1/2$, and scale parameter $\Omega_\ell^2$, i.e., ${\kappa _\ell } \sim {\cal G}{\cal G}\left( {{\xi _\ell },1/2,\Omega _\ell ^2} \right)$.
    Accordingly, its CDF and PDF are given, respectively, by \cite{do2021multi}
    \begin{align}\label{eq:CDF_kappa}
        {F_{{\kappa _\ell }}}\left( {x;{\xi _\ell },\frac{1}{2},\Omega _\ell ^2} \right) &= \frac{1}{{\Gamma \left( {{\xi _\ell }} \right)}}\Upsilon \left( {{\xi _\ell },\frac{{\sqrt x }}{{\Omega _\ell ^{}}}} \right),x \ge 0,
        \\ \label{eq:PDF_kappa}
        {f_{{\kappa _\ell }}}\left( {x;{\xi _\ell },\frac{1}{2},\Omega _\ell ^2} \right) &= \frac{{{x^{{\xi _\ell }/2 - 1}}}}{{2\Gamma \left( {{\xi _\ell }} \right)\Omega _\ell ^{{\xi _\ell }}}}\exp \left( { - \frac{{\sqrt x }}{{\Omega _\ell ^{}}}} \right),x \ge 0,
    \end{align}
    where
    ${\xi _\ell } = \mu _\ell ^2/{\chi _\ell }$,
    $\Omega _\ell ^{} = {\chi _\ell }/{\mu _\ell }$,
    ${\mu _{\rm{n}}} = \frac{{{w_{\rm{n}}}}}{2}\sqrt {\pi {\lambda _{\rm{n}}}}  + \frac{\pi }{4}\sqrt {{\lambda _g}{\lambda _h}} \beta _{{\Sigma _1}}^{\rm{R}}$, ${\mu _{\rm{f}}} = \frac{{{w_{\rm{f}}}}}{2}\sqrt {\pi {\lambda _{\rm{f}}}}  + \frac{\pi }{4}\sqrt {{\lambda _t}{\lambda _h}} \beta _{{\Sigma _1}}^{\rm{T}}$,
    ${\chi _{\rm{n}}} = \frac{{4 - \pi }}{4}w_{\rm{n}}^2{\lambda _{\rm{n}}} + \frac{{16 - {\pi ^2}}}{{16}}{\lambda _g}{\lambda _h}\beta _{{\Sigma _2}}^{\rm{R}}$,
    ${\chi _{\rm{f}}} = \frac{{4 - \pi }}{4}w_{\rm{f}}^2{\lambda _{\rm{f}}} + \frac{{16 - {\pi ^2}}}{{16}}{\lambda _t}{\lambda _h}\beta _{{\Sigma _2}}^{\rm{T}}$,
    $\beta _{{\Sigma _1}}^{\rm{R}} = \sum\nolimits_{i = 1}^N {\sqrt {\delta _i^{\rm{R}}\beta _i^{\rm{R}}} } $,
    $\beta _{{\Sigma _1}}^{\rm{T}} = \sum\nolimits_{i = 1}^N {\sqrt {\delta _i^{\rm{T}}\beta _i^{\rm{T}}} } $,
    $\beta _{{\Sigma _2}}^{\rm{R}} = \sum\nolimits_{i = 1}^N {\delta _i^{\rm{R}}\beta _i^{\rm{R}}} $,
    and $\beta _{{\Sigma _2}}^{\rm{T}} = \sum\nolimits_{i = 1}^N {\delta _i^{\rm{T}}\beta _i^{\rm{T}}} $.
    The analytical correctness of this proposition is validated through Monte Carlo simulations in Fig.~\ref{Fig:Graphical_Validation}.
\begin{proof}
%
Let ${\kappa _{\ell,1}} = \left| {{d_{\ell}}} \right|$,
${\kappa _{{\rm{n}},2}} = \sum\nolimits_{i = 1}^N {\sqrt {\delta _i^{\rm{R}}\beta _i^{\rm{R}}} \left| {{g_i}} \right|\left| {{h_i}} \right|}$, 
${\kappa _{{\rm{f}},2}} = \sum\nolimits_{i = 1}^N {\sqrt {\delta _i^{\rm{T}}\beta _i^{\rm{T}}} \left| {{t_i}} \right|\left| {{h_i}} \right|}$,
and
${\kappa _{{\ell},3}} = {w_{\ell}}{\kappa _{{\ell},1}} + {\kappa _{{\ell},2}}$.
Based on the moment-matching method \cite{tu2024multihop,do2021multi}, ${\kappa _{{\ell},3}}$ can be approximated as a Gamma distribution with shape and scale parameters as
\color{black}
${\xi _\ell } = {{\mathbb{E}{{\{ {\kappa _{\ell ,3}}\} }^2}}}\big/{{{\rm{Var}}\{ {\kappa _{\ell ,3}}\} }} \buildrel \Delta \over = {{\mu _\ell ^2}}/{{{\chi _\ell }}}$
and
$\Omega _\ell ^{} = {{{\rm{Var}}\{ {\kappa _{\ell ,3}}\} }}\big/{{\mathbb{E}\{ {\kappa _{\ell ,3}}\} }} \buildrel \Delta \over = {{{\chi _\ell }}}/{{{\mu _\ell }}}$, respectively,
where
\begin{align*}
    \hspace{-0.15cm}{\mu _{\ell}} &= \mathbb{E}\left\{ {{\kappa _{{\ell},3}}} \right\} = {w_{\ell}}\mathbb{E}\left\{ {{\kappa _{{\ell},1}}} \right\} + \mathbb{E}\left\{ {{\kappa _{{\ell},2}}} \right\} \nonumber\\
    &= \frac{{{w_{\ell}}}}{2}\sqrt {\pi {\lambda _{\ell}}} +
    \begin{cases}
        \frac{\pi }{4}\sqrt {{\lambda _g}{\lambda _h}} \sum\nolimits_{i = 1}^N {\sqrt {\delta _i^{\rm{R}}\beta _i^{\rm{R}}} } , & \hspace{-0.15cm} {\rm{if}}\, \ell = {\rm{n}},\\
        \frac{\pi }{4}\sqrt {{\lambda _t}{\lambda _h}} \sum\nolimits_{i = 1}^N {\sqrt {\delta _i^{\rm{T}}\beta _i^{\rm{T}}} } , & \hspace{-0.15cm} {\rm{if}}\, \ell = {\rm{f}},
    \end{cases}
    \\
    \hspace{-0.15cm}{\chi _{\ell}} &= {\rm{Var}}\left\{ {{\kappa _{{\ell},3}}} \right\} = w_{\ell}^2{\rm{Var}}\left\{ {{\kappa _{{\ell},1}}} \right\} + {\rm{Var}}\left\{ {{\kappa _{{\ell},2}}} \right\} \nonumber\\
    &= w_{\ell}^2\left( {\mathbb{E}\left\{ {\kappa _{{\ell},1}^2} \right\} - \mathbb{E}{{\left\{ {{\kappa _{{\ell},1}}} \right\}}^2}} \right) + \mathbb{E}\left\{ {\kappa _{{\ell},2}^2} \right\} - \mathbb{E}{\left\{ {{\kappa _{{\ell},2}}} \right\}^2} \nonumber\\
    &= \frac{{4 - \pi }}{4}w_{\ell}^2{\lambda _{\ell}} +
    \begin{cases}
        \frac{{16 - {\pi ^2}}}{{16}}{\lambda _g}{\lambda _h} \sum\nolimits_{i = 1}^N {\delta _i^{\rm{R}}\beta _i^{\rm{R}}}, & \hspace{-0.15cm} {\rm{if}}\, \ell = {\rm{n}},\\
        \frac{{16 - {\pi ^2}}}{{16}}{\lambda _t}{\lambda _h} \sum\nolimits_{i = 1}^N {\delta _i^{\rm{T}}\beta _i^{\rm{T}}}, & \hspace{-0.15cm} {\rm{if}}\, \ell = {\rm{f}}.
    \end{cases}    
\end{align*}
\color{black}
When ${\kappa _{{\ell},3}} \sim {\rm{Gamma}} \left( {{\xi _{\ell}},\Omega _{\ell}^{}} \right)$, 
we have
${\kappa _{\ell}} \buildrel \Delta \over = {\left| {{\kappa _{{\ell},3}}} \right|^2} \sim {\cal G}{\cal G}\left( {{\xi _{\ell} },1/2,\Omega _{\ell} ^2} \right)$, whose CDF and PDF are given as \eqref{eq:CDF_kappa} and \eqref{eq:PDF_kappa}, respectively.
The proof is concluded.
\end{proof}
    
\end{proposition}


\begin{figure}[!t]
\centering
\begin{subfigure}{.49\linewidth}
  \centering
  \includegraphics[width=\linewidth]{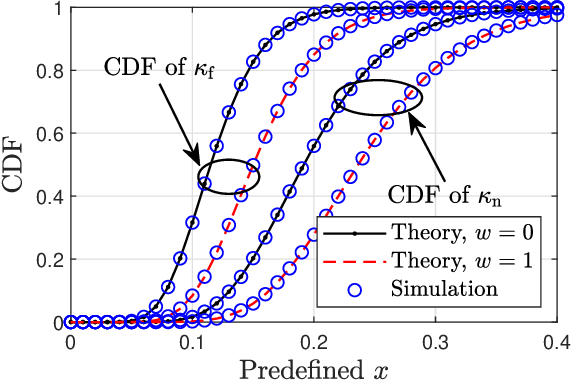}  
  \subcaption{CDF}
\end{subfigure}
\begin{subfigure}{.49\linewidth}
\vspace{0.25cm}
  \centering
  \includegraphics[width=\linewidth]{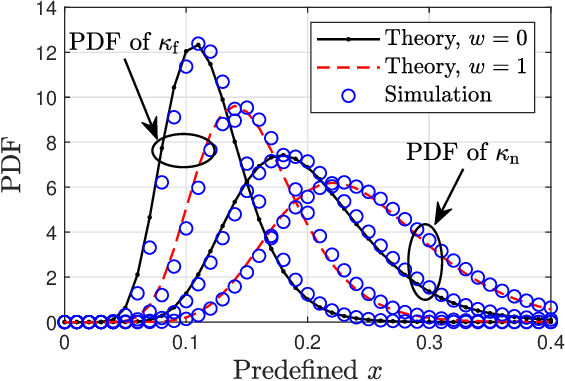}  
  \subcaption{PDF}
\end{subfigure}
\caption{Graphical validation for the distributions of ${{\kappa _\ell }}$ in Proposition~\ref{proposition:1}, where we assume $N = 30$,
$\beta_i^{\rm{R}}=0.6$,
$\beta_i^{\rm{T}}=0.4$,
$\delta_i^{\rm{R}}= \delta_i^{\rm{T}}=10$,
$\lambda_{\rm{n}} = \lambda_{\rm{f}} = 0.003$,
$\lambda_h$ = 0.0035, 
$\lambda_g = 0.025$, 
and $\lambda_t = 0.01$.
The simulations are based on the Monte Carlo method with $10^6$ channel realizations. 
}
\label{Fig:Graphical_Validation}
\end{figure}


\begin{figure}[!t]
\centering
\includegraphics[width=\linewidth]{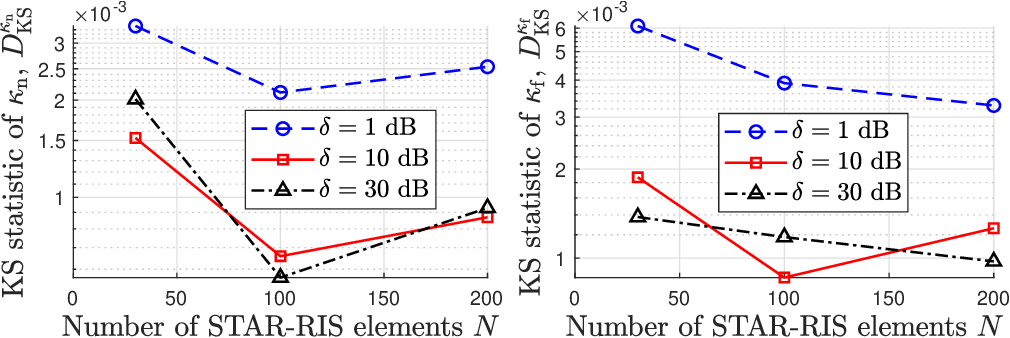}
\caption{\color{black}KS statistic versus the number of STAR-RIS elements $N$ for different amplification gains $\delta$: (a)~$\kappa_{\rm n}$ and (b)~$\kappa_{\rm f}$.}
\label{Fig:KS_Statistic}
\end{figure}
\textcolor{black}{To complement the graphical validation in Fig.~\ref{Fig:Graphical_Validation}, Fig.~\ref{Fig:KS_Statistic} quantitatively assesses the accuracy of the generalized-Gamma approximation in Proposition~\ref{proposition:1} using the Kolmogorov--Smirnov (KS) statistic, evaluated between the empirical CDF from $10^6$ Monte Carlo realizations and the analytical CDF in \eqref{eq:CDF_kappa}, defined as $D_{\mathrm{KS}}^{\kappa_\ell}=\sup_x|\hat{F}_{\kappa_\ell}(x)-F_{\kappa_\ell}(x)|$. 
The KS statistic is examined across STAR-RIS sizes $N\in\{30,100,200\}$ and amplification gains $\delta_i^{\rm{R}} = \delta_i^{\rm{T}} = \delta\in\{1,10,30\}$~dB for both the near and far users. As observed in Fig.~\ref{Fig:KS_Statistic}, the KS statistic remains small over the entire range, in the order of $10^{-3}$, confirming that the approximation is tight across different STAR-RIS configurations and amplification gains. 
It is also worth noting that, for a finite number of samples $N_{\rm{trial}}$, the KS statistic does not vanish but is lower bounded by an order of $1/\sqrt{N_{\rm{trial}}}$ even under an exact distributional match; the residual values in Fig.~\ref{Fig:KS_Statistic} are at this sampling-limited level, indicating that the deviation is dominated by finite-sample effects rather than by the approximation itself. These results validate the robustness of the generalized-Gamma approximation adopted throughout the analytical framework.}

From Proposition~\ref{proposition:1}, ${\rm{O}}{{\rm{P}}_\ell }$ in \eqref{eq:OP_l_original} can be readily computed as
\begin{align}\label{eq:OP_l_final}
    {\rm{O}}{{\rm{P}}_\ell } = \frac{1}{{\Gamma \left( {{\xi _\ell }} \right)}}\Upsilon \left( {{\xi _\ell },\frac{1}{{{\Omega _\ell }}}\sqrt {\max \left( {{\wp _{{\rm{c}},\ell }},{\wp _\ell }} \right)} } \right),
\end{align}
where $\max \left( {{\wp _{{\rm{c}},\ell }},{\wp _\ell }} \right) \ge 0$; otherwise, ${\rm{O}}{{\rm{P}}_\ell } = 1$.
This yields the tight relationships such that
\begin{align}\label{eq:constraint1}
    {\alpha _{\rm{c}}} &> \gamma _{\rm{c},\ell }^{{\rm{th}}}\left( {1 - {\alpha _{\rm{c}}}} \right),\forall \ell ,\\
    \label{eq:constraint2}
    {\alpha _\ell } &> \gamma _\ell ^{{\rm{th}}}{\alpha _{\ell '}}, \forall \ell ,\ell '.
\end{align}


\subsubsection{Asymptotic Outage Probability}
In the high-SNR regime, i.e., $\bar \gamma \to \infty$, we have $\max \left( {{\wp _{{\rm{c}},\ell }},{\wp _\ell }} \right) \to 0$.
By applying the series expansion of the lower incomplete Gamma function \cite[Eq.~(8.354.1)]{gradshteyn2014table}, the asymptotic OP corresponding to \eqref{eq:OP_l_final} is calculated as
$\color{black}{\widetilde {{\rm{OP}}}_\ell } = \frac{1}{{\Gamma \left( {{\xi _\ell }} \right)}}\sum\limits_{k = 0}^\infty  {\frac{{{{\left( { - 1} \right)}^k}}}{{k!\left( {k + {\xi _\ell }} \right)}}} {\left( {\frac{1}{{{\Omega _\ell }}}\sqrt {\max \left( {{\wp _{{\rm{c}},\ell }},{\wp _\ell }} \right)} } \right)^{k + {\xi _\ell }}}$.
When $\max \left( {{\wp _{{\rm{c}},\ell }},{\wp _\ell }} \right) \to 0$,
this simplifies to
\begin{align}\label{eq:OP_l_asymptotic}
    {\widetilde {{\rm{OP}}}_\ell } = \frac{1}{{\Gamma \left( {{\xi _\ell } + 1} \right)}}{\left( {\frac{1}{{{\Omega _\ell }}}\sqrt {\max \left( {{\wp _{{\rm{c}},\ell }},{\wp _\ell }} \right)} } \right)^{{\xi _\ell }}}.
\end{align}


\begin{proposition}[Diversity Order and Array Gain]\label{proposition:diversity_order}
    To observe the DO\footnote{The DO presents the rate at which the OP decreases with increasing SNR. A larger DO corresponds to a steeper OP decay curve, which implies enhanced robustness against fading and better reliability in the high-SNR range.} of the proposed system model, \eqref{eq:OP_l_asymptotic} can be reformulated as
    \begin{align}\label{eq:DO}
        {\widetilde {{\rm{OP}}}_\ell } = {G_\ell }{{\bar \gamma }^{ - {\xi _\ell }/2}},
    \end{align}
    where ${G_\ell } = \frac{1}{{\Gamma \left( {{\xi _\ell } + 1} \right)\Omega _\ell ^{{\xi _\ell }}}}\max {\Big( {\frac{{\gamma _{{\rm{c}},\ell }^{{\rm{th}}}}}{{{\alpha _{\rm{c}}} - \gamma _{{\rm{c}},\ell }^{{\rm{th}}}\left( {1 - {\alpha _{\rm{c}}}} \right)}},\frac{{\gamma _\ell ^{{\rm{th}}}}}{{{\alpha _\ell } - \gamma _\ell ^{{\rm{th}}}{\alpha _{\ell '}}}}} \Big)^{{\xi _\ell }/2}}$.
    Accordingly, the system DO is obtained as
    $\color{black} {D_\ell } = -\mathop {\lim }\limits_{\bar \gamma  \to \infty } \,\,\log \big( {{{\widetilde {{\rm{OP}}}}_\ell }} \big)/\log \left( {\bar \gamma } \right) =  - \mathop {\lim }\limits_{\bar \gamma  \to \infty } \,\,\Big\{ \frac{{\log ({G_\ell })}}{{\log \left( {\bar \gamma } \right)}} + \frac{{\log ({{\bar \gamma }^{ - {\xi _\ell }/2}})}}{{\log \left( {\bar \gamma } \right)}} \Big\} = {\xi _\ell }/2$.
%
    For the array gain\footnote{The array gain represents the horizontal displacement of the OP curve along the SNR axis, quantifing the additional SNR required to reach a given outage level at the same DO. A higher array gain indicates that the same level of reliability can be achieved with lower transmit power, reflecting improved fading mitigation efficiency.}, \eqref{eq:DO} can be rearranged as
    ${\widetilde {{\rm{OP}}}_\ell } = {G_\ell }{{\bar \gamma }^{ - D_\ell}} =  (A_\ell {\bar \gamma })^{ - D_\ell}$, where $A_\ell = G_\ell^{-1/D_\ell} = G_\ell^{-2/\xi_\ell}$ defines the system array gain. 
    
\end{proposition}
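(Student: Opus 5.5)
The plan is to start from the asymptotic expression \eqref{eq:OP_l_asymptotic} and make its dependence on $\bar\gamma$ explicit. The key observation is that both thresholds defined after \eqref{eq:OP_l_original} are inversely proportional to $\bar\gamma$:
\[
\wp_{{\rm c},\ell}=\frac{1}{\bar\gamma}\cdot\frac{\gamma^{\rm th}_{{\rm c},\ell}}{\alpha_{\rm c}-\gamma^{\rm th}_{{\rm c},\ell}(1-\alpha_{\rm c})},\qquad
\wp_{\ell}=\frac{1}{\bar\gamma}\cdot\frac{\gamma^{\rm th}_{\ell}}{\alpha_\ell-\gamma^{\rm th}_{\ell}\alpha_{\ell'}}.
\]
Under the feasibility conditions \eqref{eq:constraint1}--\eqref{eq:constraint2}, both fractions are strictly positive and independent of $\bar\gamma$. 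The common factor $1/\bar\gamma>0$ can therefore be pulled out of the maximum, giving $\max(\wp_{{\rm c},\ell},\wp_\ell)=\bar\gamma^{-1}M_\ell$. Here $M_\ell$ is the maximum of the two fractions appearing in $G_\ell$.

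Next, I substitute this into \eqref{eq:OP_l_asymptotic}. The bracket becomes $\Omega_\ell^{-1}M_\ell^{1/2}\bar\gamma^{-1/2}$, and raising it to the power $\xi_\ell$ gives
\[
\widetilde{\rm OP}_\ell=\frac{M_\ell^{\xi_\ell/2}}{\Gamma(\xi_\ell+1)\,\Omega_\ell^{\xi_\ell}}\,\bar\gamma^{-\xi_\ell/2}.
\]
This is \eqref{eq:DO} with the stated $G_\ell$, using $\max(a,b)^{p}=\max(a^p,b^p)$ for $p>0$. One point must be checked here: $\xi_\ell=\mu_\ell^2/\chi_\ell$ and $\Omega_\ell=\chi_\ell/\mu_\ell$ depend only on channel statistics and the STAR-RIS coefficients, not on $\bar\gamma$. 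Hence $G_\ell$ is a genuine SNR-independent constant.

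For the diversity order, I take logarithms: $\log\widetilde{\rm OP}_\ell=\log G_\ell-(\xi_\ell/2)\log\bar\gamma$. Dividing by $\log\bar\gamma$ and letting $\bar\gamma\to\infty$, the term $\log G_\ell/\log\bar\gamma$ vanishes because $G_\ell$ is finite and positive. This leaves $D_\ell=\xi_\ell/2$.

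For the array gain, I write $G_\ell\bar\gamma^{-D_\ell}=(G_\ell^{-1/D_\ell}\bar\gamma)^{-D_\ell}$. This identifies $A_\ell=G_\ell^{-1/D_\ell}=G_\ell^{-2/\xi_\ell}$.

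The only step needing care is justifying that \eqref{eq:OP_l_asymptotic} is the correct leading term, so that the limit defining $D_\ell$ can be computed from it. Since $\max(\wp_{{\rm c},\ell},\wp_\ell)\to0$, the higher-order terms of the series from \cite[Eq.~(8.354.1)]{gradshteyn2014table} are $O(\bar\gamma^{-(\xi_\ell+1)/2})$. Their ratio to the leading term therefore tends to zero. Consequently $\log{\rm OP}_\ell$ and $\log\widetilde{\rm OP}_\ell$ differ by $o(1)$, and the same diversity order holds for the exact OP in \eqref{eq:OP_l_final}.
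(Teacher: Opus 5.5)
Your proposal is correct and follows the paper's route: you factor $\bar\gamma^{-1}$ out of $\max(\wp_{{\rm c},\ell},\wp_\ell)$ in \eqref{eq:OP_l_asymptotic} to get $G_\ell\bar\gamma^{-\xi_\ell/2}$, read off $D_\ell=\xi_\ell/2$ from the log-limit, and obtain $A_\ell=G_\ell^{-2/\xi_\ell}$ by rearranging. Your extra check, that the discarded series terms are $O(\bar\gamma^{-(\xi_\ell+1)/2})$ so the exact OP in \eqref{eq:OP_l_final} has the same diversity order, is a useful tightening that the paper leaves implicit.
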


\begin{remark}\label{remark:OP}
    The findings from Proposition~\ref{proposition:diversity_order} indicate that the diversity order is directly governed by $\xi_\ell$.
    Increasing the number of STAR-RIS elements or enhancing the amplification gain directly raises $\xi_\ell$, leading to a noticeably steeper OP curve.
    Consequently, performance can be improved by deploying more STAR-RIS elements or by increasing the amplification gain.
    Additionally, when $D_\ell \ne 0$, the OP continues to decline even in the high-SNR region.
    Meanwhile, the horizontal placement of the OP curve (i.e., the array gain) is influenced by several system parameters, including the number of STAR-RIS elements, the amplification gain, target data rates, and the PA coefficients.
\end{remark}



\subsection{Ergodic Capacity Framework}

\subsubsection{Ergodic Capacity Analysis}

The EC represents the long-term average of the achievable data rates that can be supported by the channel with a negligible OP \cite{tu2024irs}.
Mathematically, given ${\Re _{{\rm{c}},\ell }}$ and ${\Re _\ell }$ in \eqref{eq:R_cl} and \eqref{eq:R_l}, respectively, the EC of $\cal U_\ell$ is expressed as
\begin{align}\label{eq:EC_l}
    &{\rm{E}}{{\rm{C}}_\ell } = \mathbb{E}\left\{ {{\Re _{{\rm{c}},\ell }}} \right\} + \mathbb{E}\left\{ {{\Re _\ell }} \right\}\nonumber\\
    &= {\tau _\ell }\mathbb{E}\left\{ {{{\log }_2}\left( {1 + {\kappa _\ell }\bar \gamma } \right) - {{\log }_2}\left( {1 + {\kappa _\ell }\left( {1 - {\alpha _c}} \right)\bar \gamma } \right)} \right\}\nonumber\\
    & + \mathbb{E}\left\{ {{{\log }_2}\left( {1 + \left( {{\alpha _\ell } + {\alpha _{\ell '}}} \right){\kappa _\ell }\bar \gamma } \right) - {{\log }_2}\left( {1 + {\kappa _\ell }{\alpha _{\ell '}}\bar \gamma } \right)} \right\}.
\end{align}

\begin{proposition}\label{proposition:3}
    We investigate the following function 
    \begin{align}\label{eq:EC_1}
        {{Z}_\ell }\left( c \right) = \mathbb{E}\left\{ {{{\log }_2}\left( {1 + c{\kappa _\ell }} \right)} \right\},
    \end{align}
    where $c$ is an arbitrary non-negative constant and $\kappa_\ell$ is the considered random variable.
    Equation \eqref{eq:EC_1} is calculated as
\color{black}
    \begin{align*}
        {{Z}_\ell }\left( c \right) &= \frac{1}{{\ln \left( 2 \right)}}\int_0^\infty  {\frac{{1 - {F_{{\kappa _\ell }}}\left( {x/c} \right)}}{{1 + x}}dx}  \nonumber\\
        &\mathop  = \limits^{\left( a \right)} \frac{1}{{\ln \left( 2 \right)\Gamma \left( {{\xi _\ell }} \right)}}\int_0^\infty  {\frac{1}{{1 + x}}{\rm{G}}_{1,2}^{2,0}\left[ {\frac{{\sqrt x }}{{\Omega _\ell ^{}\sqrt c }}\left| {
        \begin{matrix}
            1\\{{\xi _\ell },0}
        \end{matrix} } \right.} \right]dx} \nonumber\\
        &\mathop  = \limits^{\left( b \right)} \frac{{{2^{{\xi _\ell } - 1}}}}{{\ln \left( 2 \right)\Gamma \left( {{\xi _\ell }} \right)\sqrt {\pi } }} {\rm{G}}_{3,5}^{5,1}\left[ {\frac{1}{{4c\Omega _\ell ^2}}\left| {
        \begin{matrix}
            {0,\,\,\frac{1}{2},\,\,1}\\
            {0,\,\frac{\xi _\ell }{2},\,0,\,\frac{ {{\xi _\ell } + 1} }{2},\,\frac{1}{2}}
        \end{matrix}
        } \right.} \right],
    \end{align*}
    where step $(a)$ is derived using
    ${F_{{\kappa _\ell }}}\left( x \right) = \frac{1}{{\Gamma \left( {{\xi _\ell }} \right)}}\Upsilon \left( {{\xi _\ell },\frac{{\sqrt x }}{{\Omega _\ell ^{}}}} \right) = 1 - \frac{1}{{\Gamma \left( {{\xi _\ell }} \right)}}\Gamma \left( {{\xi _\ell },\frac{{\sqrt x }}{{\Omega _\ell ^{}}}} \right)$ \cite{duong2012cognitive} and \cite[Eq.~(8.4.16.2)]{prudnikov1990integrals}, while step $(b)$ is derived using \cite[Eq.~(2.24.2.4)]{prudnikov1990integrals}.
\end{proposition}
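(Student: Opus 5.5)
I would begin by converting the expectation in \eqref{eq:EC_1} into a single integral of the complementary CDF. Integrating by parts, with $d\log_2(1+cx) = \frac{c}{\ln 2\,(1+cx)}\,dx$, gives ${Z}_\ell(c)=\frac{1}{\ln 2}\int_0^\infty \frac{c\,(1-F_{\kappa_\ell}(x))}{1+cx}\,dx$. The boundary term vanishes because $\log(1+cx)$ grows only logarithmically, while $1-F_{\kappa_\ell}(x)$ decays like $e^{-\sqrt{x}/\Omega_\ell}$. The substitution $x\mapsto x/c$ then yields the first line of the statement.

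Next I would insert the CDF from Proposition~\ref{proposition:1}, using $1-F_{\kappa_\ell}(x)=\Gamma(\xi_\ell,\sqrt{x}/\Omega_\ell)/\Gamma(\xi_\ell)$. The upper incomplete Gamma function is written as a Meijer G-function, $\Gamma(\xi,z)={\rm G}_{1,2}^{2,0}\big[z\,\big|\,{}^{1}_{\xi,0}\big]$, which is the identity of \cite[Eq.~(8.4.16.2)]{prudnikov1990integrals}. With $z=\sqrt{x}/(\Omega_\ell\sqrt c)$ this gives step $(a)$.

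For step $(b)$, I would also write $\frac{1}{1+x}={\rm G}_{1,1}^{1,1}\big[x\,\big|\,{}^{0}_{0}\big]$. The integral is then a Mellin-type integral of a product of two G-functions whose arguments are $x$ and $x^{1/2}$. Because the exponents $1$ and $1/2$ are rational, the generalized product formula \cite[Eq.~(2.24.2.4)]{prudnikov1990integrals} applies, with $k/l = 1/2$, i.e.\ $l=2$ and $k=1$. This produces a single G-function in which each parameter list of the $\Gamma$-function kernel is replicated through Gauss's multiplication formula:
\begin{itemize}
\item the upper parameter $1$ becomes $1/2,\,1$;
\item the lower parameters $\xi_\ell,0$ become $\xi_\ell/2,\,(\xi_\ell+1)/2,\,0,\,1/2$;
\item the factor $1/(1+x)$ contributes the extra $0$ entries.
\end{itemize}
The argument becomes $\big(\tfrac{1}{\Omega_\ell\sqrt c}\big)^2/2^2 = \frac{1}{4c\Omega_\ell^2}$, and the prefactor $2^{\xi_\ell-1}/\sqrt{\pi}$ arises from the duplication constant $(2\pi)^{-(l-1)/2}\cdot l^{\,\cdot}$ factors.

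An equivalent route I would use as a check is the Mellin--Barnes representation: write the integral as a contour integral in $s$ and apply $\Gamma(2s)=2^{2s-1}\pi^{-1/2}\Gamma(s)\Gamma(s+\tfrac12)$. Reading off the ${\rm G}_{3,5}^{5,1}$ orders and parameters this way confirms the result.

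The main obstacle is step $(b)$: getting the bookkeeping right in the multiplication formula, meaning the orders $(m,n,p,q)=(5,1,3,5)$, the ordering of the parameters, and the exact constant. Convergence conditions also need checking, and they hold because $\xi_\ell>0$. I would verify the final expression numerically against direct integration for a couple of values of $\xi_\ell$ and $c$.
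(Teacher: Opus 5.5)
Your proposal is correct and follows the paper's own route: the complementary-CDF integral from integration by parts, the Meijer-G form ${\rm G}_{1,2}^{2,0}$ of $\Gamma(\xi_\ell,\cdot)$ via \cite[Eq.~(8.4.16.2)]{prudnikov1990integrals}, and then \cite[Eq.~(2.24.2.4)]{prudnikov1990integrals} for the half-power argument. Your parameter lists, the argument $1/(4c\Omega_\ell^2)$ and the prefactor $2^{\xi_\ell-1}/\sqrt{\pi}$ all agree with a direct Mellin--Barnes evaluation using $\Gamma(2s)=2^{2s-1}\pi^{-1/2}\Gamma(s)\Gamma(s+\frac12)$; one caution is that Prudnikov writes the second argument as $\omega x^{l/k}$, so here $l=1$, $k=2$ (the $k$-fold replication acts on the parameters of $\Gamma(\xi_\ell,\cdot)$, exactly as your bookkeeping does) rather than $l=2$, $k=1$.
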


From Proposition~\ref{proposition:3}, the closed-form expression for ${\rm{E}}{{\rm{C}}_\ell }$ in \eqref{eq:EC_l} can be expressed in terms of ${{Z}_\ell }\left( \cdot \right)$ such that
\begin{align}
    {\rm{E}}{{\rm{C}}_\ell } &= {\tau _\ell }\left\{ {{{Z}_\ell }\left( {\bar \gamma } \right) - {{Z}_\ell }\left( {\left( {1 - {\alpha _c}} \right)\bar \gamma } \right)} \right\} \nonumber\\
    &\qquad \qquad+ {{Z}_\ell }\left( {\left( {{\alpha _\ell } + {\alpha _{\ell '}}} \right)\bar \gamma } \right) - {{Z}_\ell }\left( {{\alpha _{\ell '}}\bar \gamma } \right)\nonumber\\
    &= {\tau _\ell } {Z}_\ell(\bar \gamma) - {{Z}_\ell }\left( {{\alpha _{\ell '}}\bar \gamma } \right) + (1-\tau_\ell){Z}_\ell((1-\alpha_{\rm{c}})\bar \gamma).
    \label{eq:EC_final}
\end{align}

\subsubsection{Asymptotic Ergodic Capacity}

In the high-SNR regime, we utilize the limit ${\log _2}\left( {1 + c\bar \gamma } \right) \to {\log _2}\left( {\bar \gamma } \right)$, as $\bar\gamma \to \infty$, for \eqref{eq:EC_l}.
Accordingly, the asymptotic EC is calculated as
\begin{align}\label{eq:asymptotic_EC}
    {\widetilde {{\rm{EC}}}_\ell } &=  - {\tau _\ell }{\log _2}\left( {1 - {\alpha _{\rm{c}}}} \right) + {\log _2}\left( {{\alpha _\ell } + {\alpha _{\ell '}}} \right) - {\log _2}\left( {{\alpha _{\ell '}}} \right)\nonumber\\
    & = (1-\tau_\ell) {\log _2}\left( {1 - {\alpha _{\rm{c}}}} \right) - {\log _2}\left( {{\alpha _{\ell '}}} \right).
\end{align}

\begin{remark}\label{remark:EC}
    The result in \eqref{eq:asymptotic_EC} reveals that the EC becomes saturated in the high-SNR regime, as ${\widetilde {{\rm{EC}}}_\ell }$ is independent of the average transmit SNR $\bar \gamma$.
    Furthermore, the asymptotic EC is sensitive to the PA and RS coefficients, as these parameters determine how power and rate resources are distributed between the common and private streams.
    Guaranteering RA that balances the capacity between users remains a crucial challenge for practical deployments.
\end{remark}



\subsection{Throughput-based and Spectral-based Energy Efficiencies}

EE is a crucial performance indicator that characterizes the tradeoff between transmission efficiency and energy consumption. It reflects how effectively the system converts its power budget into useful throughput or spectral gains.
This tradeoff becomes especially relevant in STAR-RIS-assisted RSMA IoT systems, where both transmission power and the active STAR-RIS amplification contribute to overall system efficiency.
To examine this balance, we evaluate two types of EE metrics: throughput-based EE and spectral-based EE, each emphasizing a different aspect of system performance.

Throughput-based EE measures the number of successfully delivered bits per unit of consumed power, explicitly accounting for the users’ outage probabilities. It is defined as
\begin{align}\label{eq:EE_TP}
    {\rm{E}}{{\rm{E}}_{{\rm{TP}}}} = \frac{{\sum\nolimits_\ell  {\left( {R_{\rm{c}}^{{\rm{th}}}{\tau _\ell } + R_\ell ^{{\rm{th}}}} \right)\left( {1 - {\rm{O}}{{\rm{P}}_\ell }} \right)} }}{{{P_0} + {P_S} + \psi \left( {\beta _{{\Sigma _2}}^{\rm{R}} + \beta _{{\Sigma _2}}^{\rm{T}}} \right)}},
\end{align}
where $P_0$ denotes the static circuit power,
$\psi = 0$ if $\delta _i^{\rm{R}} = \delta _i^{\rm{T}} = 1$,
and $\psi = 1$ if $\{ \delta _i^{\rm{R}},\delta _i^{\rm{T}}\} \ge 1$.

In contrast, spectral-based EE is based on the ergodic capacities of the common and private streams. It captures the long-term achievable spectral efficiency per unit of power consumption, and is expressed as
\begin{align}\label{eq:EE_SE}
    {\rm{E}}{{\rm{E}}_{{\rm{SE}}}} = \frac{{\sum\nolimits_\ell  {\left( {\mathbb{E}\left\{ {{\Re _{{\rm{c}},\ell }}} \right\} + \mathbb{E}\left\{ {{\Re _\ell }} \right\}} \right)} }}{{{P_0} + {P_S} + \psi \left( {\beta _{{\Sigma _2}}^{\rm{R}} + \beta _{{\Sigma _2}}^{\rm{T}}} \right)}}.
\end{align}

\begin{remark}
    Throughput-based EE is directly influenced by users’ OPs, making it suitable for reliability-critical applications where unsuccessful decoding must be accounted for. Spectral-based EE, however, depends on ergodic rates and is more suitable for assessing long-term performance under typical channel variations. Thus, the choice of EE metric depends on whether instantaneous reliability or long-term average efficiency is prioritized.
\end{remark}

\begin{remark}
    Both EE metrics are highly sensitive to the STAR-RIS configuration and the RSMA parameters.
    Increasing ${\beta _{{\Sigma _2}}^{\rm{R}}}$ and ${\beta _{{\Sigma _2}}^{\rm{T}}}$ enhances the beamforming gain but also raises the overall power consumption, creating a clear tradeoff. Likewise, PA and RS factors determine the balance between common and private stream rates, directly influencing the numerators of \eqref{eq:EE_TP} and \eqref{eq:EE_SE}.
    Consequently, careful optimization of these parameters is essential for maximizing the overall energy efficiency of the system.
\end{remark}

\begin{lemma}[\textcolor{black}{Engineering design insights}]\label{Lemma:DesignInsights}
\textcolor{black}{The closed-form expressions translate into the following practical guidelines.
(\textit{i}) \emph{Reliability lever:} the diversity order is $D_\ell=\xi_\ell/2$ with $\xi_\ell=\mu_\ell^2/\chi_\ell$, which grows almost linearly with the number of elements $N$ and with the amplification gain $\delta$ through $\mu_\ell$ and $\chi_\ell$ in Proposition~\ref{proposition:1}. A target reliability can thus be met by trading hardware (more elements) against power (higher gain), giving a concrete rule for dimensioning $N$ and $\delta$.
(\textit{ii}) \emph{Coverage lever:} the array gain $A_\ell=G_\ell^{-2/\xi_\ell}$ sets the horizontal SNR offset of the OP curve and is tunable via the PA/RS coefficients and the target rates $R_{\rm{c}}^{\rm{th}},R_\ell^{\rm{th}}$; this directly links the quality-of-service (QoS) thresholds to the transmit power required to reach a given outage level.
(\textit{iii}) \emph{Fairness/efficiency lever:} the asymptotic EC in \eqref{eq:asymptotic_EC} is SNR-independent and fully determined by the PA/RS split, so at high SNR fairness is governed solely by resource allocation rather than transmit power; moreover, the EE metrics are unimodal in $\bar\gamma$, implying an optimal operating point beyond which added power is wasted, and the passive mode is preferable at low-to-moderate SNR. These insights motivate the fairness-oriented RA in Section~\ref{Sect:Model_based_RA}.}
\end{lemma}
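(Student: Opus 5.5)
Each of the three claims in Lemma~\ref{Lemma:DesignInsights} can be read off the closed forms already derived, so I will verify them one at a time.

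For (\textit{i}), I take the uniform configuration $\delta_i^{\rm R}=\delta_i^{\rm T}=\delta$ and $\beta_i^{\rm R}=\beta^{\rm R}$, $\beta_i^{\rm T}=\beta^{\rm T}$. Then $\beta_{\Sigma_1}^{\rm R}=N\sqrt{\delta\beta^{\rm R}}$ and $\beta_{\Sigma_2}^{\rm R}=N\delta\beta^{\rm R}$, and similarly for the transmission side. Substituting these into $\mu_\ell$ and $\chi_\ell$ from Proposition~\ref{proposition:1} gives
\begin{align*}
\xi_{\rm n}=\frac{\big(\frac{w_{\rm n}}{2}\sqrt{\pi\lambda_{\rm n}}+\frac{\pi}{4}\sqrt{\lambda_g\lambda_h}\,N\sqrt{\delta\beta^{\rm R}}\big)^2}{\frac{4-\pi}{4}w_{\rm n}^2\lambda_{\rm n}+\frac{16-\pi^2}{16}\lambda_g\lambda_h N\delta\beta^{\rm R}}.
\end{align*}
With $w_{\rm n}=0$ this reduces exactly to $\xi_{\rm n}=\frac{\pi^2}{16-\pi^2}N$, which is linear in $N$. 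With $w_{\rm n}=1$, I expand the expression for large $N\delta$. The leading term is again $\frac{\pi^2}{16-\pi^2}N$, and the corrections are $O(\sqrt{N/\delta})$ and $O(1/\delta)$, and they are increasing in $\delta$ when the direct link is weak. Together with $D_\ell=\xi_\ell/2$ from Proposition~\ref{proposition:diversity_order}, this establishes the ``almost linear'' growth. The monotonicity in $N$ and $\delta$ can be checked by differentiating the ratio, and the $N$--$\delta$ trade-off follows from the level sets of $\xi_\ell$.

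I expect this part to be the main obstacle. In the cascaded-only case the exact computation shows that $\xi_\ell$ does \emph{not} depend on $\delta$, because the gain cancels between $\mu_\ell^2$ and $\chi_\ell$. The dependence on $\delta$ therefore enters only through the direct-link terms and through the scale $\Omega_\ell$, that is, through the array gain. My plan is to state this precisely: $\delta$ improves $\xi_\ell$ only when $w_\ell=1$, and otherwise acts through $\Omega_\ell=\chi_\ell/\mu_\ell$, which scales like $\sqrt{\delta}$. In that case the ``power'' lever is realised through $A_\ell$.

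For (\textit{ii}), I start from $A_\ell=G_\ell^{-2/\xi_\ell}$ and the explicit $G_\ell$ in Proposition~\ref{proposition:diversity_order}, which gives
\begin{align*}
A_\ell=\Gamma(\xi_\ell+1)^{2/\xi_\ell}\,\Omega_\ell^2\Big/\max\Big(\tfrac{\gamma_{{\rm c},\ell}^{\rm th}}{\alpha_{\rm c}-\gamma_{{\rm c},\ell}^{\rm th}(1-\alpha_{\rm c})},\tfrac{\gamma_\ell^{\rm th}}{\alpha_\ell-\gamma_\ell^{\rm th}\alpha_{\ell'}}\Big).
\end{align*}
The denominator is increasing in $R_{\rm c}^{\rm th}$ and $R_\ell^{\rm th}$, and it depends on $\tau_\ell$ through $\gamma_{{\rm c},\ell}^{\rm th}=2^{R_{\rm c}^{\rm th}/\tau_\ell}-1$. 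It is also monotone in $\alpha_{\rm c}$, $\alpha_\ell$ and $\alpha_{\ell'}$ within the feasible region defined by \eqref{eq:constraint1}--\eqref{eq:constraint2}. Setting $\widetilde{\rm OP}_\ell=\epsilon$ and solving gives the required SNR $\bar\gamma=\epsilon^{-1/D_\ell}/A_\ell$, which is the claimed link between the QoS thresholds and the transmit power.

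For (\textit{iii}), the SNR-independence of the EC limit is immediate from \eqref{eq:asymptotic_EC}. For the EE claim, I write both numerators in \eqref{eq:EE_TP} and \eqref{eq:EE_SE} as functions $f(\bar\gamma)$ over the denominator $P_0+\sigma^2\bar\gamma+\psi(\cdot)$. The numerator $f$ is nondecreasing and bounded: it saturates to $\sum_\ell(R_{\rm c}^{\rm th}\tau_\ell+R_\ell^{\rm th})$ for throughput-based EE and to $\sum_\ell\widetilde{\rm EC}_\ell$ for spectral-based EE. It is also $o(\bar\gamma)$ near zero. Hence the ratio vanishes at both ends, which gives an interior maximiser. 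Strict unimodality additionally needs quasi-concavity, i.e., concavity of $f$ beyond an inflection point; I would argue this numerically, since the Meijer-G expressions do not give it in closed form. Finally, the passive-mode preference at low SNR follows by comparing denominators: at low $P_S$ the additive term $\beta_{\Sigma_2}^{\rm R}+\beta_{\Sigma_2}^{\rm T}=N\delta$ dominates, while the numerator gain is bounded. This last comparison is heuristic and should be backed by simulation.
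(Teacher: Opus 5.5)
Your three-part plan follows the paper's route. The paper's proof is one sentence: (i), (ii) and (iii) follow from Proposition~\ref{proposition:diversity_order} with Remark~\ref{remark:OP}, from $A_\ell=G_\ell^{-2/\xi_\ell}$, and from \eqref{eq:asymptotic_EC} with \eqref{eq:EE_TP}--\eqref{eq:EE_SE}. Your (ii) and the SNR-independence part of (iii) are correct and more explicit than that.

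On (i) you have found a real defect, which the paper's proof never confronts because it only inherits the unproved assertion of Remark~\ref{remark:OP}. Your cascaded-only computation is right, and the conclusion is stronger than you state. For $w_\ell=0$ and a common gain, $\kappa_\ell$ is \emph{exactly} $\delta$ times its passive value, so ${\rm OP}_\ell(\bar\gamma;\delta)={\rm OP}_\ell(\delta\bar\gamma;1)$. That is a pure $10\log_{10}\delta$~dB shift with no change of slope. The clause that $\xi_\ell$ grows almost linearly with $\delta$ is therefore false in general, and (i) cannot be proved as written, only in corrected form.

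However, the correction you propose for $w_\ell=1$ has the monotonicity backwards. With uniform coefficients and $s=\sqrt{\delta}$, write $\xi_{\rm n}=(a+bNs)^2/(c+dNs^2)$, where $a=\frac12\sqrt{\pi\lambda_{\rm n}}$, $b=\frac{\pi}{4}\sqrt{\lambda_g\lambda_h\beta^{\rm R}}$, $c=\frac{4-\pi}{4}\lambda_{\rm n}$ and $d=\frac{16-\pi^2}{16}\lambda_g\lambda_h\beta^{\rm R}$. Then
\[
\frac{\partial\xi_{\rm n}}{\partial s}=\frac{2N(a+bNs)(bc-ad\,s)}{(c+dNs^2)^2}.
\]
\begin{itemize}
\item $\xi_{\rm n}$ increases only up to $\delta^\star=\frac{4\pi}{(4+\pi)^2}\frac{\lambda_{\rm n}}{\lambda_g\lambda_h\beta^{\rm R}}$, where the Gamma scales of the direct and cascaded components coincide and $\xi_{\rm n}$ attains $\frac{\pi}{4-\pi}+\frac{\pi^2}{16-\pi^2}N$. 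The far user is analogous with $(\lambda_{\rm f},\lambda_t,\beta^{\rm T})$.
\item Beyond $\delta^\star$ it \emph{decreases} to $\frac{\pi^2}{16-\pi^2}N$, as $\xi_{\rm n}=\frac{\pi^2}{16-\pi^2}N+\frac{2ab}{d}\,\delta^{-1/2}+O(\delta^{-1})$. The leading correction is $N$-independent (not $O(\sqrt{N/\delta})$), positive, and decreasing in $\delta$.
\item A weak direct link makes $\delta^\star$ small. That is exactly the case where $\xi_\ell$ falls with $\delta$ on essentially all of $\delta\ge1$, the opposite of your claim.
\end{itemize}

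What you can actually prove is the following.
\begin{itemize}
\item $\xi_\ell$ is increasing and asymptotically linear in $N$. Your derivative check works here: the sign of $\partial\xi_{\rm n}/\partial N$ is that of $bdNs^2-ad\,s+2bc$, whose discriminant is negative since $\frac{\pi}{4-\pi}<\frac{8\pi^2}{16-\pi^2}N$.
\item For arbitrary gains, Cauchy--Schwarz on $\mu_\ell=\mu_{\rm dir}+\mu_{\rm cas}$ and $\chi_\ell=\chi_{\rm dir}+\chi_{\rm cas}$ gives $\xi_\ell\le\xi_{\rm dir}+\xi_{\rm cas}\le\frac{\pi}{4-\pi}w_\ell+\frac{\pi^2}{16-\pi^2}N$.
\end{itemize}
So $N$ is the diversity lever, and $\delta$ acts essentially through $\Omega_\ell\propto\sqrt{\delta}$, i.e., $A_\ell\propto\delta$. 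The $N$--$\delta$ trade-off therefore lives in $\widetilde{\rm OP}_\ell=(A_\ell\bar\gamma)^{-D_\ell}$, not in the level sets of $\xi_\ell$, which are $\delta$-independent when $w_\ell=0$.

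For (iii), your interior-maximum argument is sound. The spectral-EE numerator is linear, not $o(\bar\gamma)$, near the origin, but $f(0^+)=0$ is all you need.
\begin{itemize}
\item \emph{Spectral EE:} you can replace the numerical check by a proof. Every rate term has the form $\log_2\frac{1+a'\bar\gamma}{1+b'\bar\gamma}$ with $a'>b'\ge0$, whose derivative $\frac{(a'-b')/\ln 2}{(1+a'\bar\gamma)(1+b'\bar\gamma)}$ is decreasing. The numerator is thus concave, and concave over positive affine is quasi-concave, hence unimodal.
\item \emph{Throughput EE:} set $h=f'\,(P_0+\psi N\delta+\sigma^2\bar\gamma)-\sigma^2 f$. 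Since $h'=f''\,(P_0+\psi N\delta+\sigma^2\bar\gamma)$, a single convex-to-concave inflection of $f$ suffices. Each summand $1-F_{\kappa_\ell}(m_\ell/\bar\gamma)$, with $m_\ell=\bar\gamma\max(\wp_{{\rm c},\ell},\wp_\ell)$, has that shape because $y^2$ times the PDF \eqref{eq:PDF_kappa} is unimodal in $y$. The two-user sum need not have it, so a numerical check remains there.
\item \emph{Passive-mode preference:} your bounded-numerator-gain heuristic fails for the throughput EE. With $w_\ell=0$, let $\kappa_\ell^{\rm p}$ be the passive gain. The active-to-passive success-probability ratio $[1-F_{\kappa_\ell^{\rm p}}(m_\ell/(\delta\bar\gamma))]/[1-F_{\kappa_\ell^{\rm p}}(m_\ell/\bar\gamma)]$ diverges as $\bar\gamma\to0$, so at sufficiently low SNR the active mode wins on throughput EE. That part of (iii) is only a parameter-specific observation from Fig.~\ref{Fig:EE}, which is also how the paper supports it.
\end{itemize}
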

\begin{proof}
\textcolor{black}{The three statements follow directly from Proposition~\ref{proposition:diversity_order} and Remark~\ref{remark:OP}, from the array-gain definition $A_\ell=G_\ell^{-2/\xi_\ell}$, and from \eqref{eq:asymptotic_EC} together with the EE expressions \eqref{eq:EE_TP}--\eqref{eq:EE_SE}, respectively.}
\end{proof}

\section{Model-Based Resource Allocation Framework}
\label{Sect:Model_based_RA}

This section delves into two practical fairness optimization problems: User outage fairness and user capacity fairness.

\subsection{Outage Fairness Optimization}\label{Sect:OutageFairness}

\subsubsection{Problem Formulation}
Building on the OP analysis in Section~\ref{Sect:OP},
this part aims to improve fairness in outage performance among users.
In STAR-RIS-assisted RSMA IoT systems, near and far users often experience heterogeneous channel conditions; without proper RA, the far user may suffer disproportionately high outage levels.
To mitigate this imbalance, we jointly optimize the PA coefficients 
${\bm{\alpha}} = \left\{ {{\alpha _{\rm{c}}},{\alpha _{\rm{n}}},{\alpha _{\rm{f}}}} \right\}$
and 
the RS coefficients ${\bm{\tau }} = \left\{ {{\tau _{\rm{n}}},{\tau _{\rm{f}}}} \right\}$
to ensure a more equitable level of service reliability.
The objective is to minimize the worst-user OP while satisfying system constraints.
%
Mathematically, the outage fairness optimization problem is formulated as
\begin{subequations}\label{eq:Original_OP_Opt}
\begin{align}    
    \label{eq:Original_OP_Opt_a}
    &\mathop {\min }\limits_{{\bm{\alpha }},{\bm{\tau }}} \mathop {\max }\limits_\ell  {\rm{O}}{{\rm{P}}_\ell }   \\ 
    \label{eq:Original_OP_Opt_b}
    \rm{s.t.} \quad
    &\color{black} {\alpha _{\rm{c}}} + {\alpha _{\rm{n}}} + {\alpha _{\rm{f}}} \le 1, {\alpha _{\rm{c}}} \ge {\alpha _\ell },\forall \ell,\\
    \label{eq:Original_OP_Opt_d}
    &{\alpha _{\rm{c}}} - \varrho  \ge \gamma _{\rm{c},\ell }^{{\rm{th}}}\left( {1 - {\alpha _{\rm{c}}}} \right),\forall \ell ,\\
    \label{eq:Original_OP_Opt_e}
    & {\alpha _\ell } - \varrho  \ge \gamma _\ell ^{{\rm{th}}}{\alpha _{\ell '}},\forall \ell ,\ell ',\\
    \label{eq:Original_OP_Opt_f}
    &{\tau _{\rm{n}}} + {\tau _{\rm{f}}} \le 1,
\end{align}
\end{subequations}
where $\varrho = 10^{-3}$ is introduced to prevent division-by-zero and ensure numerical stability.

The optimization problem in \eqref{eq:Original_OP_Opt} is non-convex because of the max-min structure, the nonlinear coupling of variables, and the involvement of $\Upsilon(\cdot,\cdot)$ in OP expressions.
In the sequel, we attempt to approximate \eqref{eq:Original_OP_Opt} into a convex problem by utilizing an SCA method.

\subsubsection{Proposed Solution}
We first introduce an auxiliary variable $\theta$ such that
$\theta \ge \mathop {\max }\nolimits_\ell \{ {\rm{O}}{{\rm{P}}_\ell } \}$ and exploit the approximation $\rm{OP}_{\ell} \simeq {\widetilde {{\rm{OP}}}_\ell }$, which yields the surrogate objective function as $\mathop {\min }\nolimits_{{\bf{\alpha }},{\bf{\tau }},\theta } \theta $ 
subject to \cite{le2024active}
\begin{align}
    &\theta  \ge \mathop {\max }\nolimits_\ell  {\widetilde {{\rm{OP}}}_\ell } = \mathop {\max }\nolimits_\ell  \left\{ {\max {{\left( {{\wp _{{\rm{c}},\ell }},{\wp _\ell }} \right)}^{{\xi _\ell }/2}}/{{\bar \Omega }_\ell }} \right\}\nonumber\\
    \label{eq:Constraint21}
    &\mathop  \Rightarrow \limits^{\left( {{c}} \right)} 
    \left\{ {{{\bar \Omega }_\ell }\theta  \ge \wp _{{\rm{c}},\ell }^{{\xi _\ell }/2}\, \wedge \,{{\bar \Omega }_\ell }\theta  \ge \wp _\ell ^{{\xi _\ell }/2}} \right\},
\end{align}
where ${{\bar \Omega }_\ell } = \Gamma \left( {{\xi _\ell } + 1} \right)\Omega _\ell ^{^{{\xi _\ell }}}$
and step $(c)$ is based on the fact that $x \ge \max(a,b)$ is equivalent to 
$\{ x\ge a \wedge x\ge b \}$.
Subsequently, ${\wp _{{\rm{c}},\ell }^{{\xi _\ell }/2}}$
and
${\wp _\ell ^{{\xi _\ell }/2}}$ in
\eqref{eq:Constraint21} are convexified via first-order Taylor expansion at arbitrary nonzero points ${\bar \wp _{{\rm{c}},\ell }^{}}$
and
${\bar \wp _\ell ^{}}$ such that
\begin{align}\label{eq:33}
    \begin{cases}
        \wp _{{\rm{c}},\ell }^{{\xi _\ell }/2} \le \bar \wp _{{\rm{c}},\ell }^{{\xi _\ell }/2} + {c_{1,\ell }}\left( {\wp _{{\rm{c}},\ell }^{} - \bar \wp _{{\rm{c}},\ell }^{}} \right), \forall \ell,\\
        \wp _\ell ^{{\xi _\ell }/2} \le \bar \wp _\ell ^{{\xi _\ell }/2} + {c_{2,\ell }}\left( {\wp _\ell ^{} - \bar \wp _\ell ^{}} \right), \forall \ell,
    \end{cases}
\end{align}
where ${c_{1,\ell }} = \bar \wp _{{\rm{c}},\ell }^{{\xi _\ell }/2 - 1}{\xi _\ell }/2$
and
${c_{2,\ell }} = \bar \wp _\ell ^{{\xi _\ell }/2 - 1}{\xi _\ell }/2$.

Substituting \eqref{eq:33} into \eqref{eq:Constraint21} yields the equivalent constraints
\begin{align}\label{eq:34}
    \begin{cases}
        {{\bar c}_{1,\ell }} + \frac{{{c_{1,\ell }}\gamma _{{\rm{c}},\ell }^{{\rm{th}}}}}{{{\alpha _{\rm{c}}}\bar \gamma  - \gamma _{{\rm{c}},\ell }^{{\rm{th}}}\left( {1 - {\alpha _{\rm{c}}}} \right)\bar \gamma }} \le \bar \Omega _\ell ^{}\theta ,\forall \ell ,\\
        {{\bar c}_{2,\ell }} + \frac{{{c_{2,\ell }}\gamma _\ell ^{{\rm{th}}}}}{{{\alpha _\ell }\bar \gamma  - \gamma _\ell ^{{\rm{th}}}{\alpha _{\ell '}}\bar \gamma }} \le \bar \Omega _\ell ^{}\theta ,\forall \ell ,
    \end{cases}
\end{align}
where 
${{\bar c}_{1,\ell }} = \bar \wp _{{\rm{c}},\ell }^{{\xi _\ell }/2} - {c_{1,\ell }}\bar \wp _{{\rm{c}},\ell }^{} = \left( {1 - {\xi _\ell }/2} \right)\bar \wp _{{\rm{c}},\ell }^{{\xi _\ell }/2}$
and
${{\bar c}_{2,\ell }} = \bar \wp _\ell ^{{\xi _\ell }/2} - {c_{2,\ell }}\bar \wp _\ell ^{} = \left( {1 - {\xi _\ell }/2} \right)\bar \wp _\ell ^{{\xi _\ell }/2}$.
We further introduce auxiliary variables $t_{1,\ell}$ and $t_{2,\ell}$ to bound 
$\bar \Omega _\ell ^{}\theta  - {{\bar c}_{1,\ell }}$
and
$\bar \Omega _\ell ^{}\theta  - {{\bar c}_{2,\ell }}$ in \eqref{eq:34}, respectively.
The constraints in \eqref{eq:34} become
\begin{align}
\label{eq:35}
    \frac{1}{{{t_{1,\ell }}}}{c_{1,\ell }}\gamma _{{\rm{c}},\ell }^{{\rm{th}}} &\le {\alpha _{\rm{c}}}\bar \gamma \left( {1 + \gamma _{{\rm{c}},\ell }^{{\rm{th}}}} \right) - \gamma _{{\rm{c}},\ell }^{{\rm{th}}}\bar \gamma ,\forall \ell ,\\
\label{eq:36}
    {t_{1,\ell }} &\le \bar \Omega _\ell ^{}\theta  - {{\bar c}_{1,\ell }},\forall \ell ,\\
\label{eq:37}
    \frac{1}{{{t_{2,\ell }}}}{c_{2,\ell }}\gamma _\ell ^{{\rm{th}}} &\le {\alpha _\ell }\bar \gamma  - \gamma _\ell ^{{\rm{th}}}{\alpha _{\ell '}}\bar \gamma ,\forall \ell ,\\
\label{eq:38}
    {t_{2,\ell }} &\le \bar \Omega _\ell ^{}\theta  - {{\bar c}_{2,\ell }},\forall \ell.
\end{align}

Finally, the problem in \eqref{eq:Original_OP_Opt} can be equivalently recast as
\begin{align}   \label{eq:Equivalent_OP_Opt} 
    \mathop {\min }\limits_{{\bm{\alpha }},{\bm{\tau }},\theta, {\left\{ {{t_{1,\ell }},{t_{2,\ell }}} \right\}_{\forall \ell }}}\, \theta   
    \quad
    \rm{s.t.} 
    \quad
    \eqref{eq:Original_OP_Opt_b}-\eqref{eq:Original_OP_Opt_f}, \eqref{eq:35}-\eqref{eq:38}.
\end{align}
Unfortunately, the problem \eqref{eq:Equivalent_OP_Opt} still remains non-convex due to highly coupled variables in \eqref{eq:Original_OP_Opt_d}, \eqref{eq:Original_OP_Opt_e}, and \eqref{eq:35}.
To tackle this challenge, we apply the BCD method to alternatively solve \eqref{eq:Equivalent_OP_Opt} via two tractable sub-problems until the objective function converges.
The two sub-problems of the BCD algorithm are referred to as
\textbf{PA Block} and
\textbf{RS Block}.

\paragraph{\textbf{PA Block}}
Given the feasible points $\bar \tau_{\rm{n}}$, $\bar \tau_{\rm{f}}$,
and $\bar \gamma _{{\rm{c}},\ell }^{{\rm{th}}} = {2^{R_{\rm{c}}^{{\rm{th}}}/{{\bar \tau }_\ell }}} - 1$, the problem with respect to the PA coefficients ${\bm{\alpha}} = \left\{ {{\alpha _{\rm{c}}},{\alpha _{\rm{n}}},{\alpha _{\rm{f}}}} \right\}$ is expressed as
\begin{subequations}\label{eq:PA_block}
\begin{align}    
    \label{eq:PA_block_a}
    &\mathop {\min }\limits_{{\bm{\alpha }},\theta, {\left\{ {{t_{1,\ell }},{t_{2,\ell }}} \right\}_{\forall \ell }}}\, \theta    \\ 
    \label{eq:PA_block_b}
    \rm{s.t.} \quad
    & \eqref{eq:Original_OP_Opt_b}, \eqref{eq:Original_OP_Opt_e}, \eqref{eq:36}-\eqref{eq:38},\\
    \label{eq:PA_block_c}
    &{\alpha _{\rm{c}}} - \varrho  \ge \bar \gamma _{\rm{c},\ell }^{{\rm{th}}}\left( {1 - {\alpha _{\rm{c}}}} \right),\forall \ell , \\
    \label{eq:PA_block_d}
    &\frac{1}{{{t_{1,\ell }}}}{c_{1,\ell }}\bar\gamma _{{\rm{c}},\ell }^{{\rm{th}}} \le {\alpha _{\rm{c}}}\bar \gamma \left( {1 + \bar\gamma _{{\rm{c}},\ell }^{{\rm{th}}}} \right) - \bar\gamma _{{\rm{c}},\ell }^{{\rm{th}}}\bar \gamma ,\forall \ell ,
\end{align}
\end{subequations}
where constraints \eqref{eq:PA_block_c} and \eqref{eq:PA_block_d} are manipulated from constraints \eqref{eq:Original_OP_Opt_d} and \eqref{eq:35}, respectively.
Now, \eqref{eq:PA_block} becomes a convex optimization problem, which can be solved using a CVX with an SDP solver.

\paragraph{\textbf{RS Block}}

Given the feasible points ${{\bar \alpha }_{\rm{c}}}$,
${{\bar \alpha }_{\rm{n}}}$,
and ${{\bar \alpha }_{\rm{f}}}$,
the optimization problem with respect to $\bm{\tau}$ is expressed as
\begin{subequations}\label{eq:RS_block}
\begin{align}    
    \label{eq:RS_block_a}
    &\mathop {\min }\limits_{{\bm{\tau }},\theta, {\left\{ {{t_{1,\ell }},{t_{2,\ell }}} \right\}_{\forall \ell }}}\, \theta    \\ 
    \label{eq:RS_block_b}
    \rm{s.t.} \quad
    & \eqref{eq:Original_OP_Opt_f}, \eqref{eq:36}, \eqref{eq:38},\\
    \label{eq:RS_block_c}
    &{\tau _\ell } \ge R_{\rm{c}}^{{\rm{th}}}\ln \left( 2 \right)/\ln \left( {1 + \frac{{{{\bar \alpha }_{\rm{c}}} - \varrho }}{{1 - {{\bar \alpha }_{\rm{c}}}}}} \right),\forall \ell , \\
    \label{eq:RS_block_d}
    &\color{black}\frac{R_{\rm{c}}^{{\rm{th}}}\ln \left( 2 \right)}{{{\tau _\ell }}} \le \ln \big( {1 + \frac{{{t_{1,\ell }}{{\bar \alpha }_{\rm{c}}}\bar \gamma }}{{{c_{1,\ell }} + {t_{1,\ell }}\left( {1 - {{\bar \alpha }_{\rm{c}}}} \right)\bar \gamma }}} \big),\forall \ell ,\\
    \label{eq:RS_block_e}
    &\frac{1}{{{t_{2,\ell }}}}{c_{2,\ell }}\gamma _\ell ^{{\rm{th}}} \le {{\bar \alpha }_\ell }\bar \gamma  - \gamma _\ell ^{{\rm{th}}}{{\bar \alpha }_{\ell '}}\bar \gamma ,\forall \ell ,\
\end{align}
\end{subequations}
where contraints \eqref{eq:RS_block_c}, \eqref{eq:RS_block_d}, and \eqref{eq:RS_block_e} are manipulated from constraints
\eqref{eq:Original_OP_Opt_d}, \eqref{eq:35}, and \eqref{eq:37}, respectively.
Although \eqref{eq:RS_block} has been simplified, it remains non-convex due to the non-convexity of \eqref{eq:RS_block_d}.
To convexify \eqref{eq:RS_block_d}, we apply the inequality in \cite[Eq.~(34)]{vu2024enhancing} to bound $f\left( {{t_{1,\ell }}} \right) \buildrel \Delta \over =  \ln \left( {1 + \frac{{{t_{1,\ell }}{{\bar \alpha }_{\rm{c}}}\bar \gamma }}{{{c_{1,\ell }} + {t_{1,\ell }}\left( {1 - {{\bar \alpha }_{\rm{c}}}} \right)\bar \gamma }}} \right)$ as
$\color{black} f\left( {{t_{1,\ell }}} \right) \ge f\left( {{{\bar t}_{1,\ell }}} \right) + \left( {2 - \frac{{{{\bar t}_{1,\ell }}}}{{{t_{1,\ell }}}} - \frac{{{c_{1,\ell }} + {t_{1,\ell }}\left( {1 - {{\bar \alpha }_{\rm{c}}}} \right)\bar \gamma }}{{{c_{1,\ell }} + {{\bar t}_{1,\ell }}\left( {1 - {{\bar \alpha }_{\rm{c}}}} \right)\bar \gamma }}} \right)
\times {\left( {1 + \frac{{{c_{1,\ell }} + {{\bar t}_{1,\ell }}\left( {1 - {{\bar \alpha }_{\rm{c}}}} \right)\bar \gamma }}{{{{\bar t}_{1,\ell }}{{\bar \alpha }_{\rm{c}}}\bar \gamma }}} \right)^{ - 1}}, \forall \ell$,
where ${{{\bar t}_{1,\ell }}}$ is the feasible point obtained from the previous iteration.
Accordingly, \eqref{eq:RS_block_d} can be rewritten as
\begin{align}\label{eq:43}
    &\resizebox{\linewidth}{!}{$f\left( {{{\bar t}_{1,\ell }}} \right) + \frac{{{{\bar t}_{1,\ell }}{{\bar \alpha }_{\rm{c}}}\bar \gamma }}{{{c_{1,\ell }} + {{\bar t}_{1,\ell }}\bar \gamma }}\left( {\frac{{{c_{1,\ell }} + 2{{\bar t}_{1,\ell }}\left( {1 - {{\bar \alpha }_{\rm{c}}}} \right)\bar \gamma  - {t_{1,\ell }}\left( {1 - {{\bar \alpha }_{\rm{c}}}} \right)\bar \gamma }}{{{c_{1,\ell }} + {{\bar t}_{1,\ell }}\left( {1 - {{\bar \alpha }_{\rm{c}}}} \right)\bar \gamma }}} \right)$}\nonumber\\
    &\qquad \ge \frac{1}{{{t_{1,\ell }}}}\frac{{\bar t_{1,\ell }^2{{\bar \alpha }_{\rm{c}}}\bar \gamma }}{{\left( {{c_{1,\ell }} + {{\bar t}_{1,\ell }}\bar \gamma } \right)}} + \frac{1}{{{\tau _\ell }}}R_{\rm{c}}^{{\rm{th}}}\ln \left( 2 \right),\forall \ell.
\end{align}
As a result, the problem in \eqref{eq:RS_block} is equivalently recast as
\begin{align}    
    \label{eq:RS_block2}
    \mathop {\min }\limits_{{\bm{\tau }},\theta, {\left\{ {{t_{1,\ell }},{t_{2,\ell }}} \right\}_{\forall \ell }}}\, \theta    \,\,\,
    \rm{s.t.} \,\, 
    \eqref{eq:Original_OP_Opt_f}, \eqref{eq:36}, \eqref{eq:38}, \eqref{eq:RS_block_c}, \eqref{eq:RS_block_e}, \eqref{eq:43}.
\end{align}
Now, \eqref{eq:RS_block2} becomes a convex optimization problem, which can be solved using a CVX with an SDP solver.

\subsubsection{Overall Algorithm, Convergence, and Complexity}
\label{Sect:4.1.3}

\setlength{\textfloatsep}{0.5cm} 
\begin{algorithm}[!t]
    \small
    \caption{{Outage Fairness Algorithm}}
    \label{Algo:Overall_algorithm}
    \begin{algorithmic}[1]
        %
        \State Initialize iteration number $i=0$, threshold $\epsilon = 10^{-4}$, and $\left( {\alpha _{\rm{c}}^{\left( 0 \right)},\alpha _{\rm{n}}^{\left( 0 \right)},\alpha _{\rm{f}}^{\left( 0 \right)},\tau _{\rm{n}}^{\left( 0 \right)},\tau _{\rm{f}}^{\left( 0 \right)}} \right)$ in the feasible domain;
        \State Calculate 
        $\bar \wp _\ell ^{} = \gamma _\ell ^{{\rm{th}}}{\left[ {\alpha _\ell ^{\left( 0 \right)}\bar \gamma  - \gamma _\ell ^{{\rm{th}}}\alpha _{\ell '}^{\left( 0 \right)}\bar \gamma } \right]^{ - 1}}$,
        $\bar \wp _{{\rm{c}},\ell }^{} = \left( {{2^{R_{\rm{c}}^{{\rm{th}}}/\tau _\ell ^{\left( 0 \right)}}} - 1} \right){\left[ {\alpha _{\rm{c}}^{\left( 0 \right)}\bar \gamma  - \left( {{2^{R_{\rm{c}}^{{\rm{th}}}/\tau _\ell ^{\left( 0 \right)}}} - 1} \right)\left( {1 - \alpha _{\rm{c}}^{\left( 0 \right)}} \right)\bar \gamma } \right]^{ - 1}}$,
        ${\theta ^{\left( 0 \right)}} = {\max _\ell }\left\{ {\max {{\left( {\bar \wp _{{\rm{c}},\ell }^{},\bar \wp _\ell ^{}} \right)}^{{\xi _\ell }/2}}/\bar \Omega _\ell ^{}} \right\}$,
        $\bar t_{1,\ell }^{} = \bar \Omega _\ell ^{}{\theta ^{\left( 0 \right)}} - \left( {1 - {\xi _\ell }/2} \right){\left( {\bar \wp _{{\rm{c}},\ell }^{}} \right)^{{\xi _\ell }/2}}$,
        ${{\bar \alpha }_{\rm{c}}} = \alpha _{\rm{c}}^{\left( 0 \right)}$, ${{\bar \alpha }_{\rm{n}}} = \alpha _{\rm{n}}^{\left( 0 \right)}$, and ${{\bar \alpha }_{\rm{f}}} = \alpha _{\rm{f}}^{\left( 0 \right)}$,
        $\forall \ell$;
        \Repeat
            \State Update iterative index $i \leftarrow i+1$;
            \State Solve the RS block \eqref{eq:RS_block2} using CVX with an SDP solver and obtain $\tau_\ell^{(i)}$, $\forall \ell$;
            \State Update
            $\bar \tau _{\rm{n}}^{} \leftarrow \tau _{\rm{n}}^{\left( i \right)}$,
            $\bar \tau _{\rm{f}}^{} \leftarrow \tau _{\rm{f}}^{\left( i \right)}$,
            $\bar \gamma _{{\rm{c}},\ell }^{{\rm{th}}} \leftarrow {2^{R_{\rm{c}}^{{\rm{th}}}/\tau _\ell ^{\left( i \right)}}} - 1$,
            and 
            $\bar \wp _{{\rm{c}},\ell }^{} \leftarrow \bar \gamma _{{\rm{c}},\ell }^{{\rm{th}}}{\left[ {{{\bar \alpha }_{\rm{c}}}\bar \gamma  - \bar \gamma _{{\rm{c}},\ell }^{{\rm{th}}}\left( {1 - {{\bar \alpha }_{\rm{c}}}} \right)\bar \gamma } \right]^{ - 1}}$, $\forall \ell$;
            \State Solve the PA block \eqref{eq:PA_block} using CVX with an SDP solver and obtain
            $\alpha _{\rm{c}}^{\left( i \right)}$, 
            $\alpha _{\rm{n}}^{\left( i \right)}$,
            $\alpha _{\rm{f}}^{\left( i \right)}$,
            and $\theta^{(i)}$, $\forall \ell$;
            \State Update 
            $\bar t_{1,\ell }^{} \leftarrow \bar \Omega _\ell ^{}{\theta ^{\left( i \right)}} - \left( {1 - {\xi _\ell }/2} \right){\left( {\bar \wp _{{\rm{c}},\ell }^{}} \right)^{{\xi _\ell }/2}}$,
            ${{\bar \alpha }_{\rm{c}}} \leftarrow \alpha _{\rm{c}}^{\left( i \right)}$,
            ${{\bar \alpha }_{\rm{n}}} \leftarrow \alpha _{\rm{n}}^{\left( i \right)}$,
            and ${{\bar \alpha }_{\rm{f}}} \leftarrow \alpha _{\rm{f}}^{\left( i \right)}$, $\forall \ell$;
        \Until $\left| {{\theta ^{\left( i \right)}} - {\theta ^{\left( {i - 1} \right)}}} \right|/{\theta ^{\left( {i - 1} \right)}} \le \epsilon$;
        \State Return optimal points 
        $ \alpha _{\rm{c}}^* = \alpha _{\rm{c}}^{\left( i \right)}$,
        $ \alpha _{\rm{n}}^* = \alpha _{\rm{n}}^{\left( i \right)}$,
        $ \alpha _{\rm{f}}^* = \alpha _{\rm{f}}^{\left( i \right)}$,
        $\tau _{\rm{n}}^* = \tau _{\rm{n}}^{\left( i \right)}$,
        and $\tau _{\rm{f}}^* = \tau _{\rm{f}}^{\left( i \right)}$.
    \end{algorithmic}
\end{algorithm}

\paragraph{Outage Fairness Algorithm}
The overall outage fairness algorithm for RA is summarized in Algorithm~\ref{Algo:Overall_algorithm}.
The algorithm begins by initializing all variables within their feasible regions and then evaluating the objective function.
In each iteration, the RS block \eqref{eq:RS_block2} and the PA block \eqref{eq:PA_block} are alternately solved to update the corresponding variables.
This iterative process continues until the objective function converges. Upon convergence, the final RS and PA coefficients are returned, as shown in Step $10$ of Algorithm~\ref{Algo:Overall_algorithm}.

\paragraph{Convergence Analysis}
Considering iteration $i$ and the objective function as a function of RS and PA coefficients, i.e., ${\theta}\left( {\alpha _{\rm{c}}^{\left( i \right)},\alpha _{\rm{n}}^{\left( i \right)},\alpha _{\rm{f}}^{\left( i \right)},\tau _{\rm{n}}^{\left( i \right)},\tau _{\rm{f}}^{\left( i \right)}} \right)$, we have a series of inequalities
\begin{align*}
\color{black}\resizebox{\linewidth}{!}{$\theta \left( {\alpha _{\rm{c}}^{\left( i \right)},\alpha _{\rm{n}}^{\left( i \right)},\alpha _{\rm{f}}^{\left( i \right)},\tau _{\rm{n}}^{\left( i \right)},\tau _{\rm{f}}^{\left( i \right)}} \right)
\mathop  \le \limits^{\left( d \right)} \theta \left( {\alpha _{\rm{c}}^{\left( i \right)},\alpha _{\rm{n}}^{\left( i \right)},\alpha _{\rm{f}}^{\left( i \right)},\tau _{\rm{n}}^{\left( {i + 1} \right)},\tau _{\rm{f}}^{\left( {i + 1} \right)}} \right)$}
\end{align*}
$\color{black}\mathop  \le \limits^{\left( e \right)} \theta \big( \alpha _{\rm{c}}^{\left( {i + 1} \right)},
\alpha _{\rm{n}}^{\left( {i + 1} \right)},\alpha _{\rm{f}}^{\left( {i + 1} \right)},\tau _{\rm{n}}^{\left( {i + 1} \right)},\tau _{\rm{f}}^{\left( {i + 1} \right)} \big)$,
where
steps $(d)$ and $(e)$ hold by the marginal optimality
of alternatively solving \eqref{eq:RS_block2} and \eqref{eq:PA_block}, respectively.
When $\theta \big( {\alpha _{\rm{c}}^{\left( {i + 1} \right)},\alpha _{\rm{n}}^{\left( {i + 1} \right)},\alpha _{\rm{f}}^{\left( {i + 1} \right)},\tau _{\rm{n}}^{\left( {i + 1} \right)},\tau _{\rm{f}}^{\left( {i + 1} \right)}} \big)$
is bounded below due to constraints \eqref{eq:PA_block_c}, \eqref{eq:RS_block_c}, and \eqref{eq:RS_block_e},
the minimization of $\theta \big( {\alpha _{\rm{c}}^{\left( {i + 1} \right)},\alpha _{\rm{n}}^{\left( {i + 1} \right)},\alpha _{\rm{f}}^{\left( {i + 1} \right)},\tau _{\rm{n}}^{\left( {i + 1} \right)},\tau _{\rm{f}}^{\left( {i + 1} \right)}} \big)$ is convergent \cite{tu2025semi,tu2025hybrid}.

\paragraph{Computational Complexity}
In this work, the computational complexity is measured based on floating point operations (FLOPs) \cite{hunger2005floating}.
Solving the RS block \eqref{eq:RS_block2} using an interior-point method \cite{luo2010semidefinite} requires ${\cal C}_1 = {\cal O}(n_1^{3.5} \ln(1/\epsilon_1) )$ FLOPs,
where $n_1 = 7$ represents the number of variables
in the CVX solver and $\epsilon_1 = 10^{-6}$ denotes the convergence accuracy of the RS block.
Likewise, the number of FLOPs to perform the PA block \eqref{eq:PA_block} is ${\cal C}_2 = {\cal O}(n_2^{3.5} \ln(1/\epsilon_2) )$,
where
$n_2 = 8$ represents the number of variables
in the CVX solver and $\epsilon_2 = 10^{-6}$ denotes the convergence accuracy of the PA block.
Let $I_0$ denote the number of outer iterations accumulated in Algorithm~\ref{Algo:Overall_algorithm}, the total complexity of the outage fairness algorithm is ${\cal O} (I_0({\cal C}_1 + {\cal C}_2))$.



\subsection{Capacity Fairness Optimization}\label{Sect:CapacityFairness}

\subsubsection{Problem Formulation}

Beyond outage performance, fairness can also be evaluated in terms of the achievable EC of each user. In STAR-RIS-aided RSMA IoT systems, the near user typically attains a higher capacity due to its more favorable channel conditions. To ensure that both users benefit proportionally from the available spectrum resources, we jointly optimize the PA coefficients 
${\bm{\alpha}} = \left\{ {{\alpha _{\rm{c}}},{\alpha _{\rm{n}}},{\alpha _{\rm{f}}}} \right\}$
and 
the RS coefficients ${\bm{\tau }} = \left\{ {{\tau _{\rm{n}}},{\tau _{\rm{f}}}} \right\}$ to balance the long-term achievable rates.
Capacity fairness aims to reduce the EC gap between users by solving
\begin{subequations}\label{eq:Original_EC_Opt}
\begin{align}    
    \label{eq:Original_EC_Opt_a}
    &\mathop {\max }\limits_{{\bm{\alpha }},{\bm{\tau }}} \mathop {\min }\limits_\ell  {\rm{E}}{{\rm{C}}_\ell }   \\ 
    \label{eq:Original_EC_Opt_b}
    \rm{s.t.} \quad
    &\color{black} {\tau _{\rm{n}}} + {\tau _{\rm{f}}} \le 1, {\alpha _{\rm{c}}} + {\alpha _{\rm{n}}} + {\alpha _{\rm{f}}} \le 1, {\alpha _{\rm{c}}} \ge {\alpha _\ell },\forall \ell,\\
    \label{eq:Original_EC_Opt_e}
    & \color{black}\mathbb{E}\left\{ {{\Re _{{\rm{c}},\ell }}} \right\} \ge R_{\rm{c}}^{{\rm{th}}},\mathbb{E}\left\{ {{\Re _\ell }} \right\} \ge R_\ell ^{{\rm{th}}},\forall \ell, 
\end{align}
\end{subequations}
where constraints \eqref{eq:Original_EC_Opt_e}
ensure QoS by guaranteeing that the long-term achievable rates for decoding the common stream and the private stream meet the required thresholds $R_{\rm{c}}^{{\rm{th}}}$ and $R_\ell ^{{\rm{th}}}$, respectively.

The problem in \eqref{eq:Original_EC_Opt} is non-convex due to the non-convexity of the objective function \eqref{eq:Original_EC_Opt_a} and constraints \eqref{eq:Original_EC_Opt_e}.
Similar to the outage-fairness optimization, we convert \eqref{eq:Original_EC_Opt} into a convex problem via an SCA-based approach.

\subsubsection{Proposed Solution}
We begin by introducing an auxiliary variable satisfying $\omega  \le {\min _\ell } \{ {\rm{E}}{{\rm{C}}_\ell } \}$, which yields
\begin{align}\label{eq:46}
    \mathbb{E}\left\{ {{\Re _{{\rm{c}},\ell }}} \right\} + \mathbb{E}\left\{ {{\Re _\ell }} \right\} \ge \omega ,\forall \ell.
\end{align}
Thus, \eqref{eq:Original_EC_Opt} can be reformulated as
\begin{align}    
    \label{eq:EC_Opt1}
    \mathop {\max }\limits_{{\bm{\alpha }},{\bm{\tau }},\omega}\, \omega    \quad
    \rm{s.t.} \quad
    \eqref{eq:Original_EC_Opt_b}-\eqref{eq:Original_EC_Opt_e}, \eqref{eq:46}.
\end{align}
The main source of non-convexity in \eqref{eq:EC_Opt1} lies in the logarithmic expressions in constraints \eqref{eq:Original_EC_Opt_e} and \eqref{eq:46}.
To convexify these terms, we proceed as follows.

For the convexification of $\mathbb{E}\left\{ {{\Re _{{\rm{c}},\ell }}} \right\}$, we proceed that
\begin{align}\label{eq:48}
    \mathbb{E}\left\{ {{\Re _{{\rm{c}},\ell }}}{\left( {{\tau _\ell },{\alpha _{\rm{c}}}} \right)} \right\}
    \mathop  \ge \limits^{\left( f \right)} \mathbb{E}\left\{ {{{\hat \Re }_{{\rm{c}},\ell }}}{\left( {{\tau _\ell },{\alpha _{\rm{c}}}} \right)} \right\}
    \mathop  \ge \limits^{\left( g \right)} {{\tilde \Re }_{{\rm{c}},\ell }}\left( {{\tau _\ell },{\alpha _{\rm{c}}}} \right),
\end{align}
where steps $(f)$ and $(g)$ are based on \cite[Lemma~1]{nasir2019uav} and the Jensen’s inequality, respectively.
Specifically, we utilize the inequality in \cite[Lemma~1]{nasir2019uav} to approximate ${{\Re _{{\rm{c}},\ell }}}\left( {{\tau _\ell },{\alpha _{\rm{c}}}} \right)$ at feasible points $\left( {{{\hat \tau }_\ell },{{\hat \alpha }_{\rm{c}}}} \right)$, which yields
$\color{black} {\Re _{{\rm{c}},\ell }}\left( {{\tau _\ell },{\alpha _{\rm{c}}}} \right)\mathop  \ge \limits^{\left( g \right)} \frac{{{{\hat \tau }_\ell }/\ln \left( 2 \right)}}{{1 + \frac{{{\kappa _\ell }\left( {1 - {{\hat \alpha }_{\rm{c}}}} \right)\bar \gamma  + 1}}{{{\kappa _\ell }{{\hat \alpha }_{\rm{c}}}\bar \gamma }}}}\left( {1 - \frac{{{{\hat \alpha }_{\rm{c}}}}}{{{\alpha _{\rm{c}}}}} - \frac{{\left( {2 - {\alpha _{\rm{c}}}} \right){\kappa _\ell }\bar \gamma  + 1}}{{\left( {1 - {{\hat \alpha }_{\rm{c}}}} \right){\kappa _\ell }\bar \gamma  + 1}}} \right) + 2{\Re _{{\rm{c}},\ell }}\left( {{{\hat \tau }_\ell },{{\hat \alpha }_{\rm{c}}}} \right) - \frac{{{{\hat \tau }_\ell }}}{{{\tau _\ell }}}{\Re _{{\rm{c}},\ell }}\left( {{{\hat \tau }_\ell },{{\hat \alpha }_{\rm{c}}}} \right) \buildrel \Delta \over = {{\hat \Re }_{{\rm{c}},\ell }}\left( {{\tau _\ell },{\alpha _{\rm{c}}}} \right)$.
Subsequently, based on Jensen's inequality, we establish that
$\color{black} \mathbb{E}\left\{ {{{\hat \Re }_{{\rm{c}},\ell }}\left( {{\tau _\ell },{\alpha _{\rm{c}}}} \right)} \right\}
\mathop  \ge \limits^{\left( h \right)} \frac{{{{\hat \tau }_\ell }/\ln \left( 2 \right)}}{{1 + \frac{{{{\bar \kappa }_\ell }\left( {1 - {{\hat \alpha }_{\rm{c}}}} \right)\bar \gamma  + 1}}{{{{\bar \kappa }_\ell }{{\hat \alpha }_{\rm{c}}}\bar \gamma }}}}\left( {2 - \frac{{{{\hat \alpha }_{\rm{c}}}}}{{{\alpha _{\rm{c}}}}} - \frac{{\left( {1 - {\alpha _{\rm{c}}}} \right){{\bar \kappa }_\ell }\bar \gamma  + 1}}{{\left( {1 - {{\hat \alpha }_{\rm{c}}}} \right){{\bar \kappa }_\ell }\bar \gamma  + 1}}} \right)
+ 2{{\bar \Re }_{{\rm{c}},\ell }}\left( {{{\hat \tau }_\ell },{{\hat \alpha }_{\rm{c}}}} \right) - \frac{{{{\hat \tau }_\ell }}}{{{\tau _\ell }}}{{\bar \Re }_{{\rm{c}},\ell }}\left( {{{\hat \tau }_\ell },{{\hat \alpha }_{\rm{c}}}} \right) \buildrel \Delta \over = {{\tilde \Re }_{{\rm{c}},\ell }}\left( {{\tau _\ell },{\alpha _{\rm{c}}}} \right)$,
where 
${{\bar \Re }_{{\rm{c}},\ell }}\left( {{{\hat \tau }_\ell },{{\hat \alpha }_{\rm{c}}}} \right) = {{\hat \tau }_\ell }{\log _2}\left( {1 + \frac{{{{\bar \kappa }_\ell }{{\hat \alpha }_{\rm{c}}}\bar \gamma }}{{{{\bar \kappa }_\ell }(1 - {{\hat \alpha }_{\rm{c}}})\bar \gamma  + 1}}} \right)$
and
${{\bar \kappa }_\ell } = \mathbb{E}\left\{ {{\kappa _\ell }} \right\} = \Omega _\ell ^2\Gamma \left( {{\xi _\ell } + 2} \right)/\Gamma \left( {{\xi _\ell }} \right)$.

For the convexification of $\mathbb{E}\left\{ {{\Re _\ell }} \right\}$, we proceed that
\begin{align}\label{eq:51}
    \mathbb{E}\left\{ {{\Re _\ell }\left( {{\alpha _\ell },{\alpha _{\ell '}}} \right)} \right\}
    \mathop  \ge \limits^{\left( h \right)} \mathbb{E}\left\{ {{{\hat \Re }_\ell }\left( {{\alpha _\ell },{\alpha _{\ell '}}} \right)} \right\}
    \mathop  \ge \limits^{\left( i \right)} {{\tilde \Re }_\ell }\left( {{\alpha _\ell },{\alpha _{\ell '}}} \right),
\end{align}
where steps $(h)$ and $(i)$ are based on \cite[Eq.~(34)]{vu2024enhancing} and the Jensen’s inequality, respectively.
Specifically, using the inequality in \cite[Eq.~(34)]{vu2024enhancing} for ${\Re _\ell }\left( {{\alpha _\ell },{\alpha _{\ell '}}} \right)$ at feasible points $\left( {{{\hat \alpha }_\ell },{{\hat \alpha }_{\ell '}}} \right)$ yields
$\color{black} {\Re _\ell }\left( {{\alpha _\ell },{\alpha _{\ell '}}} \right) \! \mathop  \ge \limits^{\left( h \right)} \! \frac{{{{\hat \tau }_\ell }/\ln \left( 2 \right)}}{{1 + \frac{{{\kappa _\ell }{{\hat \alpha }_{\ell '}}\bar \gamma  + 1}}{{{\kappa _\ell }{\alpha _\ell }\bar \gamma }}}}\big( {2 \!-\! \frac{{{{\hat \alpha }_\ell }}}{{{\alpha _\ell }}} \!-\! \frac{{{\kappa _\ell }{\alpha _{\ell '}}\bar \gamma  + 1}}{{{\kappa _\ell }{{\hat \alpha }_{\ell '}}\bar \gamma  + 1}}} \big) + {\Re _\ell }\left( {{{\hat \alpha }_\ell },{{\hat \alpha }_{\ell '}}} \right) \buildrel \Delta \over = {{\hat \Re }_\ell }\left( {{\alpha _\ell },{\alpha _{\ell '}}} \right)$.
Subsequently, based on Jensen's inequality, we have
$\color{black}\mathbb{E}\left\{ {{{\hat \Re }_\ell }\left( {{\alpha _\ell },{\alpha _{\ell '}}} \right)} \right\}
\mathop  \ge \limits^{\left( i \right)} \frac{{{{\hat \tau }_\ell }/\ln \left( 2 \right)}}{{1 + \frac{{{{\bar \kappa }_\ell }{{\hat \alpha }_{\ell '}}\bar \gamma  + 1}}{{{{\bar \kappa }_\ell }{\alpha _\ell }\bar \gamma }}}}\left( {2 - \frac{{{{\hat \alpha }_\ell }}}{{{\alpha _\ell }}} - \frac{{{{\bar \kappa }_\ell }{\alpha _{\ell '}}\bar \gamma  + 1}}{{{{\bar \kappa }_\ell }{{\hat \alpha }_{\ell '}}\bar \gamma  + 1}}} \right) + {\log _2}\left( {1 + \frac{{{{\bar \kappa }_\ell }{{\hat \alpha }_\ell }\bar \gamma }}{{{{\bar \kappa }_\ell }{{\hat \alpha }_{\ell '}}\bar \gamma  + 1}}} \right) \buildrel \Delta \over = {{\tilde \Re }_\ell }\left( {{\alpha _\ell },{\alpha _{\ell '}}} \right)$.

From \eqref{eq:48} and \eqref{eq:51}, the problem in \eqref{eq:EC_Opt1} can be equivalently reformulated as
\begin{subequations}\label{eq:EC_Opt2}
\begin{align}    
    \label{eq:EC_Opt2_a}
    &\mathop {\max }\limits_{{\bm{\alpha }},{\bm{\tau }},\omega}\, \omega   \\ 
    \label{eq:EC_Opt2_b}
    \rm{s.t.} \quad
    & \color{black} \eqref{eq:Original_EC_Opt_b}, {{\tilde \Re }_{{\rm{c}},\ell }}\left( {{\tau _\ell },{\alpha _{\rm{c}}}} \right) + {{\tilde \Re }_\ell }\left( {{\alpha _\ell },{\alpha _{\ell '}}} \right) \ge \omega ,\forall \ell,\\
    \label{eq:EC_Opt2_c}
    &\color{black}{{\tilde \Re }_{{\rm{c}},\ell }}\left( {{\tau _\ell },{\alpha _{\rm{c}}}} \right) \ge R_{\rm{c}}^{{\rm{th}}}, {{\tilde \Re }_\ell }\left( {{\alpha _\ell },{\alpha _{\ell '}}} \right) \ge R_\ell ^{{\rm{th}}},\forall \ell, 
\end{align}
\end{subequations}
Now, \eqref{eq:EC_Opt2} becomes a convex optimization problem that can be solved using CVX with an SDP solver.
When \eqref{eq:EC_Opt2} is a maximization problem bounded above by constraints \eqref{eq:Original_EC_Opt_b}, the solver is guaranteed to converge.
The computational complexity of solving the capacity-fairness optimization is ${\cal C}_3 = {\cal O}(n_3^{3.5} \ln(1/\epsilon_3) )$ FLOPs,
where
$n_3 = 6$ represents the number of variables
in the CVX solver and $\epsilon_3 = 10^{-6}$ denotes the convergence tolerance.

\begin{figure*}[!t]
\centering
\includegraphics[width=\linewidth]{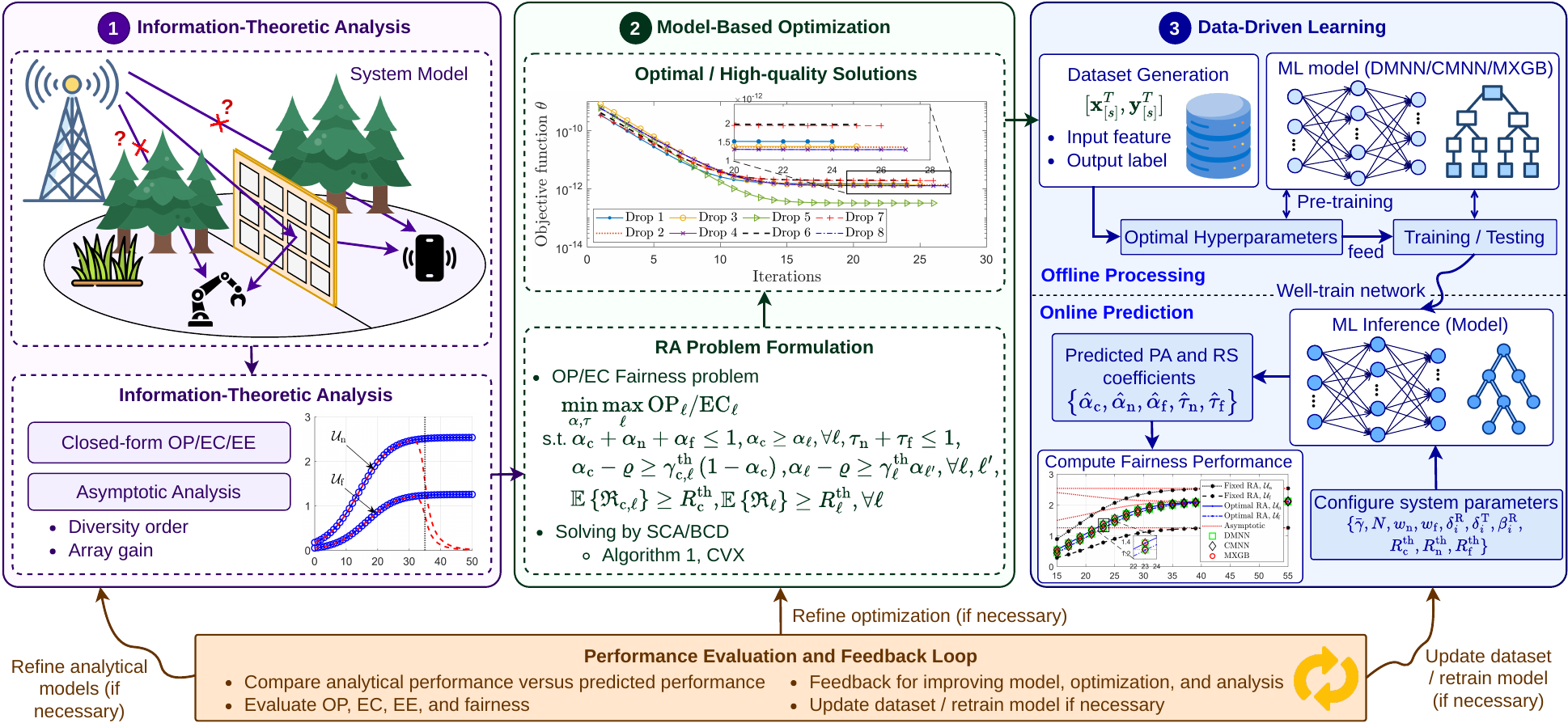}
\caption{\textcolor{black}{Unified workflow illustrating the interaction among the information-theoretic analysis (Section~\ref{Sect:Analytical}), the model-based optimization (Section~\ref{Sect:Model_based_RA}), and the data-driven learning (Section~\ref{Sect:Data-Driven}) frameworks for fairness-oriented RA in active STAR-RIS-assisted RSMA IoT systems.}}
\vspace{-0.25cm}
\label{Fig:ML_App}
\end{figure*}

\begin{remark}\label{Remark:CapacityFairness_K}
\textcolor{black}{The capacity-fairness problem \eqref{eq:Original_EC_Opt} generalizes to $K$ users along the grouping structure of Remark~\ref{Remark:MultiUser}, with objective $\max_{{\bm{\alpha}},{\bm{\tau}}} \min_{k} {\rm{EC}}_k$ subject to the PA/RS simplices $\sum_{k}\alpha_k+\alpha_{\rm{c}}\le 1$ and $\sum_{k}\tau_k\le 1$, the order constraints $\alpha_{\rm{c}}\ge\alpha_k$, and per-user QoS constraints. When users are served in near/far pairs on orthogonal resources, each group is exactly \eqref{eq:Original_EC_Opt}, so the reformulation \eqref{eq:46} and the SCA surrogates \eqref{eq:48}--\eqref{eq:51} apply per group unchanged, the convexified problem \eqref{eq:EC_Opt2} stays convex, and the BCD/SCA convergence is preserved; the per-group decomposition keeps the cubic CVX cost manageable as $K$ grows. A full single-cell $K$-user extension, where one common stream is shared by all users and each private rate sees interference from the other $K-1$ streams, requires re-derived surrogate bounds and joint clustering/SIC-ordering optimization, and is left as future work (Section~\ref{Sect:FutureWork}).}
\end{remark}


\section{Data-Driven Resource Allocation Framework}
\label{Sect:Data-Driven}


\subsection{ML-Based Resource Allocation Applications}
\label{Sect:ML_App}
ML has recently emerged as a powerful tool for enabling real-time decision-making in wireless communication systems, particularly in scenarios where model-based optimization methods incur prohibitive computational complexity. In STAR-RIS-assisted RSMA IoT systems, the fairness-oriented RA problems formulated in Section~\ref{Sect:Model_based_RA} require iterative convex approximations and BCD procedures, which result in non-negligible execution latency and hinder their real-time configurations.
To address this limitation, this work adopts a data-driven ML framework to approximate the optimal RA policy learned from model-based optimization.\footnote{\textcolor{black}{We deliberately cast RA as a \emph{supervised} regression that imitates the optimization-based ground-truth labels, which yields millisecond inference without the thousands of training episodes and the training instability of DRL \cite{maghrebi2024deep,amiri2025resource}. DRL and graph neural networks (GNNs) address a different regime, namely label-free online policy learning and topology-varying graphs, and are thus not directly comparable to our label-imitation regressors; extending the framework with GNNs for topology-varying massive multi-user deployments is left as future work (Section~\ref{Sect:FutureWork}).}}
Instead of repeatedly solving computationally intensive optimization problems online, ML models are trained offline to capture the nonlinear mapping between system parameters and the optimal RA configuration. Once trained, the ML models can instantaneously infer the RA solution with significantly reduced computational overhead, enabling real-time implementation.

\textcolor{black}{As illustrated in Fig.~\ref{Fig:ML_App}, the three proposed frameworks operate in an interconnected pipeline. The information-theoretic analysis in Section~\ref{Sect:Analytical} produces the closed-form OP/EC and asymptotic metrics, which serve both as the objectives/constraints of the model-based optimization and as the post-prediction fairness evaluator. The model-based optimization in Section~\ref{Sect:Model_based_RA} consumes these analytical expressions to solve the OP/EC fairness problems via SCA and BCD, thereby generating the ground-truth RA labels. The data-driven learning in Section~\ref{Sect:Data-Driven} then uses these labels to train the DMNN/CMNN/MXGB models for real-time inference, whose predicted PA/RS coefficients are finally evaluated through the analytical OP/EC expressions, closing the loop among analysis, optimization, and learning.}
Within this pipeline, the proposed ML framework operates in two distinct phases:
    
    \textbf{1) Offline processing}: In this phase, large-scale datasets are first generated using the model-based RA solutions developed in Section~\ref{Sect:Model_based_RA}. These labeled datasets are then used to train and optimize ML models capable of accurately approximating the optimal RA mapping.
    To enhance learning performance, a hyperparameter optimization (HPO) procedure is conducted prior to training. Specifically, a random search-based strategy \cite{bergstra2011algorithms} is adopted to explore the predefined hyperparameter space, where candidate hyperparameters are randomly initialized and iteratively refined through stepwise searches. The objective of the HPO process is to minimize the mean squared error (MSE) between the predicted and optimal RA outputs, and the procedure terminates once a predefined stopping criterion is satisfied. This offline training and optimization phase is computationally intensive but can be executed without real-time constraints on general-purpose computing platforms.
    
    \textbf{2) Online prediction}: In the online phase, the well-trained ML models are deployed at the network controller or base station to enable real-time RA decisions. Given the instantaneous system configuration, the ML inference module directly outputs near-optimal PA and RS coefficients with negligible computational latency. This significantly reduces the execution time compared to model-based optimization approaches and enables practical real-time implementation while maintaining fairness performance.

It is worth clarifying that the fairness performance is evaluated after predicting the PA and RS coefficients, rather than being directly predicted by the ML models. This is because the OP fairness values are extremely small in magnitude (e.g., on the order of $10^{-4}$ or smaller), and direct prediction would cause these values to be approximated as zeros, leading to unreliable OP fairness evaluation.

\subsection{Offline Processing}

\subsubsection{Dataset Generation}
To train the ML models for real-time RA under outage and capacity fairness optimization, two large-scale supervised datasets are constructed using the model-based fairness-oriented RA solutions developed in Section~\ref{Sect:Model_based_RA}. 
Specifically, the optimal PA and RS coefficients obtained by solving \eqref{eq:Original_OP_Opt} via Algorithm~\ref{Algo:Overall_algorithm} in Section~\ref{Sect:OutageFairness} are treated as ground-truth labels for the first dataset.
Meanwhile, the labels for the second dataset are generated from the optimal PA and RS coefficients obtained by solving \eqref{eq:EC_Opt2} in Section~\ref{Sect:CapacityFairness} using CVX.

For each data sample, the input feature vector is randomly generated within predefined ranges that reflect practical deployment scenarios. Specifically,
$\bar \gamma \in [0,50]$~dB,
$N \in [5,100]$,
$\{ w_{\rm{n}},w_{\rm{f}} \} \in \{ 0,1 \}$,
$\{ \delta_i^{\rm{R}},\delta_i^{\rm{T}} \} \in [0,50]$~dB,
$\beta_i^{\rm{R}} \in[0.5,1]$,
and
$\{ R_{\rm{c}}^{\rm{th}},R_{\ell}^{\rm{th}} \} \in [0.3, 1.5]$~bps/Hz.
Each data sample is represented as a row vector pair $[{\bf{x}}_{[s]}^T,{\bf{y}}_{[s]}^T]$,
where 
${\bf{x}}_s = [\bar\gamma, N, w_{\rm{n}},w_{\rm{f}}, \delta_i^{\rm{R}},\delta_i^{\rm{T}}, \beta_i^{\rm{R}},R_{\rm{c}}^{\rm{th}},R_{\rm{n}}^{\rm{th}},R_{\rm{f}}^{\rm{th}} ]^T$
and
${\bf{y}}_s = [{{\alpha _{\rm{c}}},{\alpha _{\rm{n}}},{\alpha _{\rm{f}}}},{{\tau _{\rm{n}}},{\tau _{\rm{f}}}}]^T$ denote the input feature and output label vectors of the data sample $s$, respectively.
The dataset contains 500000 samples, with 80\% used for training and the remaining 20\% reserved for testing.

Prior to training, both the input features and output labels are normalized to the interval $[0,1]$ to enhance numerical stability, accelerate convergence, and mitigate overfitting.

\subsubsection{Description of Machine Learning Models}

\begin{figure*}[!t]
\centering
\begin{subfigure}{.41\linewidth}
  \centering
  \fbox{\includegraphics[width=\linewidth]{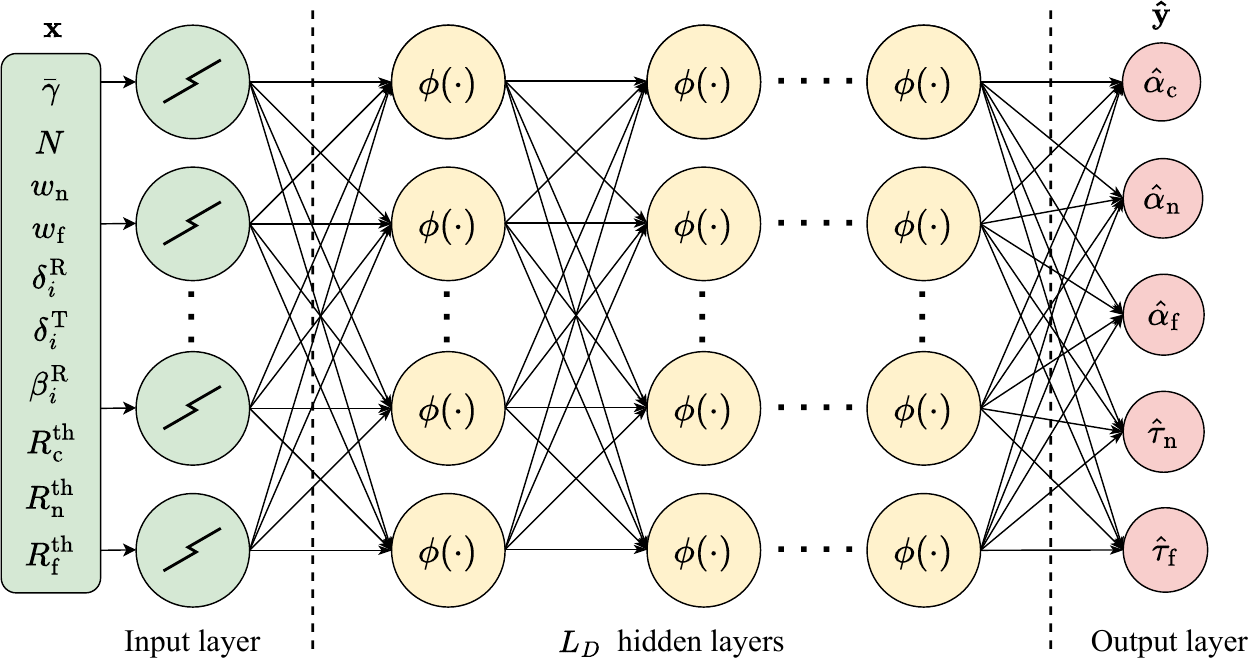}}
  \subcaption{DMNN model}
\end{subfigure}
\hspace{0.15cm}
\begin{subfigure}{.55\linewidth}
  \centering
  \fbox{\includegraphics[width=\linewidth]{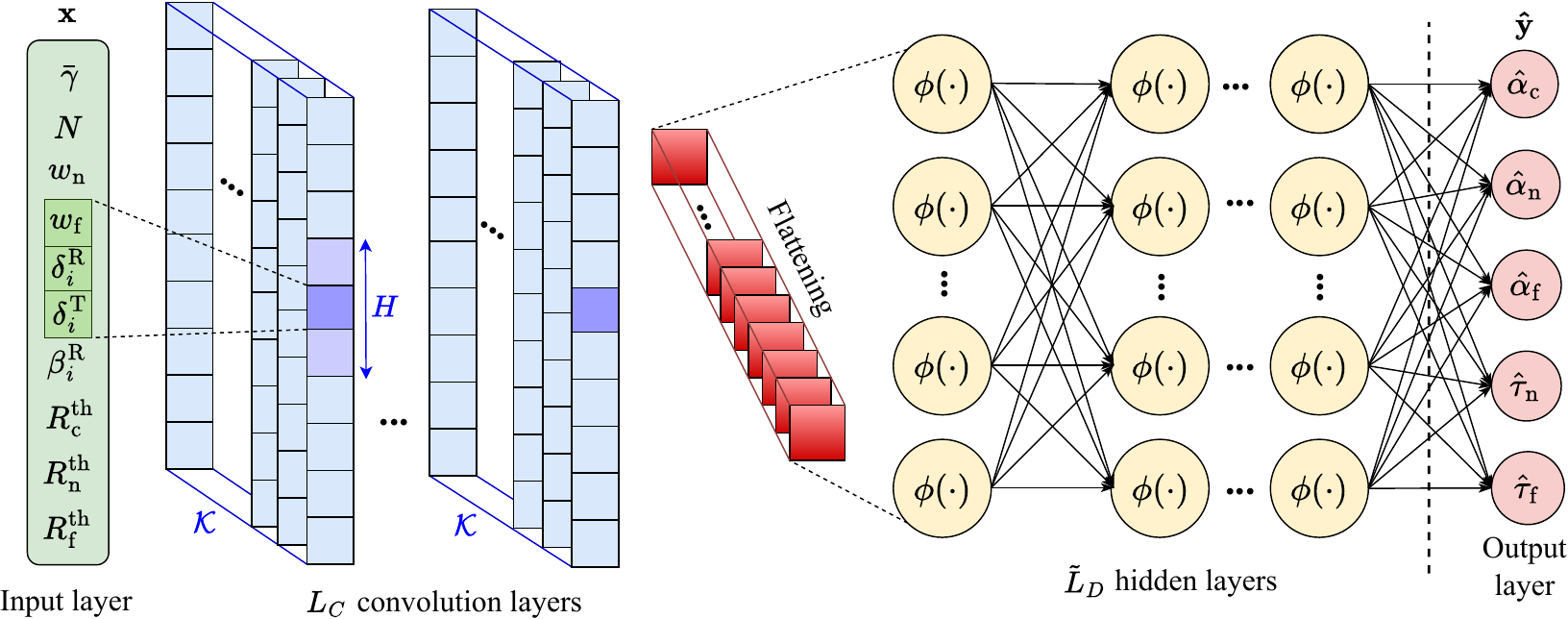}}
  \subcaption{CMNN model}
\end{subfigure}
\begin{subfigure}{.265\linewidth}
\vspace{0.25cm}
  \centering
  \fbox{\includegraphics[width=\linewidth]{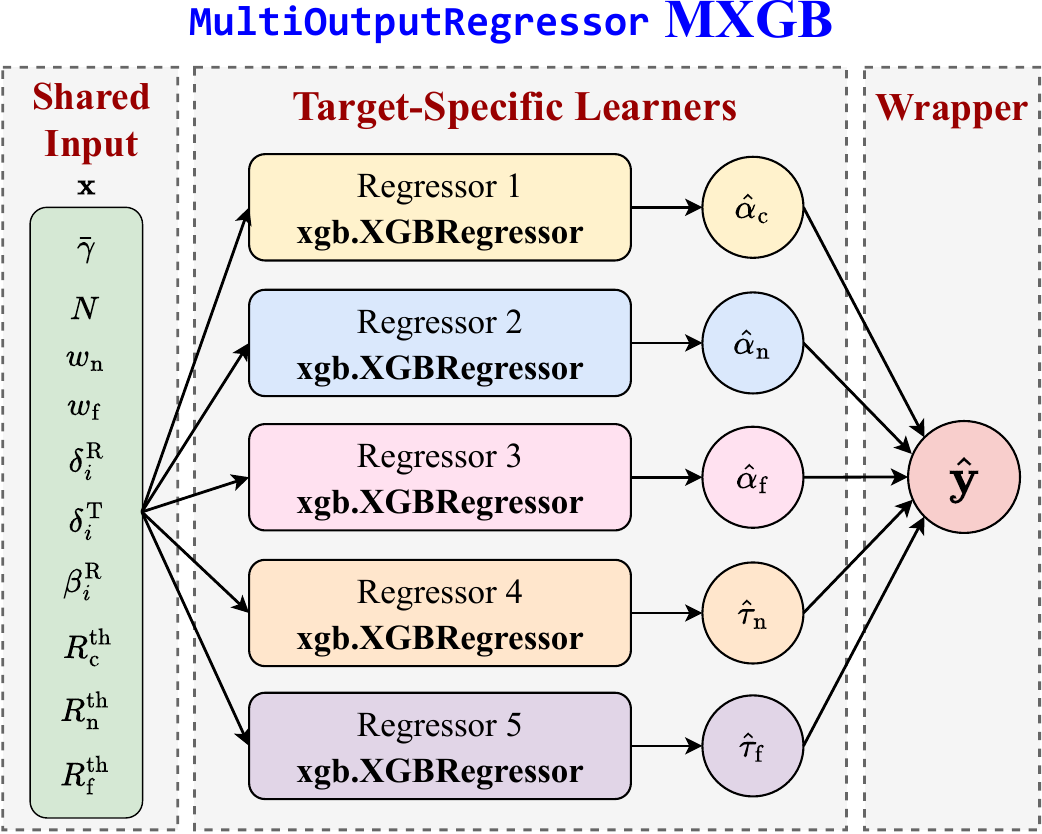}}
  \subcaption{MXGB model}
\end{subfigure}
\hspace{0.15cm}
\begin{subfigure}{.705\linewidth}
  \centering
  \fbox{\includegraphics[width=\linewidth]{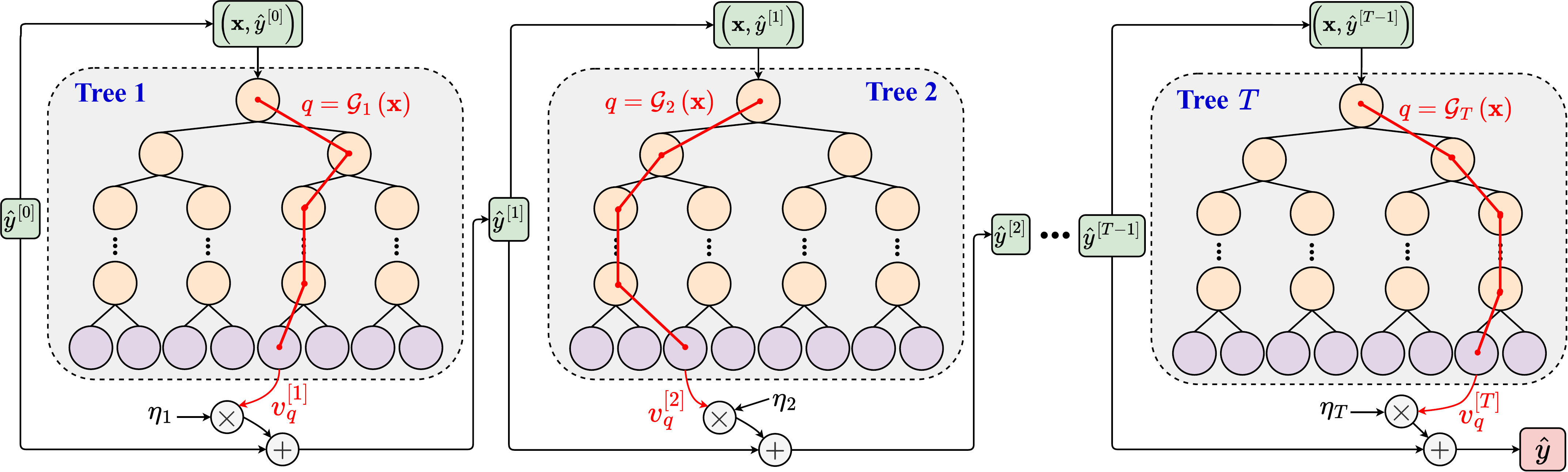}}
  \subcaption{Single tree-based regressor model}
\end{subfigure}
\caption{Illustrations of (a) the DMNN, (b) the CMNN, (c) the MXGB, and (d) the single-target XGBoost model.}
\label{Fig:ML_Models}
\end{figure*}

\paragraph{DMNN} 
The DMNN extends traditional deep neural networks by jointly learning multiple output variables within a single model.
As shown in Fig.~\ref{Fig:ML_Models}(a), the DMNN employed in this work is a feedforward architecture designed to learn the nonlinear mapping between system-level inputs and the optimal RA parameters.
The network consists of three main components: an input layer with $M_0 = 10$ neurons, $L_D$ fully connected layers (FCLs) with $M_l$ hidden neurons at layer $l$, and an output layer with $M_{L_D+1} = 5$ neurons.
Each hidden layer performs an affine transformation followed by a nonlinear activation. Specifically, the output of the 
$l$-th hidden layer is given by
$\color{black} {{\bf{x}}^{\left[ l \right]}} = \phi \left( {{\bf{W}}_D^{\left[ l \right]}{{\bf{x}}^{\left[ {l - 1} \right]}} + {\bf{b}}_D^{\left[ l \right]}} \right),\quad \forall l \in \{ 1,...,{L_D+1} \}$,
where ${{\bf{W}}_D^{\left[ l \right]}} \in \mathbb{R}^{M_l \times M_{l-1}}$
and
${{\bf{b}}_D^{\left[ l \right]}} \in \mathbb{R}^{M_l \times 1}$
denote the weight matrix and bias vector of the $l$-th layer, respectively, 
${{\bf{x}}^{\left[ 0 \right]}} \in \mathbb{R}^{M_l \times 1}$ is the initial input,
${\bf{\hat y}} = {{\bf{x}}^{\left[ {{L_D} + 1} \right]}}$ is the predicted output,
and
$\phi(\cdot)$ represents the scaled exponential linear unit (SELU) activation function \cite[Eq.~(15)]{clevert2015fast}.
The SELU activation is adopted due to its self-normalizing property, which improves training stability and mitigates vanishing-gradient issues \cite{clevert2015fast}.
Network parameters are optimized using the adaptive moment estimation (Adam) algorithm \cite{kingma2014adam} through backpropagation to minimize the mean squared error (MSE) between the predicted outputs and the ground-truth labels obtained from the model-based optimization framework. 
The loss function of this supervised regression problem is defined as
$\color{black} {\cal L} = \frac{1}{{5S}}\sum\nolimits_{s = 1}^S {\left\| {{{{\bf{\hat y}}}_s} - {{\bf{y}}_s}} \right\|_2^2}$,
where $S$ denotes the number of training samples.

\paragraph{CMNN} \label{Sect:CMNN}
Similar to the DMNN, the CMNN extends conventional convolutional neural networks to the considered RA task.
As illustrated in Fig.~\ref{Fig:ML_Models}(b), the CMNN is designed to exploit local feature correlations among the system parameters while maintaining low inference complexity.
The CMNN architecture consists of a single input layer,
$L_C$ convolutional layers (CLs), one flattening layer, 
$\tilde L_D$ FCLs, and a final output layer that jointly predicts the RA variables.
In the $l$-th CL, $\cal K$ convolutional filters with kernel size $H \times 1$ are applied to the input feature representation to extract structured characteristics of the system state. The output of the $k$-th filter channel at layer $l$ is given by
$\color{black} {{\bf{Z}}^{\left[ {l,k} \right]}} = \phi \left( {\sum\nolimits_{k' = 1}^{\cal K} {{\bf{W}}_C^{\left[ {l,k,k'} \right]} * {{\bf{X}}^{\left[ {l - 1,k'} \right]}}}  + {{b}}_C^{\left[ {l,k} \right]}} \right)$,
where $*$ denotes the convolution operator,
${{\bf{W}}_C^{\left[ {l,k} \right]}}$ and ${{{b}}_C^{\left[ {l,k} \right]}}$ represent the convolution kernel and the scalar bias associated with the $k$-th filter of CL $l$, respectively,
and ${{{\bf{X}}^{\left[ {l - 1} \right]}}}$
denotes the input feature map to the $l$-th layer, with ${{{\bf{X}}^{\left[ {0} \right]}}}$ being the initial input.
%
After completing feature extraction through all $L_C$ CLs, the resulting feature maps are reshaped into a one-dimensional vector via a flattening operation.
The flattened feature vector is subsequently processed by $\tilde L_D$ fully connected layers, which follow the same architectural design and activation strategy as those used in the DMNN model. 

\color{black}
\begin{lemma}\label{Lemma:2}
    It is worth clarifying that the input feature vector is low-dimensional tabular data rather than image-like data. The CMNN is therefore included as a representative convolutional baseline to probe whether local correlations among the coupled system features can be exploited and to enable a fair architectural comparison against the DMNN and MXGB. To accommodate the small input dimensionality, single-stride convolutions with zero padding are adopted to preserve feature resolution and avoid information loss. As shown in Section~\ref{Sect:Results}, the tree-based MXGB outperforms both neural architectures in the accuracy--latency tradeoff, which is consistent with the known superiority of gradient-boosted trees on structured tabular data~\cite{shwartz2022tabular}.
\end{lemma}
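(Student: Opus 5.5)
The statement in Lemma~\ref{Lemma:2} is a design clarification rather than a mathematical claim, so the plan is to check each of its three assertions against definitions already fixed in the paper, plus a short dimension count and a pointer to later experiments. First, the tabular nature of the input is read off the dataset construction: each sample uses ${\bf{x}}_s=[\bar\gamma,N,w_{\rm{n}},w_{\rm{f}},\delta_i^{\rm{R}},\delta_i^{\rm{T}},\beta_i^{\rm{R}},R_{\rm{c}}^{\rm{th}},R_{\rm{n}}^{\rm{th}},R_{\rm{f}}^{\rm{th}}]^T$, a $10$-dimensional vector. Its entries are heterogeneous scalars: an SNR, an integer element count, two binary link indicators, gains, an ES coefficient and rate thresholds. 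They have no spatial or grid ordering, in contrast to pixels. This is consistent with the DMNN input layer of $M_0=10$ neurons. Any ``local correlation'' a convolution could exploit therefore exists only through the chosen feature ordering (for example, adjacent $(\delta_i^{\rm{R}},\delta_i^{\rm{T}})$ and adjacent rate thresholds). This is why the CMNN is presented as a probing baseline and not as a structurally motivated model.

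Second, I would justify the resolution-preservation claim by a one-line length count for a $1$-D convolution in the CL equation for ${\bf{Z}}^{[l,k]}$. For input length $M$, kernel $H\times 1$, stride $s$ and zero padding $p$ on each side, the output length is $\lfloor (M+2p-H)/s\rfloor+1$. Choosing $s=1$ and $p=(H-1)/2$ (with $H$ odd) gives exactly $M$. By induction over the $L_C$ CLs, every feature map keeps length $10$, and the flattening layer outputs ${\cal K}\cdot 10$ entries. A strided or unpadded design would instead shrink $M=10$ geometrically or by $H-1$ per layer. After a few layers this would delete features, for instance the binary indicators $w_\ell$ that switch between the direct-link and no-direct-link regimes of Proposition~\ref{proposition:1}. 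This step is routine.

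The third assertion, that MXGB dominates the accuracy--latency tradeoff, cannot be derived analytically from earlier results. I would establish it by forward reference to the comparisons in Section~\ref{Sect:Results}: test MSE on the $20\%$ hold-out set, the fairness evaluated through \eqref{eq:OP_l_final} and \eqref{eq:EC_final}, and execution time. I would then cite~\cite{shwartz2022tabular} for the general explanation. Tree ensembles handle mixed discrete/continuous features and axis-aligned regime switches, such as $w_\ell\in\{0,1\}$ or the feasibility boundaries \eqref{eq:constraint1}--\eqref{eq:constraint2}, without needing smoothness. They also infer in ${\cal O}(\text{depth}\times\text{trees})$ comparisons.

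I expect this empirical step to be the main obstacle, since it is the only part not deducible from definitions. The claim must be kept explicitly conditional on the reported simulation settings rather than stated as a universal property.
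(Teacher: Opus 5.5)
Your proposal is correct and follows the paper's own approach. The paper gives no separate proof of this remark: the descriptive parts rest on the construction of ${\bf{x}}_s$ and the CMNN design, and the empirical part rests on the forward reference to Section~\ref{Sect:Results} and the cited tabular-data study. Your length count also agrees with the paper's CMNN FLOPs expression: taking $\tilde M_1 = M_0{\cal K}$ with $M_0=10$, $H=3$, ${\cal K}=32$ reproduces exactly the reported $206180$ and $237620$ FLOPs.

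One correction to your third step: the paper reports no hold-out test MSE, so you cannot cite it. The accuracy side of the tradeoff rests only on the near-coincident fairness curves in Figs.~\ref{Fig:Optimal_OP} and~\ref{Fig:Optimal_EC}. What is actually shown is that MXGB matches the neural models' accuracy at far lower cost: it has the fewest FLOPs in Table~\ref{Tab:HPO}, and sub-millisecond inference in Table~\ref{table:exe_time} versus tens of milliseconds for DMNN and CMNN. It is not shown to have lower prediction error.
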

\color{black}

\paragraph{MXGB}
The MXGB model is adopted as a tree-based ensemble learning approach for predicting the RA parameters. As illustrated in Fig.~\ref{Fig:ML_Models}(c), MXGB decomposes the multi-output regression task into multiple single-target boosting models that are trained independently and combined through a multi-output wrapper\footnote{It is worth noting that, at the time of writing, the state-of-the-art MXGB model is still under development, and support for training with vector-valued leaves for multi-output regression has not yet been implemented \cite{XGB_multioutput_regression}. Consequently, we adopt a one-model-per-target XGBoost approach using \texttt{MultiOutputRegressor}, which combines multiple single-output models within a multi-output wrapper.}, i.e., \texttt{MultiOutputRegressor} in the scikit-learn library \cite{MultiOutputRegressor_sklearn}. This design enables efficient learning while preserving strong prediction accuracy, particularly for the tabular data.
As shown in Fig.~\ref{Fig:ML_Models}(d), each regressor consists of $T$ decision trees constructed sequentially. For simplicity, all trees are configured with identical depth $\cal D$, resulting in $L_X = 2^{\cal D}$ terminal leaves per tree.
For the $t$-th tree, each leaf node stores a real-valued prediction weight denoted by $v_q^{[t]}$, where $q \in \{ 1,...,L_X \}$.
Given an input feature vector $\bf{x}$, a deterministic decision rule ${\cal G}_t({\bf{x}})$ maps the input to a specific leaf index $q$ in the $t$-th tree. The output of the ensemble after the $t$-th boosting iteration is expressed recursively as
$\color{black} {\hat y^{\left[ t \right]}} = {\hat y^{\left[ {t - 1} \right]}} + {\eta _t}v_q^{\left[ t \right]}, y \in \{ {{\alpha _{\rm{c}}},{\alpha _{\rm{n}}},{\alpha _{\rm{f}}}},{{\tau _{\rm{n}}},{\tau _{\rm{f}}}} \}$,
where ${{\hat y}^{\left[ 0 \right]}}$ is initialized as the empirical mean of the training labels and
$\eta_t$ is a regularization coefficient controlling the contribution of the $t$-th tree.
The regularized MSE loss function at the $t$-th tree is given by
\begin{align*}
    \resizebox{\linewidth}{!}{
    ${{\cal L}_t} = \sum\limits_{q = 1}^{{L_X}} {\left\{ {\sum\limits_{s \in {\cal S}_q^{\left[ t \right]}} {\frac{2}{S}\left( {\hat y_s^{\left[ {t - 1} \right]} - {y_s}} \right)v_q^{\left[ t \right]}}  + \left( {\sum\limits_{s \in {\cal S}_q^{\left[ t \right]}} {\frac{1}{S}}  + \frac{\eta _t}{2}} \right){{\left( {v_q^{\left[ t \right]}} \right)}^2}} \right\}},$
    }
\end{align*}
where ${{\cal S}_q^{\left[ t \right]}}$ denotes the set of training samples routed to leaf $q$ in tree $t$.
After completing all $T$ boosting iterations, the final MXGB prediction for each output component is given by
$\color{black} \hat y = {{\hat y}^{\left[ T \right]}} = {{\hat y}^{\left[ 0 \right]}} + \sum\nolimits_{t = 1}^T {{\eta _t}v_q^{\left[ t \right]}}$.

\subsection{Real-Time Inference Phase}
\label{Sect:RealTime}

After completing the offline training stage, the trained ML models are deployed for real-time RA inference. Given a new system configuration represented by the input feature vector $\mathbf{x}$, the RA parameters are directly obtained through a learned mapping
$\hat{\bf{y}} = \mathcal{F}(\mathbf{x})$,
where $\mathcal{F}(\cdot)$ represents the inference function implemented by the trained DMNN, CMNN, or MXGB model.

By replacing iterative optimization with a single forward inference, the proposed ML-based framework significantly reduces computational complexity and execution latency. Owing to its low inference cost, the real-time prediction phase can be efficiently implemented at network nodes, enabling fast and adaptive RA decisions in the variety of system configurations.
\textcolor{black}{We clarify that the reported millisecond-scale figures reflect the algorithmic inference latency measured in simulation and thus indicate real-time capability rather than a fully validated real-time deployment. They do not yet account for the STAR-RIS control and reconfiguration latency, the fronthaul/control signaling, or the RF front-end effects. Establishing end-to-end real-time operation would require over-the-air, software-defined-radio, or hardware-emulation validation together with a realistic characterization of the STAR-RIS control latency, which we identify as important future work (Section~\ref{Sect:FutureWork}).}



\vspace{-0.25cm}
\subsection{Computational Complexity Analysis}\label{Sect:IVD}

The computational complexity of the proposed ML-based RA framework is measured in terms of the number of floating-point operations (FLOPs) required during the inference stage. Since real-time operation is dominated by forward inference rather than offline training, we focus on the inference complexity of each ML model.

For the DMNN, the inference complexity is dominated by the FCLs.
Each connection between layer $l-1$ and layer $l$ incurs two FLOPs: one multiplication and one addition accumulated into the neuron pre-activation.
Consequently, the total number of FLOPs required for a forward pass is given by 
$\color{black}{\mathcal{O}}_{D} = 2 \sum\nolimits_{l=1}^{L_D+1} M_{l-1} M_l$.

For the CMNN, the inference complexity consists of convolutional, flattening, and fully connected operations.
The first CL requires $2M_0 H {\cal K}$ FLOPs, while each subsequent CL incurs $2M_0 H {\cal K}^2$ FLOPs. 
After feature extraction through $L_C$ CLs, the flattening operation requires $M_0 {\cal K}$ FLOPs. The FCLs then contribute $2 \sum_{l=2}^{\tilde{L}_D+1} \tilde{M}_{l-1}\tilde{M}_l$ FLOPs, where 
$\tilde M_l$ denotes the number of hidden neurons in FCL $l$,
$\tilde M_1 = M_0 {\cal K}$, 
and $\tilde M_{\tilde L_D+1} = 5$. 
Consequently, the overall complexity of the CMNN is expressed as
$\color{black}{{\cal O}_{\mathrm{CMNN}}} = 2{M_0}H{\cal K} + \left( {{L_C} - 1} \right)2{M_0}H{{\cal K}^2} + {M_0}{\cal K} 
+ 2\sum\nolimits_{l = 2}^{{{\tilde L}_D} + 1} {{{\tilde M}_{l - 1}}{{\tilde M}_l}}$.

For the MXGB, the inference complexity is dominated by node comparisons in the decision trees.
Since the MXGB predicts five RA parameters independently, the total inference complexity is computed as
$\color{black}\mathcal{O}_{\mathrm{MXGB}} = 5\left( {\sum\nolimits_{l = 1}^{{T}} {\left( {{\cal D} - 1} \right)}  + 2{T}} \right) =5 T ({\cal D} + 1)$.

\textcolor{black}{While the inference FLOPs above are the metric relevant to real-time deployment, we additionally outline the offline-pipeline cost for completeness. \emph{Dataset generation} is the dominant offline cost: producing the $S=500{,}000$ labeled samples requires solving the model-based problems, i.e., ${\cal O}\big(S\,I_0({\cal C}_1+{\cal C}_2)\big)$ operations for the outage-fairness dataset via Algorithm~\ref{Algo:Overall_algorithm} and ${\cal O}(S\,{\cal C}_3)$ for the capacity-fairness dataset; this is a one-time, fully parallelizable expenditure. \emph{Training} costs ${\cal O}(N_{\rm ep}\,S\,{\cal O}_{\rm model})$ for the neural models, where $N_{\rm ep}$ is the number of epochs and a forward-backward pass is about three times the inference FLOPs per sample, and ${\cal O}(T\,S\log S)$ per output for the gradient-boosted MXGB. \emph{Memory} scales with the parameter/leaf counts, i.e., $\sum_l M_{l-1}M_l$ (DMNN), the convolutional and FCL parameters (CMNN), and $5T2^{\cal D}$ leaf values (MXGB). The associated \emph{energy overhead} is a one-time offline GPU/CPU expenditure. Crucially, all of these costs are incurred once offline and are amortized over the subsequent millisecond-scale inferences, so they do not affect the real-time operation, which is governed solely by ${\cal O}_D$, ${\cal O}_{\rm CMNN}$, and ${\cal O}_{\rm MXGB}$ above.}

\section{Numerical Results and Discussion}
\label{Sect:Results}


This section presents Monte Carlo simulations to validate the proposed information-theoretic analysis and RA frameworks, from which system-specific insights are discussed.
Unless otherwise stated, the simulation parameters are set as follows.
The network nodes are placed on a \textcolor{black}{two-dimensional} plane with coordinates
${\cal S}(0.05,0.1)$, 
${\cal U}_{\rm{n}}(0,0)$,
${\cal U}_{\rm{f}}(0.25,0)$,
and
${\cal R}(0.1,0.04)$,
where the plane is normalized with a scaling factor of $r=100$~m.
For example, the physical distance between $\cal S$ and $\cal R$ is calculated as
${\rm{d}}_h = r \times \sqrt{(x_{\cal S} - x_{\cal R})^2 + (y_{\cal S} - y_{\cal R})^2}$ \cite{tu2024irs},
where $(x_{\cal Y},y_{\cal Y})$ denotes the coordinates of node ${\cal Y} \in \{ {\cal S}, {\cal R}, {\cal U}_{\rm{n}}, {\cal U}_{\rm{f}} \}$.
The distances ${\rm{d}}_g$, ${\rm{d}}_t$, and ${\rm{d}}_\ell$ are computed using the same approach.
The path-loss exponent factor is set to $\eta = 2.5$ \cite[Table~2.2]{goldsmith2005wireless}.
Additionally, we set 
$w_{\rm{n}} = w_{\rm{f}} = 1$,
$N = 30$,
$\beta_i^{\rm{R}}=0.6$,
$\beta_i^{\rm{T}}=0.4$,
$\delta_i^{\rm{R}}= \delta_i^{\rm{T}}=10$,
$\alpha_{\rm{c}} = 0.7$,  
$\alpha_{\rm{n}} = 0.6(1-\alpha_{\rm{c}})$,
$\alpha_{\rm{f}} = 0.4(1-\alpha_{\rm{c}})$,
$\tau_{\rm{n}} = 0.7$,
$\tau_{\rm{f}} = 0.3$,
$R_{\rm{c}}^{{\rm{th}}} = 0.5$~bps/Hz,
$R_{\rm{n}}^{{\rm{th}}} = 0.4$~bps/Hz,
and
$R_{\rm{f}}^{{\rm{th}}} = 0.6$~bps/Hz \cite{tu2025active}.
%
\textcolor{black}{The hyperparameter settings of the DMNN, CMNN, and MXGB models are summarized in Table~\ref{Tab:Hyperparameters}.}

\begin{table}[!t]
\centering
\caption{\textcolor{black}{Tuning and training hyperparameters of the DMNN, CMNN, and MXGB models.}}
\label{Tab:Hyperparameters}
\color{black}
\footnotesize
\renewcommand{\arraystretch}{1.25}
\setlength{\tabcolsep}{4pt}
\begin{tabular}{|l|l|p{2.4cm}|}
\hline
\textbf{Hyperparameter} & \textbf{Model} & \textbf{Setting / Domain} \\
\hline\hline
\multicolumn{3}{|l|}{\textit{Architecture (tuned via HPO; optimal values in Table~\ref{Tab:HPO})}} \\
\hline
Number of FCLs $\{L_D,\tilde L_D\}$ & DMNN, CMNN & $[1,6]$ \\ \hline
Number of CLs $L_C$ & CMNN & $[1,6]$ \\ \hline
Hidden neurons $\{M_l,\tilde M_l\}$ & DMNN, CMNN & $[25,200]$ \\ \hline
Number of filters ${\cal K}$ & CMNN & $[4,64]$ \\ \hline
Kernel size $H$ & CMNN & $[2,32]$ \\ \hline
Number of trees $T$ / depth ${\cal D}$ & MXGB & $[1,100]$ \\ \hline
\multicolumn{3}{|l|}{\textit{Training configuration}} \\
\hline
Activation & DMNN, CMNN & SELU \\ \hline
Optimizer & DMNN, CMNN & Adam \\ \hline
Initial learning rate & DMNN, CMNN & $10^{-3}$ \\ \hline
LR decay / scheduler & DMNN, CMNN & $\times 0.9$, patience $5$ \\ \hline
Minimum learning rate & DMNN, CMNN & $10^{-4}$ \\ \hline
Early-stopping patience & DMNN, CMNN & $10$ epochs \\ \hline
Learning rate & MXGB & $0.3$ (fixed) \\ \hline
Early stopping & MXGB & Enabled \\ \hline
Loss function & All & MSE \\ \hline
HPO method & All & Random search \\ \hline
Convolution stride / padding & CMNN & $1$ / zero \\ \hline
\multicolumn{3}{|l|}{\textit{Dataset}} \\
\hline
Number of samples & All & $500{,}000$ \\ \hline
Train/test split & All & $80\%/20\%$ \\ \hline
Feature/label normalization & All & $[0,1]$ (min--max) \\ \hline
\end{tabular}
\end{table}

\subsection{Analytical Performance Evaluation}

\subsubsection{Graphical Validation of Assumption~\ref{Assumption:1} and Active-Passive STAR-RIS Comparison}\label{Sect:5.1}

Fig.~\ref{Fig:OP_EC_Assumption1_Validation} illustrates the OP and EC performance versus the average SNR $\bar \gamma$, providing a graphical validation of the STAR-RIS SI mitigation and comparing the active-passive STAR-RIS modes.
Details are discussed below.

\begin{figure}[!t]
\centering
\begin{subfigure}{\linewidth}
  \centering
  \includegraphics[width=\linewidth]{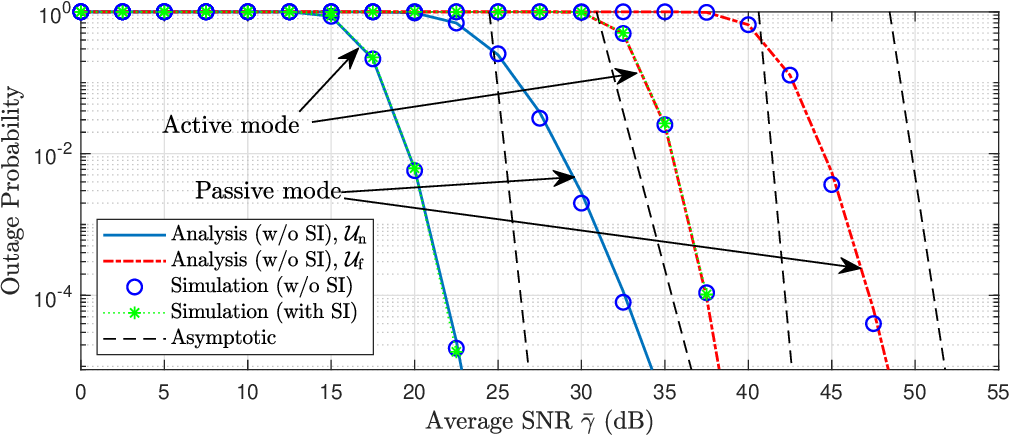}  
  \subcaption{OP performance}
\end{subfigure}
\begin{subfigure}{\linewidth}
\vspace{0.25cm}
  \centering
  \includegraphics[width=\linewidth]{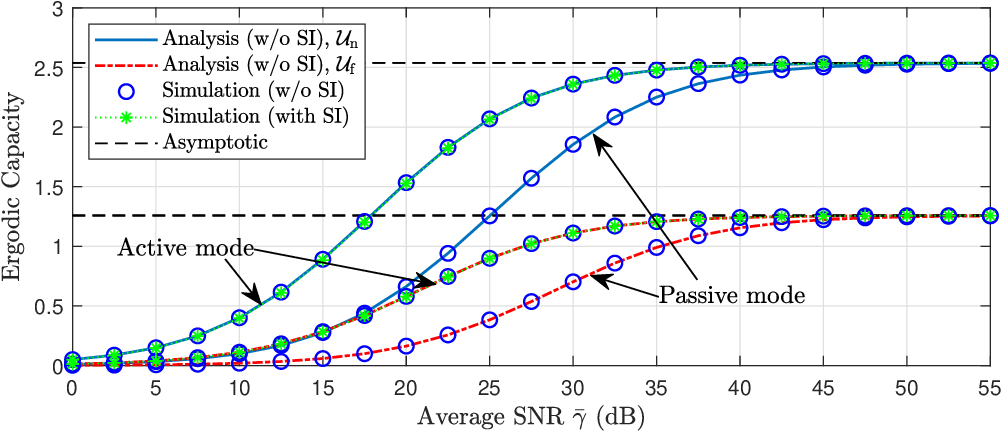}  
  \subcaption{EC performance}
\end{subfigure}
\caption{OP and EC versus average SNR: Graphical validation of Assumption~\ref{Assumption:1} and active-passive STAR-RIS comparison.}
\label{Fig:OP_EC_Assumption1_Validation}
\end{figure}

\paragraph{Graphical Validation of Assumption~\ref{Assumption:1}}
When the STAR-RIS SI signal is included in \eqref{eq:R_cl} and \eqref{eq:R_l}, the achievable rates for decoding ${s_{\rm{c}}}$ and $s_\ell$ at $\cal U_\ell$ become, respectively, as
\begin{align}\label{eq:R_cl_SI}
    {\Re _{{\rm{c}},\ell }^{\rm{SI}}} &= {\tau _\ell }{\log _2}\left( {1 + \frac{{{\kappa _\ell }{\alpha _{\rm{c}}}\bar \gamma }}{{{\kappa _\ell }(1 - {\alpha _{\rm{c}}})\bar \gamma + Y_\ell + 1}}} \right),\\
    \label{eq:R_l_SI}
    {\Re _\ell^{\rm{SI}} } &= {\log _2}\left( {1 + \frac{{{\kappa _\ell }{\alpha _\ell }\bar \gamma }}{{{\kappa _\ell }{\alpha _{\ell '}}\bar \gamma  + Y_\ell + 1}}} \right), \forall \ell' \ne \ell,
\end{align}
where ${Y_{\rm{n}}} = {\left| {{{\bf{g}}^T}{{\bf{\Phi }}_R}{{\bf{n}}_{{\rm{SI}}}}} \right|^2}/{\sigma ^2}$,
${Y_{\rm{f}}} = {\left| {{{\bf{t}}^T}{{\bf{\Phi }}_T}{{\bf{n}}_{{\rm{SI}}}}} \right|^2}/{\sigma ^2}$,
and
${{\bf{n}}_{{\rm{SI}}}} \sim {\cal C}{\cal N}\left( {{\bf{0}},\sigma _{{\rm{SI}}}^2}{\bf{I}}_N \right)$.
The noise power is modeled as $\sigma^2 = -174 + 10\log_{10}({\rm{BW}}) + {\rm{NF}}$, where the noise figure is ${\rm{NF}} = 10$~dB and the system bandwidth is ${\rm{BW}} = 10$ MHz \cite{do2021multi}.
The STAR-RIS SI power is set to ${\sigma _{{\rm{SI}}}^2} = -80$ dBm \cite{vu2025aerial}.
It is observed in Fig.~\ref{Fig:OP_EC_Assumption1_Validation} that, in the active mode, the OP and EC curves with respective to \eqref{eq:R_cl_SI} and \eqref{eq:R_l_SI}, marked as “Simulation (with SI),” closely coincide with those obtained from \eqref{eq:R_cl} and \eqref{eq:R_l} over the entire SNR range.

\color{black}
To complement this graphical validation with a quantitative argument, we bound the rate penalty of neglecting the STAR-RIS SI. Comparing the SI-aware rates \eqref{eq:R_cl_SI}--\eqref{eq:R_l_SI} with \eqref{eq:R_cl}--\eqref{eq:R_l}, the SI-induced rate loss of the common stream at $\mathcal{U}_\ell$ is
\begin{align*}
 \tau_\ell\log_2\!\left(1+
\frac{\kappa_\ell\alpha_c\bar{\gamma}\,Y_\ell}
{I_{0,\ell}(I_{0,\ell}+Y_\ell+\kappa_\ell\alpha_c\bar{\gamma})}\right)
\leq \log_2\!\left(1+\frac{Y_\ell}{I_{0,\ell}}\right),
\end{align*}
where $I_{0,\ell}=\kappa_\ell(1-\alpha_c)\bar{\gamma}+1$ and the inequality uses $\tau_\ell\leq 1$ together with
$I_{0,\ell}+Y_\ell+\kappa_\ell\alpha_c\bar{\gamma}\geq\kappa_\ell\alpha_c\bar{\gamma}$; a similar bound with $I_{0,\ell}$ replaced by $\kappa_\ell\alpha_{\ell'}\bar{\gamma}+1$ holds for the private stream. 
Since both $Y_\ell$ and the cascaded term in $I_{0,\ell}$ are amplified by the same STAR-RIS gain, the ratio $Y_\ell/I_{0,\ell}$ is independent of the amplification gain $\delta$ and the element count $N$, reducing to $Y_\ell/(\kappa_\ell(1-\alpha_c)\bar{\gamma})\sim
\sigma_{\mathrm{SI}}^2/\big((1-\alpha_c)P_S\big)$. 
With $\sigma_{\mathrm{SI}}^2=-80$ dBm and $P_S\geq 30$ dBm, this ratio is on the order of $10^{-11}$, so the SI-induced rate loss is upper bounded on the order of $10^{-11}$ bps/Hz and is thus negligible.
This analytical bound corroborates the curve overlap observed in Fig.~\ref{Fig:OP_EC_Assumption1_Validation} and quantitatively validates Assumption~\ref{Assumption:1} within the practical gain
range. 
The complementary high-gain regime, in which the residual SI is modeled explicitly and amplifier nonlinearity emerges as the dominant impairment, is numerically analyzed in Section~\ref{Sect:Residual_SI}.
\color{black}




\paragraph{Active-Passive STAR-RIS Comparison}
From a system-level perspective,
Fig.~\ref{Fig:OP_EC_Assumption1_Validation} highlights the performance gap between active and passive STAR-RIS modes. Switching from passive to active mode, together with increasing $\bar \gamma$, considerably improves the OP and EC performance of both UEs.
As shown in Fig.~\ref{Fig:OP_EC_Assumption1_Validation}(a), increasing $\bar \gamma$ continues to reduce OP values even in the high-SNR region, which is consistent with the discussion in Remark~\ref{remark:OP}.
In contrast, Fig.~\ref{Fig:OP_EC_Assumption1_Validation}(b) shows that the EC saturates at high SNR and converges to asymptotic EC in \eqref{eq:asymptotic_EC}, regardless of whether the STAR-RIS operates in active or passive mode.
This saturation occurs because, at high SNR, Gaussian noise becomes negligible while both the desired signal and the inter-message interference scale proportionally with the transmit power, resulting in a constant signal-to-interference-plus-noise ratio.
This observation aligns with the analysis presented in Remark~\ref{remark:EC}.
Moreover, the analytical and simulation results exhibit excellent agreement, thereby validating the accuracy of the proposed information-theoretic analysis framework.


\subsubsection{Impact of Direct Links}

\begin{figure}[!t]
\centering
\begin{subfigure}{\linewidth}
  \centering
  \includegraphics[width=\linewidth]{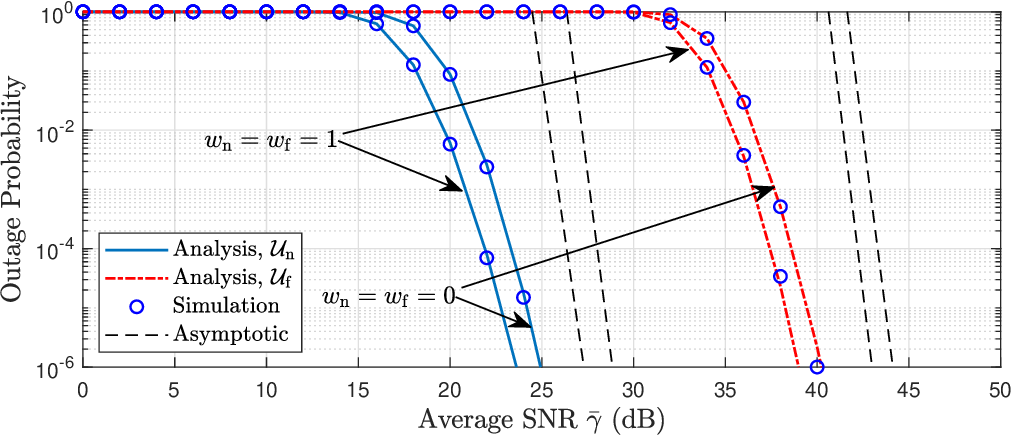}  
  \subcaption{OP performance}
\end{subfigure}
\begin{subfigure}{\linewidth}
\vspace{0.25cm}
  \centering
  \includegraphics[width=\linewidth]{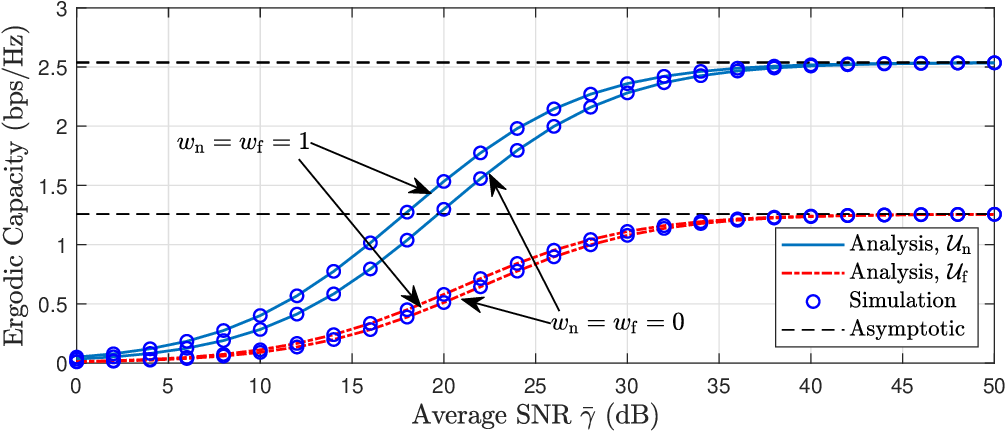}  
  \subcaption{EC performance}
\end{subfigure}
\caption{OP and EC versus average SNR: Comparison between the presence and absence of direct links.}
\label{Fig:OP_EC_SNR_Direct_NonDirect}
\end{figure}

Fig.~\ref{Fig:OP_EC_SNR_Direct_NonDirect} examines the impact of direct links between $\cal S$ and $\cal U_\ell$.
In addition to the cascaded STAR-RIS-assisted links, the presence of direct links 
($w_{\rm{n}} = w_{\rm{f}} = 1$) enables cooperative communication, which enhances both OP and EC performance. However, this improvement is smaller than the performance gain attributed to the operating mode observed in Fig.~\ref{Fig:OP_EC_Assumption1_Validation}.
It is important to note that direct links between $\cal S$ and $\cal U_\ell$ are not always available in practice due to blockages or severe shadowing, whereas STAR-RIS-assisted transmission inherently suffers from the double-fading effect.
Nevertheless, the results in Figures~\ref{Fig:OP_EC_Assumption1_Validation} and \ref{Fig:OP_EC_SNR_Direct_NonDirect} indicate that the QoS requirements of the UEs can still be satisfied when an active STAR-RIS is deployed, where the amplification gain and the increased number of STAR-RIS elements effectively compensate for the double-fading cascaded channel.
\textcolor{black}{Furthermore, regarding the asymptotic values in Fig.~\ref{Fig:OP_EC_SNR_Direct_NonDirect}, the two users behave differently: the OP of both keeps decaying with the non-zero diversity order $D_\ell=\xi_\ell/2$ (Proposition~\ref{proposition:diversity_order}), but the near user attains a larger $\xi_{\rm{n}}$ from its stronger channels and hence a steeper slope and lower OP; meanwhile, the EC of both saturates to the SNR-independent floor in \eqref{eq:asymptotic_EC}, whose user-dependent gap is set solely by the asymmetric PA/RS allocation rather than the transmit power.}






\begin{figure}[!t]
\centering
\begin{subfigure}{\linewidth}
  \centering
  \includegraphics[width=\linewidth]{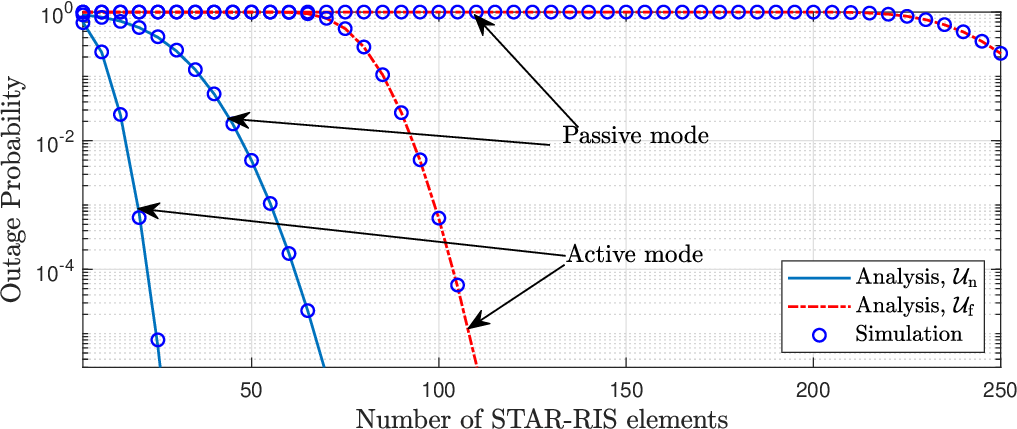}  
  \subcaption{OP performance}
\end{subfigure}
\begin{subfigure}{\linewidth}
\vspace{0.25cm}
  \centering
  \includegraphics[width=\linewidth]{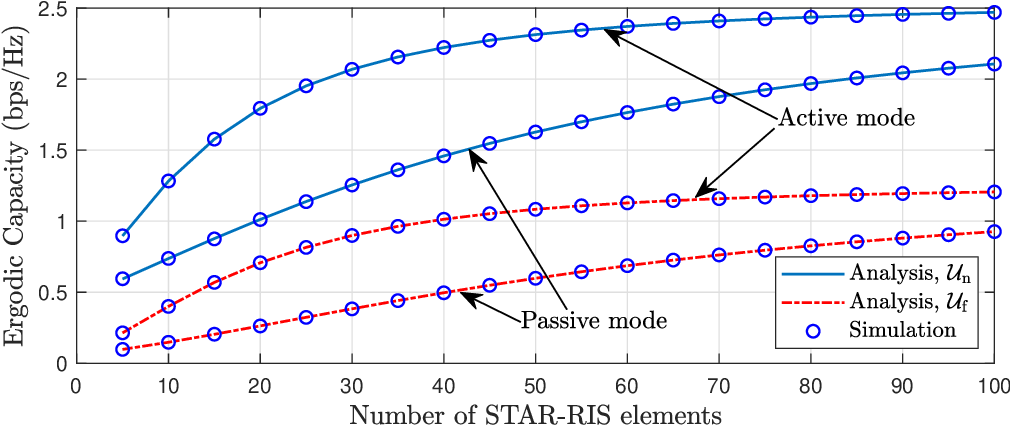}  
  \subcaption{EC performance}
\end{subfigure}
\caption{Impact of the number of STAR-RIS elements on the OP and EC, where $\bar \gamma = 25$ dB.}
\label{Fig:OP_EC_RIS_elements}
\end{figure}

\subsubsection{Impact of STAR-RIS Elements and Amplification Gain}

\begin{figure}[!t]
\centering
\begin{subfigure}{\linewidth}
  \centering
  \includegraphics[width=\linewidth]{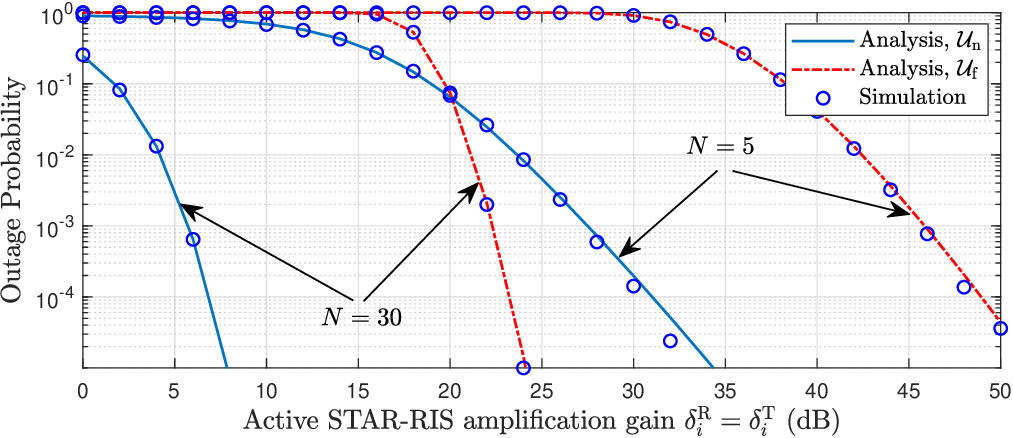}  
  \subcaption{OP performance}
\end{subfigure}
\begin{subfigure}{\linewidth}
\vspace{0.25cm}
  \centering
  \includegraphics[width=\linewidth]{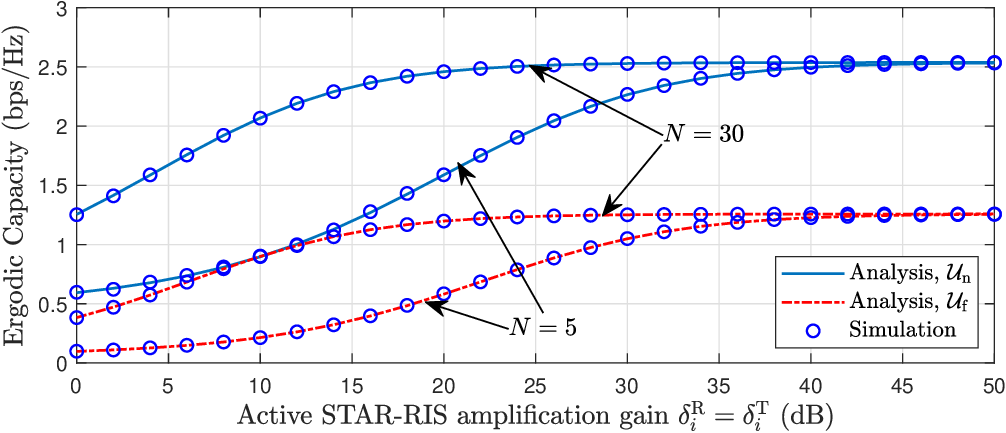}  
  \subcaption{EC performance}
\end{subfigure}
\caption{Impact of the active STAR-RIS amplification gain on the OP and EC, where $\bar \gamma = 25$ dB.}
\label{Fig:OP_EC_Amplification}
\end{figure}

Figures~\ref{Fig:OP_EC_RIS_elements} and \ref{Fig:OP_EC_Amplification} investigate the impact of the number of STAR-RIS elements and the STAR-RIS amplification gain, respectively, assuming $\bar \gamma = 25$ dB.
The results indicate that both OP and EC performance improve significantly as these parameters increase, which is consistent with the discussion in Remark~\ref{remark:OP}.
Moreover, the OP continues to decrease with larger parameter values, whereas the EC tends to saturate once the parameters reach sufficiently high levels.
In addition, the analytical curves closely match the simulation results across the entire parameter range, confirming the accuracy and robustness of the proposed information-theoretic analysis framework.

\subsubsection{On Performance of Energy Efficiencies}

\begin{figure}[!t]
\centering
\begin{subfigure}{0.49\linewidth}
  \centering
  \includegraphics[width=\linewidth]{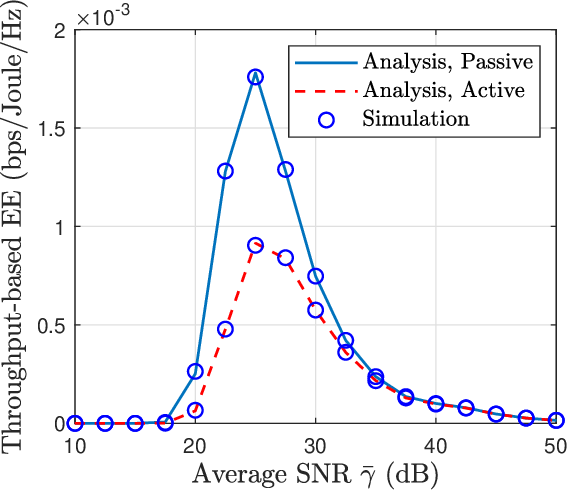}  
  \subcaption{Throughput-based EE}
\end{subfigure}
\begin{subfigure}{0.49\linewidth}
\vspace{0.25cm}
  \centering
  \includegraphics[width=\linewidth]{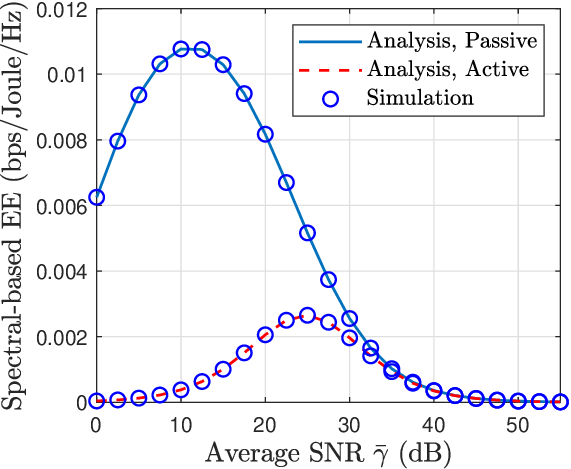}  
  \subcaption{Spectral-based EE}
\end{subfigure}
\caption{Performance tradeoff in terms of (a) throughput-based EE and (b) spectral-based EE.}
\label{Fig:EE}
\end{figure}

Fig.~\ref{Fig:EE} illustrates the performance tradeoff in terms of throughput-based and spectral-based EE, where the static circuit power $P_0$ is set to 1~W \cite{tu2024irs}.
It is observed that both the EE metrics initially increases with the average SNR $\bar \gamma$, since higher SNR improves the probability of successful decoding and the EC, thereby enhancing the effective throughput and spectral efficiency, respectively. However, as $\bar \gamma$ continues to rise, the static and STAR-RIS-related power consumption dominates the denominator of the EE expressions, while the throughput and spectral efficiency offer negligible improvement and saturation, respectively.
Consequently, the EE expressions eventually decreases, exhibiting a unimodal behavior.
Furthermore, although the active mode provides superior OP and EC performance, the passive STAR-RIS achieves higher peak throughput-based and spectral-based EE because it requires no additional amplification power. This makes the passive configuration more energy-efficient in low-to-moderate SNR regimes, highlighting a fundamental tradeoff between capacity enhancement and energy efficiency.


\color{black}

\begin{figure}[!t]
\centering
\begin{subfigure}{\linewidth}
  \centering
  \includegraphics[width=\linewidth]{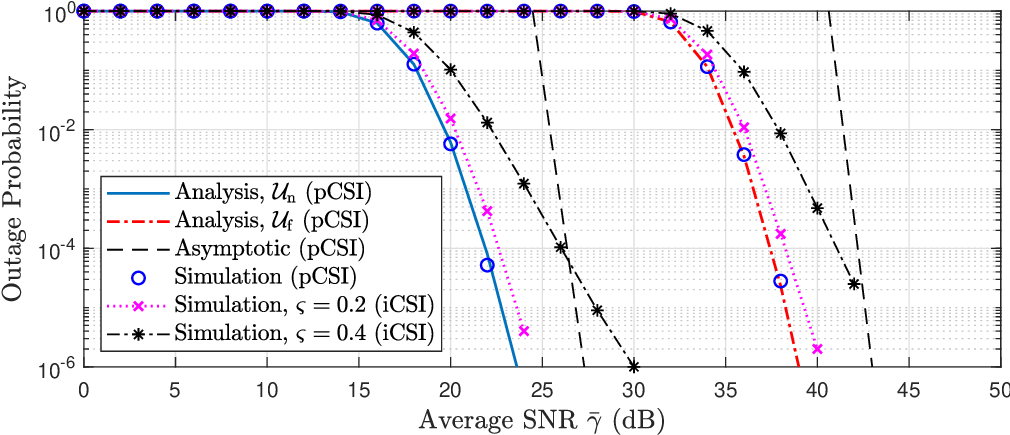}  
  \subcaption{\color{black}OP performance}
\end{subfigure}
\begin{subfigure}{\linewidth}
\vspace{0.25cm}
  \centering
  \includegraphics[width=\linewidth]{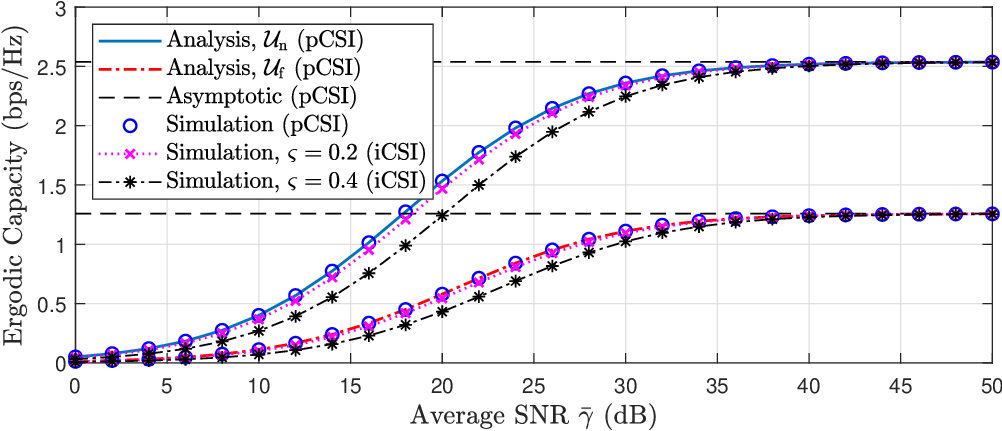}  
  \subcaption{\color{black}EC performance}
\end{subfigure}
\caption{\color{black}Impact of iCSI on OP and EC performance.}
\label{Fig:OP_EC_SNR_iCSI}
\end{figure}

\subsubsection{Impact of Imperfect CSI} \label{Sect:iCSI}
Although perfect CSI (pCSI) is assumed in the analytical derivations to establish a theoretical performance benchmark, practical STAR-RIS IoT systems can suffer from channel-estimation errors due to pilot contamination, limited feedback, and the double-fading cascaded structure. To capture this non-ideality, we numerically model the imperfect CSI (iCSI) as $\hat{x} = \sqrt{1-\varsigma^2}\,x + \varsigma\, e$, $\forall x\in\{h_i,g_i,t_i,d_\ell\}$, where $e\sim\mathcal{CN}(0,\lambda_x)$ is the zero-mean estimation error uncorrelated with $x$, and $\varsigma\in[0,1]$ denotes the CSI imperfection level. This model preserves the per-link average power and reduces to the perfect-CSI case when $\varsigma=0$.\footnote{\color{black}It is worth mentioning that incorporating the iCSI directly into the theoretical derivations would require re-deriving the OP/EC expressions and would substantially alter the entire analytical and resource allocation frameworks; it is therefore retained as a numerical study only.}  
Fig.~\ref{Fig:OP_EC_SNR_iCSI} simulates the OP and EC under $\varsigma\in\{0,0.2,0.4\}$. 
The results indicate that for moderate CSI errors typical of standard estimation methods ($\varsigma\le0.2$), the performance remains close to the pCSI case, demonstrating the robustness of the proposed system, whereas larger errors lead to a graceful and predictable degradation.
\color{black}

\begin{figure}[!t]
\centering
\includegraphics[width=\linewidth]{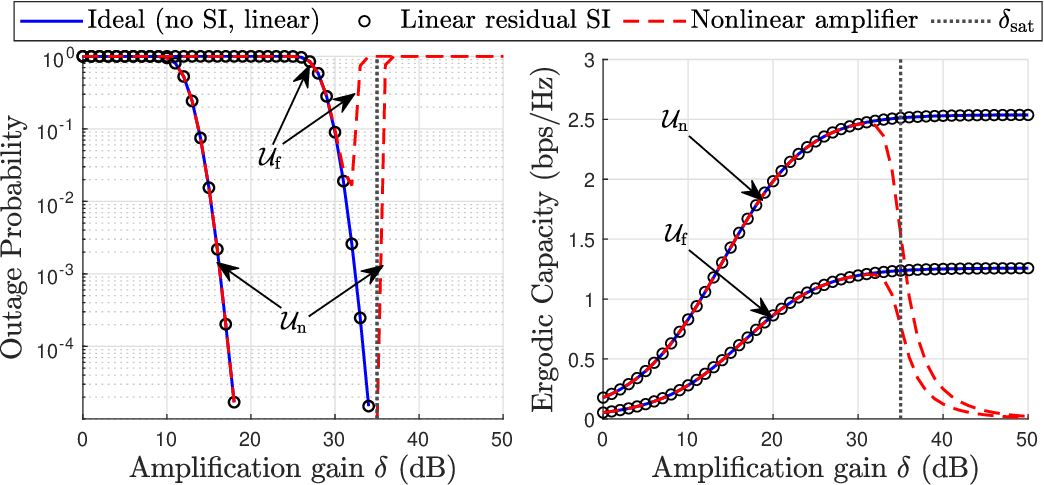}
\caption{\color{black}Impact of residual SI and nonlinear amplification on OP and EC, where $\varrho = N = 30$, $p=3$, and $\delta_{\mathrm{sat}}=35$ dB.}
\label{Fig:non_linear}
\end{figure}

\color{black}
\subsubsection{Impact of Residual SI and Nonlinear Amplification}\label{Sect:Residual_SI}
Assumption~\ref{Assumption:1} is established within the practical operating region $\delta_i^{\rm{R}} = \delta_i^{\rm{T}} = \delta\le\delta_{\mathrm{sat}}$. To examine the high-gain regime, we first make the residual SI explicit. From \eqref{eq:R_cl_SI} and \eqref{eq:R_l_SI}, the residual SI at $\mathcal{U}_\ell$ is $Y_{\mathrm n}=|\mathbf{g}^{T}\boldsymbol{\Phi}_{\mathrm R}\mathbf{n}_{\mathrm{SI}}|^2/\sigma^2$
and
$Y_{\mathrm f}=|\mathbf{t}^{T}\boldsymbol{\Phi}_{\mathrm T}\mathbf{n}_{\mathrm{SI}}|^2/\sigma^2$,
with $\mathbf{n}_{\mathrm{SI}}\sim\mathcal{CN}(\mathbf{0},\sigma_{\mathrm{SI}}^2\mathbf{I}_N)$
\cite{vu2025aerial,vu2024hybrid}.
Since the SI components combine incoherently, the average residual SIs are
$\mathbb{E}\{Y_{\mathrm n}\}=(\sigma_{\mathrm{SI}}^2/\sigma^2)\lambda_g\beta_{\Sigma2}^{\mathrm R}$
and
$\mathbb{E}\{Y_{\mathrm f}\}=(\sigma_{\mathrm{SI}}^2/\sigma^2)\lambda_t\beta_{\Sigma2}^{\mathrm T}$.
Thus, the absolute residual SI grows linearly with the amplification gain and the number of active elements.
Nonetheless, since the desired signal scales with the same gain
through the cascaded term, the residual-SI-to-signal ratio remains governed by $\sigma_{\mathrm{SI}}^2/P_S$ and stays negligible throughout the practical gain range, confirming the robustness of Assumption~\ref{Assumption:1}.
At very high gains ($\delta_i>\delta_{\mathrm{sat}}$), amplifier nonlinearity becomes the dominant impairment. We model each active element with a Rapp-type soft-limiter amplifier \cite{rapp1991effects}, whose amplitude compression yields an effective per-element gain
$\delta_{\mathrm{eff}}=\delta\big(1+(\delta/\delta_{\mathrm{sat}})^{2p}\big)^{-1/(2p)}$,
where $p$ is the smoothness factor.
The desired effective channel is accordingly evaluated with $\delta_{\mathrm{eff}}$, while the power lost to compression is radiated as incoherent distortion, modeled as $D_{\mathrm n}=\varrho(\delta-\delta_{\mathrm{eff}})\bar\gamma\sum_{i}\beta_i^{\mathrm R}|h_i|^2|g_i|^2$ and $D_{\mathrm f}=\varrho(\delta-\delta_{\mathrm{eff}})\bar\gamma\sum_{i}\beta_i^{\mathrm T}|h_i|^2|t_i|^2$ with a distortion-scaling constant $\varrho$. Both impairments are treated as additional interference, so the denominators in \eqref{eq:R_cl_SI} and \eqref{eq:R_l_SI} become $\kappa_\ell(1-\alpha_c)\bar\gamma+Y_\ell+D_\ell+1$ and $\kappa_\ell\alpha_{\ell'}\bar\gamma+Y_\ell+D_\ell+1$, respectively.
As illustrated in Fig.~\ref{Fig:non_linear}, for $\delta\le\delta_{\mathrm{sat}}$~dB, the OP and EC curves with residual SI and with the nonlinear amplifier coincide with the ideal performance behaviors of Fig.~\ref{Fig:OP_EC_Amplification}.
However, when $\delta_i>\delta_{\mathrm{sat}}$, the accumulated SI and distortion cause a noticeable degradation.
This result identifies the boundary between the operating regime where the idealized model of Fig.~\ref{Fig:OP_EC_Amplification} remains valid and the regime where residual SI and amplifier nonlinearity must be explicitly modeled.
A rigorous analytical treatment of residual SI and amplifier nonlinearity is left for future work.
\color{black}

\subsection{Resource Allocation Evaluation}

\subsubsection{Convergence Behavior of Algorithm~\ref{Algo:Overall_algorithm}}

\begin{figure}[!t]
\centering
\includegraphics[width=\linewidth]{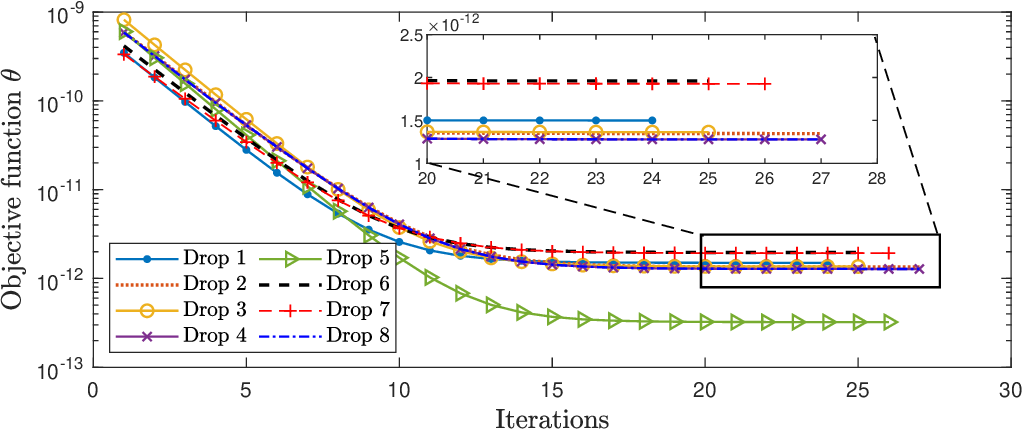}
\caption{Convergence behavior of the proposed Algorithm~\ref{Algo:Overall_algorithm} with eight arbitrary drops.}
\label{Fig:Convergence}
\end{figure}

Fig.~\ref{Fig:Convergence} depicts the convergence behavior of the proposed Algorithm~\ref{Algo:Overall_algorithm} across eight distinct drops.
Each curve shows the objective function $\theta$ as a function of the iteration index.
As discussed in Section~\ref{Sect:4.1.3}, the algorithm produces a non-increasing sequence of objective values that converges to a local optimal solution. 
The inset zooms into the convergence region, highlighting that all drops converge within fewer than 30 iterations. 
This confirms the reliability and consistency of the proposed algorithm across multiple independent trials.

\subsubsection{Optimal Hyperparameters and FLOPs Comparison}

\begin{table}[!t]
\renewcommand{\arraystretch}{1.35}
\centering
\caption{Optimal hyperparameters and FLOPs comparison among ML models.}
\scriptsize
\resizebox{\linewidth}{!}{
\begin{tabular}{|c|p{2.25cm}|c|p{2.25cm}|c|}
\hline
\multirow{2}{*}{\textbf{Model}}
& \multicolumn{2}{c|}{\textbf{OP fairness problem}}
& \multicolumn{2}{c|}{\textbf{EC fairness problem}}\\
\cline{2-5}
{} 
& \centering{\textbf{Optimal values}} 
& \textbf{FLOPs}
& \centering{\textbf{Optimal values}} 
& \textbf{FLOPs}\\
\hline
\hline
DMNN 
& $L_D^{} = 3$, ${M_l} = 150$ 
& 138000
& $L_D^{} = 4$, ${M_l} = 125$ 
& 127500\\
\hline
\multirow{2}{*}{CMNN}
& ${L_C} = 2$, ${{\tilde L}_D} = 3$, ${{\tilde M}_l} = 150$, ${\cal K} = 32$, $H = 3$ 
& \multirow{2}{*}{206180}
& ${L_C} = 3$, ${{\tilde L}_D} = 3$, ${{\tilde M}_l} = 125$, ${\cal K} = 32$, $H = 3$ 
& \multirow{2}{*}{237620}\\
\hline
MXGB 
& ${T} = 85 $, ${\cal D} = 50$ 
& 21675
& ${T} = 65 $, ${\cal D} = 60$ 
& 19825 \\

\hline

\hline
\end{tabular}
}
\label{Tab:HPO}
\end{table}

The optimal hyperparameters for all ML models are determined via HPO within predefined search spaces, shown in Table~\ref{Tab:Hyperparameters}.
%
The resulting optimal hyperparameters, along with the corresponding inference complexity, are summarized in Table~\ref{Tab:HPO}.
Since HPO is conducted over finite and practically motivated search spaces, the obtained configurations correspond to locally optimal solutions rather than guaranteed global optima. Nevertheless, the selected hyperparameter settings achieve a favorable trade-off between prediction accuracy and computational efficiency, rendering them well suited for real-time RA inference.
Furthermore, Table~\ref{Tab:HPO} reveals that the MXGB model attains the lowest inference complexity, whereas the DMNN and CMNN models incur moderately higher FLOPs due to their deeper network architectures.

\subsubsection{Outage Probability Fairness Performance}

\begin{figure}[!t]
\centering
\includegraphics[width=\linewidth]{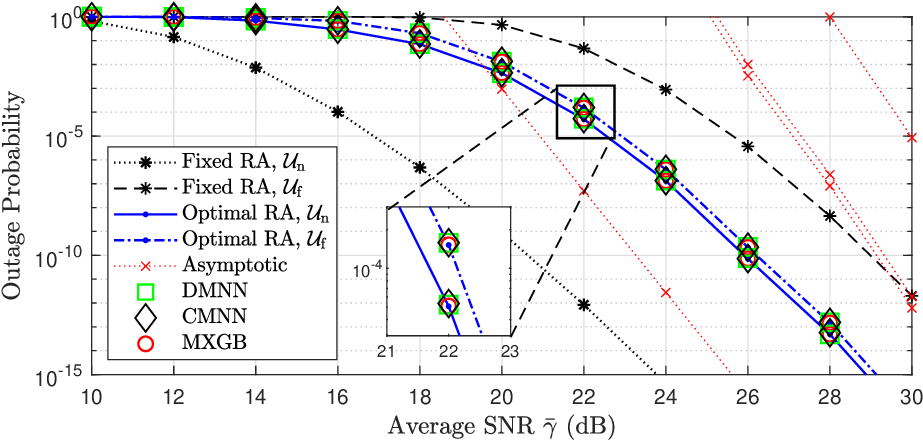}
\caption{OP fairness performance: Comparison between fixed and optimal RA configurations.}
\label{Fig:Optimal_OP}
\end{figure}

Fig.~\ref{Fig:Optimal_OP} compares the OP performance under fixed RA and the proposed optimal RA configuration. 
Under fixed RA, a noticeable imbalance exists between the near user ${\cal U}_{\rm{n}}$ and the far user ${\cal U}_{\rm{f}}$,
where ${\cal U}_{\rm{f}}$ consistently experiences significantly higher OP due to its weaker effective channel. This imbalance becomes more pronounced in the moderate-SNR regime, highlighting the limitations of static PA and RS coefficients.
In contrast, the proposed optimal RA mechanism effectively reduces the OP gap between users by jointly optimizing the PA and RS coefficients based on the fairness-oriented formulation. Both users exhibit substantially improved OP performance, and the curves for ${\cal U}_{\rm{n}}$ and ${\cal U}_{\rm{f}}$ move much closer together.

\subsubsection{Ergodic Capacity Fairness Performance}

\begin{figure}[!t]
\centering
\includegraphics[width=\linewidth]{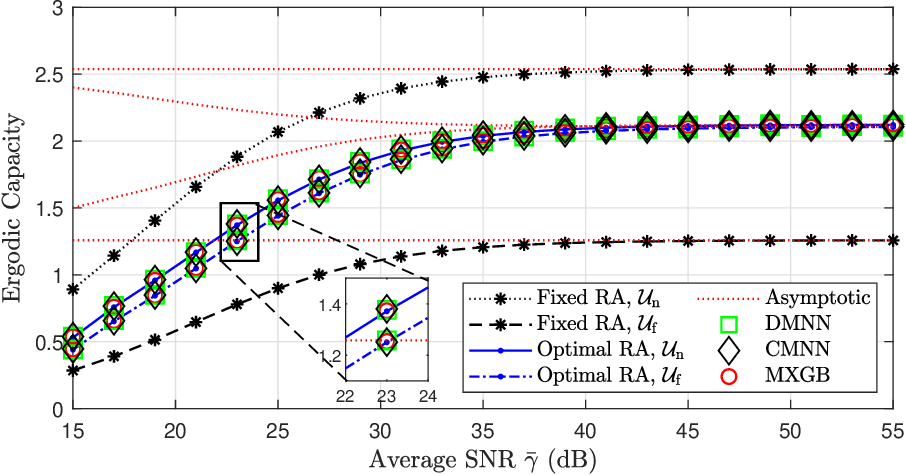}
\caption{EC fairness performance: Comparison between fixed and optimal RA configurations.}
\label{Fig:Optimal_EC}
\end{figure}

Fig.~\ref{Fig:Optimal_EC} illustrates the EC performance for fixed and optimal RA configurations.
With fixed RA, the EC of the near user ${\cal U}_{\rm{n}}$ is consistently higher than that of the far user ${\cal U}_{\rm{f}}$, reflecting the inherent channel disparity in STAR-RIS-assisted RSMA IoT systems.
This difference persists even as the SNR increases, indicating that fixed PA and RS coefficients cannot adequately balance the long-term capacity between users.
The optimal RA configuration significantly alleviates this imbalance. By dynamically adjusting the PA and RS factors, 
the proposed fairness-oriented RA algorithm allows ${\cal U}_{\rm{f}}$ to achieve a higher EC while slightly reducing the EC of ${\cal U}_{\rm{n}}$.
This leads to a more equitable capacity distribution across users without compromising the system’s overall rate.


\subsubsection{ML-based RA Performance and Average Execution Time Comparison}

\begin{table}[!t]
\scriptsize
\renewcommand{\arraystretch}{1.2}
\centering
\caption{Average execution time comparison between the model-based and data-driven RA frameworks for various $N$ configurations.}
\resizebox{1.0\linewidth}{!}{
\begin{tabular}{|c|c|c|c|c|c|c|c|c|}
\hline
\multirow{3}{*}{$N$} 
& \multicolumn{4}{c|}{\textbf{OP fairness problem}} & \multicolumn{4}{c|}{\textbf{EC fairness problem}}  \\ \cline{2-9}

\multirow{2}{*}{} & 
\multirow{2}{*}{\makecell{Alg.~1\\(s)}} & 
\multirow{2}{*}{\makecell{DMNN\\(ms)}} &
\multirow{2}{*}{\makecell{CMNN\\(ms)}} &
\multirow{2}{*}{\makecell{MXGB\\(ms)}} &
\multirow{2}{*}{\makecell{CVX\\(s)}} &
\multirow{2}{*}{\makecell{DMNN\\(ms)}} & 
\multirow{2}{*}{\makecell{CMNN\\(ms)}} &
\multirow{2}{*}{\makecell{MXGB\\(ms)}} \\
{} &
{} &
{} &
{} &
{} &
{} &
{} &
{} &
{}  \\
\hline
\hline
%
$5$ &
5.26 & 
53.84 & 71.27 & 0.21 &
0.65 &
52.48 & 69.51 & 0.27 \\
$30$ &
8.21 &
53.47 &72.61 &0.46 &
0.94 &
50.29 &64.86 &0.14 \\
$100$ &
18.76 &
49.10 &63.01 &0.45 &
2.44 &
55.35 &65.87 &0.22
 \\
$150$ &
28.23 &
52.13 &68.83 &0.17 &
3.65 &
49.89 &65.66 &0.39
 \\
$250$ &
58.80 &
55.75 &70.30 &0.57 &
6.15 &
50.18 &71.83 &0.38
 \\

\hline

\hline
\end{tabular}
}
\label{table:exe_time}
\end{table}

Figs. \ref{Fig:Optimal_OP} and \ref{Fig:Optimal_EC} show that all ML-based frameworks achieve nearly identical fairness performance, as the variance between the predicted RA configurations and the ground-truth values is relatively small.\footnote{Notably, this behavior is expected because the considered learning problem is based on structured tabular data, for which the considered ML models are known to be particularly efficient and stable \cite{shwartz2022tabular}.} This low prediction variance of the RA configurations leads to weak sensitivity in the resulting fairness metrics.
It is worth noting that direct OP prediction is clearly inefficient due to its extremely small order of magnitude.
Moreover, when EC is predicted directly, its variance with respect to the ground-truth values can be significantly larger than that obtained by first predicting the RA configurations and subsequently computing EC.
Therefore, designing fairness-aware RA predictions in terms of OP and EC constitutes a more conservative and reliable approach.

Table~\ref{table:exe_time} reports the average execution time of the model-based approach and the data-driven frameworks for both the OP fairness and EC fairness problems under different $N$ configurations.
As $N$ increases, the execution time of the conventional optimization algorithm grows significantly, indicating poor scalability.
This trend reflects the inherent complexity of solving the model-based optimization problems, making such approaches increasingly impractical for large-scale or time-sensitive scenarios.
In contrast, the proposed ML-based methods exhibit substantially lower execution times, remaining in the millisecond range across all configurations. 
Notably, once trained, these models demonstrate weak sensitivity to the number of STAR-RIS elements, indicating strong computational robustness.
Among them, MXGB consistently achieves the lowest inference time for both fairness problems, on the order of less than 1 ms.

The results in Fig.~\ref{Fig:Optimal_OP}, Fig.~\ref{Fig:Optimal_EC}, and Table~\ref{table:exe_time} suggest that the MXGB-based RA framework is the most efficient scheme for real-time fairness-aware RA strategies.


\color{black}
\subsubsection{Generalization and Out-of-Distribution Resilience}\label{Sect:ML_Generalization}
The proposed learning models are trained on datasets generated over predefined parameter ranges that reflect practical deployment scenarios. A natural concern is how well these models generalize when the operating conditions at inference time differ from those seen during training. Following~\cite{sohrabi2021deep, tu2025hybrid}, the generalizability of the proposed models can be categorized into two groups.

The first category comprises parameters that affect only the input distribution while leaving the input and output dimensions of the models unchanged. In our framework, these correspond to the entries of the input feature vector ${\bf{x}}_s = [\bar\gamma, N, w_{\rm{n}},w_{\rm{f}}, \delta_i^{\rm{R}},\delta_i^{\rm{T}}, \beta_i^{\rm{R}},R_{\rm{c}}^{\rm{th}},R_{\rm{n}}^{\rm{th}},R_{\rm{f}}^{\rm{th}} ]^T$. 
Since the effective channel-gain distribution $\kappa_\ell\sim\mathcal{GG}(\xi_\ell,1/2,\Omega_\ell^2)$ is governed by $\xi_\ell$ and $\Omega_\ell$, which are in turn functions of $N$, $\delta_i$, $\beta_i$, and $w_\ell$, sweeping these parameters over wide ranges during training exposes the models to a broad family of effective channel-gain statistics and thereby improves their robustness to input-distribution shifts. To illustrate this resilience, we evaluate the trained models at representative unseen operating points $\bar\gamma\in\{51,53,55\}$~dB, which lie outside the training range of $[0,50]$~dB. As observed in Fig.~\ref{Fig:Optimal_EC}, the DMNN-, CMNN-, and MXGB-based predictions still closely track the ground-truth optimal RA configurations at these extrapolated SNR values, confirming the generalizability of the proposed schemes to unseen operating conditions. This behavior is consistent with practical learning-based deployments, where inference is typically performed on previously unseen data rather than on samples identical to the training set.

The second category comprises parameters that alter not only the input distribution but also the input and output dimensions of the models and model's layers. The most representative example is the number of served users, since increasing this number enlarges the RA output vector of per-user PA and RS coefficients and requires per-user or per-group input features, which necessitates modifications to the network architecture and specialized training strategies. A related case is a change in deployment geometry that shifts the large-scale channel statistics $\lambda_x$; these are held fixed in the present feature set and would need to be incorporated as additional input features to generalize across arbitrary node distributions. Adapting the models to such dimension- and geometry-varying scenarios lies beyond the scope of this work and is left as a future research direction, in line with the multi-user scalability discussion in Section~\ref{Sect:Conclusion}. Under severe distribution shifts, lightweight transfer learning or fine-tuning of the pretrained models can further serve as a practical remedy.
\color{black}




\section{\color{black}Concluding Remarks and Open Challenges}
\label{Sect:Conclusion}

\subsection{\color{black}Concluding Remarks}
This work \textcolor{black}{presented} a comprehensive study of active STAR-RIS-assisted RSMA IoT systems, addressing the challenges posed by double-fading cascaded channels and the inherent imbalance between near and far users.
An integrated information-theoretic framework was developed to characterize system performance in terms of OP, EC, asymptotic OP, asymptotic EC, diversity order, and array gain, under both the presence and absence of direct source-destination links.  
Moreover, both throughput-based and spectral-based EE metrics were analyzed, revealing the tradeoffs between transmission performance and energy consumption.

Building on this analysis, a fairness-oriented RA framework was proposed to jointly optimize PA and RS coefficients.
By leveraging SCA- and BCD-based convex reformulations, efficient algorithms were developed to ensure fairness in both outage and capacity metrics while satisfying system constraints.
Monte Carlo simulations consistently validated the analytical findings and demonstrated the effectiveness of active STAR-RIS architectures in mitigating the double-fading effect and improving RSMA performance across a broad range of operating conditions. Numerical results further highlighted the significant gains in fairness-enabled performance achieved by the proposed optimization framework.

However, the model-based RA approaches suffer from high execution time and limited scalability with respect to the number of STAR-RIS elements.
To address this limitation, a data-driven RA framework was introduced using DMNN, CMNN, and MXGB models. The results showed that these learning-based schemes can reliably approximate the ground truth-based solutions with substantially reduced computational complexity. Among them, the proposed MXGB model consistently achieved the shortest execution time while maintaining performance comparable to optimization-based benchmarks, rendering it well suited for real-time fairness-aware RA strategies.

\color{black}
\subsection{Open Challenges and Future Research Directions}\label{Sect:FutureWork}
This work suggests several promising directions for future research.
First, on the multi-user side, the grouping/pairing extension of Remarks~\ref{Remark:MultiUser} and \ref{Remark:CapacityFairness_K} handles massive connectivity with an ${\cal O}(K\log K)$ SIC-ordering heuristic, whereas a full single-cell $K$-user treatment, requiring re-derived multi-user expressions with joint clustering and SIC-ordering optimization, remains open.
Second, on the learning side, generalizing the models across dimension- and geometry-varying settings (e.g., a varying number of users or shifting large-scale statistics) requires feature-set augmentation or architectural changes (Section~\ref{Sect:ML_Generalization}), with transfer learning or fine-tuning as a remedy under severe distribution shifts; along the same line, GNN-based RA is a promising route to topology-varying massive multi-user deployments.
Third, on the analytical side, a rigorous treatment of residual-SI accumulation and nonlinear amplification under high gains ($\delta>\delta_{\rm sat}$, complementing Section~\ref{Sect:Residual_SI}) and an iCSI-robust analytical and RA framework that re-derives the OP/EC expressions with channel-estimation error (beyond the numerical study in Section~\ref{Sect:iCSI}) are both valuable directions.
Fourth, on the deployment side, since the reported millisecond figures reflect algorithmic latency in simulation, the over-the-air, software-defined-radio, or hardware-emulation validation together with a realistic characterization of the STAR-RIS control and reconfiguration latency is essential to confirm end-to-end real-time operation.
Finally, an EE-aware scheme that jointly optimizes the PA, RS, and STAR-RIS amplification/ES coefficients under a Dinkelbach-based fractional program, thereby maximizing throughput- or spectral-based EE under the QoS and fairness constraints, is a promising extension of the present fairness-oriented design.
\color{black}

\balance
\bibliographystyle{IEEEtran}
\bibliography{reference}

\end{document}